\newcommand{\wto}{\rightharpoonup}
\newcommand{\eps}{\varepsilon}
\newcommand{\epsj}{{\varepsilon_j}}
\newcommand{\mc}{\mathcal}
\newcommand{\xeps}{{x^\eps}}
\newcommand{\xepsd}{{\dot{x}^\eps}}
\newcommand{\xepsdd}{{\ddot{x}^\eps}}
\newcommand{\xepsj}{{x^\epsj}}
\newcommand{\xepsjd}{{\dot{x}^\epsj}}
\newcommand{\xepsjdd}{{\ddot{x}^\epsj}}
\newcommand{\NN}{\mathcal{N}}
\newcommand{\EE}{\mathcal{E}}
\newcommand{\RR}{\mathcal{R}}
\newcommand{\CC}{\mathcal{C}}
\newcommand{\KK}{\mathcal{K}}
\newcommand{\UU}{\mathcal{U}}
\newcommand{\BB}{\mathcal{B}}
\newcommand{\A}{\mathbb{A}}
\newcommand{\M}{\mathbb{M}}
\newcommand{\V}{\mathbb{V}}
\newcommand{\R}{\mathbb{R}}
\newcommand{\N}{\mathbb{N}}
\newcommand{\Q}{\mathbb{Q}}
\newcommand{\piz}{\pi_Z}
\renewcommand{\to}{\rightarrow}
\renewcommand{\d}{\,\mathrm{d}}
\newcommand{\abs}[1]{\left\lvert #1 \right\rvert}      
\newcommand{\scal}[2]{\left\langle #1,#2\right\rangle}
\DeclareMathOperator{\diag}{Diag}
\DeclareMathOperator{\Id}{I}
\numberwithin{equation}{section}
\newtheorem{thm}{Theorem}[section]
\newtheorem{defi}[thm]{Definition}
\newtheorem{prop}[thm]{Proposition}
\newtheorem{lemma}[thm]{Lemma}
\newtheorem{cor}[thm]{Corollary}
\theoremstyle{definition}
\newtheorem{rmk}[thm]{Remark}
\begin{document}
	
	\author{Paolo Gidoni and Filippo Riva}
	
	\title[Vanishing inertia analysis for finite dimensional rate-independent systems]{A vanishing inertia analysis for finite dimensional rate-independent systems with nonautonomous dissipation, and an application to soft crawlers} 
	
	\begin{abstract}
	We study the approximation of quasistatic evolutions, formulated as abstract finite-dimensional rate-independent systems, via a vanishing-inertia asymptotic analysis of dynamic evolutions. We prove the uniform convergence of dynamical solutions to the quasistatic one, employing the concept of energetic solution. Motivated by applications in soft locomotion, we allow time-dependence of the dissipation potential, and translation invariance of the potential energy. 
	\end{abstract}

	\maketitle
		
	{\small
		\keywords{\noindent {\bf Keywords:} Quasistatic limit, Vanishing inertia, Rate-independent systems, Energetic solutions, 
		Soft crawlers. 
			
		}
		\par
		\subjclass{\noindent {\bf 2020 MSC:}
		    74C05,  
		    70F40, 
		    70G75, 
		    49S05, 
		    49J40. 
		    
			}
		}

	\pagenumbering{arabic}
	
\medskip

\tableofcontents
\addtocontents{toc}{\setcounter{tocdepth}{1}}

\section{Introduction and motivation} \label{sec:intro}

The approximation and the selection of quasistatic evolutions, via the asymptotic analysis of richer and more natural viscous or dynamic problems, has been intensively and increasingly investigated from a rigorous mathematical perspective in the last two decades. If on one hand the vanishing viscosity approach, concerning the limit behaviour of a first order singular perturbation of the quasistatic model, has been widely studied and discussed (see for instance \cite{AgoRos,EfMielk} in finite dimension,  \cite{MielkRosSav09,MielkRosSav12,MielkRosSav16} for  abstract analyses in infinite dimension and the concept of balanced-viscosity solutions, and the recent comparison in \cite{KreMon} of different approaches), on the other hand the second order analysis dealing with inertial systems still offers open questions and hard challenges.\par 

In this latter direction we may identify two main lines of investigation.
The first family of results is inspired by physical models where the quasistatic evolution is defined by a driving potential and a rate-independent dissipation. We mention for instance: an approximation of perfect elastoplasticity by suitable dynamic viscoelasto-plastic problems \cite{DMSca}; a vanishing inertia limit in models of dynamic debonding \cite{LazNar,Rivquas}; a vanishing inertia and viscosity limits for a delamination model \cite{Sca}, or for damage in a thermo-viscoelastic material \cite{LRTT}; a realisation of fully rate-independent system for viscoelastic solids \cite{Roubvaninertia} or systems with hardening \cite{MielkPetr, MarMon} as inertia vanishes.

A second approach deals with vanishing inertia (and viscosity) approximation of quasistatic evolutions driven by a potential energy alone, aiming at a deeper comprehension  in an abstract but finite-dimensional setting. Starting from \cite{Ago, Nar15} and culminating with \cite{ScilSol}, a detailed description of the limiting evolution coming from a vanishing inertia and viscosity procedure has been given.\par

In this paper we contribute to the topic in a third, intermediate direction, introducing new features with respect to both approaches. More precisely, we derive abstract rate-independent systems of the form
\begin{equation}\label{eq:intro_stat}
\partial_v \mc R(t,\dot{x}(t))+D_x\mc E(t,x(t))\ni 0,
\end{equation}
as the limit, for $\eps\to 0^+$, of the dynamic problem
\begin{equation}\label{eq:intro_dyn}
\eps^2 \mathbb{M}\xepsdd(t)+\eps \mathbb{V}\xepsd(t)+\partial_v \mc R(t,\xepsd(t))+D_x\mc E(t,\xeps(t))\ni 0.
\end{equation}
Here $\mc E$ is a driving potential energy, $\RR$ a time-dependent dissipation potential (one homogeneous in space to ensure rate-independence of \eqref{eq:intro_stat}), while $\mathbb{M}$ is a symmetric positive-definite operator representing masses, and $\mathbb{V}$ a positive-semidefinite (hence, possibly $\mathbb{V}=0$) operator describing the possible presence of viscosity in the model. Systems of the form \eqref{eq:intro_stat} are usually referred as \emph{rate-independent systems} \cite{MielkRoubbook}.

Our framework is motivated by an emergent application in soft locomotion \cite{Gid18}, which we discuss later in this section. There are however several elements of novelty with respect to the other applications cited above. Firstly, usually in such models  rate-independent dissipation and inertia act on two disjoint variables: more precisely, the mass operator $\M$ is null on a subspace, and the rate-independent dissipation depends only on the kernel of $\M$ (this is very clear for instance in \cite{MielkPetr}). The opposite occurs instead in our dynamic problem \eqref{eq:intro_dyn}, since the matrix $\M$ is nondegenerate and we will assume a positive dissipation for each change in the state, namely $\RR(t,v)> 0$ for every $v\neq0$.
Also, up to our knowledge, our paper is the first to study a nonautonomous dissipation in a vanishing inertia limit, while in all the references above time-dependence is assumed only in the potential energy. Indeed, even in the quasistatic setting, the case of a nonautonomous functional $\RR$ has been considered only very recently within the theory of rate-independent systems \cite{HeiMielk} (see also \cite{Gid18, Mielk18} for applications), even if it was well discussed in the special framework of sweeping processes \cite{KunMon}.

Our finite dimensional setting with invertible $\M$ may therefore seem closer to the approach of \cite{Ago,Nar15,ScilSol}. We notice however that adding a rate-independent potential to the quasistatic evolution highly affects the structure of the problem. The most evident consequence is that we may neglect viscosity from our analysis, while it is crucial in \cite{ScilSol}, where $\mathbb{V}$ has to be positive definite. Indeed, the key point seems to be that at least one kind of dissipation must be included in the model, otherwise kinetic effects persist in the limit precluding the resulting evolution to be rate-independent. Compare for instance \cite{LazNar} with \cite{Rivquas}, or \cite{Nar15} with \cite{ScilSol}, where the addition of viscous terms makes dynamic solutions converge to the quasistatic ones, in contrast with the undamped case where counterexamples are shown.

Notice also that, with respect to \cite{ScilSol}, we are able to weaken the regularity assumptions on the energy $\EE$, even if we require convexity in order to complete our argument (it will however not be needed in the first part of the investigation).

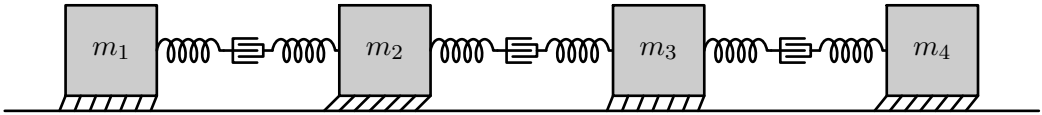
\begin{figure}
    \centering
    	\centering
	\begin{tikzpicture}[line cap=round,line join=round,x=4mm,y=4mm, line width=1pt]
	\clip(-3,-2) rectangle (33,7);
	\draw [line width=1pt, fill=gray!40] (0,0.5)-- (3,0.5)--(3,3.5)-- (0,3.5)-- (0,0.5);
	\draw[decoration={aspect=0.5, segment length=1.5mm, amplitude=1.5mm,coil},decorate] (3,2)-- (5.2,2.);
	\draw[decoration={aspect=0.5, segment length=1.5mm, amplitude=1.5mm,coil},decorate] (6.8,2)-- (9.,2.);
	\draw (5.2,2)--(6.3,2);
	\draw (6.5,2)--(6.8,2);
	\draw (6.3,1.6)-- (5.5,1.6)--(5.5,2.4)--(6.3,2.4);
	\draw (5.7,1.8)-- (6.5,1.8)--(6.5,2.2)--(5.7,2.2);
	\draw [line width=1pt, fill=gray!40] (9,3.5)-- (9,0.5)-- (12,0.5)-- (12,3.5)-- (9,3.5);
		\draw[decoration={aspect=0.5, segment length=1.5mm, amplitude=1.5mm,coil},decorate] (12,2)-- (14.2,2.);
	\draw[decoration={aspect=0.5, segment length=1.5mm, amplitude=1.5mm,coil},decorate] (15.8,2)-- (18.,2.);
	\draw (14.2,2)--(15.3,2);
	\draw (15.5,2)--(15.8,2);
	\draw (15.3,1.6)-- (14.5,1.6)--(14.5,2.4)--(15.3,2.4);
	\draw (14.7,1.8)-- (15.5,1.8)--(15.5,2.2)--(14.7,2.2);
	\draw [line width=1pt, fill=gray!40] (18,3.5)-- (18,0.5)-- (21,0.5)-- (21,3.5)-- (18,3.5);
	\draw[decoration={aspect=0.5, segment length=1.5mm, amplitude=1.5mm,coil},decorate] (21,2)-- (23.2,2.);
	\draw[decoration={aspect=0.5, segment length=1.5mm, amplitude=1.5mm,coil},decorate] (24.8,2)-- (27.,2.);
	\draw (23.2,2)--(24.3,2);
	\draw (24.5,2)--(24.8,2);
	\draw (24.3,1.6)-- (23.5,1.6)--(23.5,2.4)--(24.3,2.4);
	\draw (23.7,1.8)-- (24.5,1.8)--(24.5,2.2)--(23.7,2.2);
	\draw [line width=1pt, fill=gray!40] (27,3.5)-- (27,0.5)-- (30,0.5)-- (30,3.5)-- (27,3.5);
	\draw [] (-2,0)-- (32,0);	
	\draw [] (0,0.5)-- (-0.2,0);
	\draw [] (0.5,0.5)-- (0.3,0);
	\draw [] (1,0.5)-- (0.8,0);
	\draw [] (1.5,0.5)-- (1.3,0);
	\draw [] (2,0.5)-- (1.8,0);
	\draw [] (2.5,0.5)-- (2.3,0);
	\draw [] (3,0.5)-- (2.8,0);
	\draw [] (9,0.5)-- (8.5,0);
	\draw [] (9.5,0.5)-- (9.,0);
	\draw [] (10,0.5)-- (9.5,0);
	\draw [] (10.5,0.5)-- (10.,0);
	\draw [] (11,0.5)-- (10.5,0);
	\draw [] (11.5,0.5)-- (11.,0);
	\draw [] (12,0.5)-- (11.5,0);
	\draw [] (18,0.5)-- (17.8,0);
	\draw [] (18.5,0.5)-- (18.3,0);
	\draw [] (19,0.5)-- (18.8,0);
	\draw [] (19.5,0.5)-- (19.3,0);
	\draw [] (20,0.5)-- (19.8,0);
	\draw [] (20.5,0.5)-- (20.3,0);
	\draw [] (21,0.5)-- (20.8,0);
	\draw [] (27,0.5)-- (26.6,0);
	\draw [] (27.5,0.5)-- (27.1,0);
	\draw [] (28,0.5)-- (27.6,0);
	\draw [] (28.5,0.5)-- (28.1,0);
	\draw [] (29,0.5)-- (28.6,0);
	\draw [] (29.5,0.5)-- (29.1,0);
	\draw [] (30,0.5)-- (29.6,0);
	\draw (1.5,2) node[] {$m_1$};
	\draw (10.5,2) node[] {$m_2$};
	\draw (19.5,2) node[] {$m_3$};
	\draw (28.5,2) node[] {$m_4$};
	\end{tikzpicture}
    \caption{A model of soft crawler, discussed in Subsection \ref{subsec:crawler}.}
    \label{fig:crawler4}
\end{figure}

\subsubsection*{The motivating model}
Our work is motivated by an application to a discrete model of soft crawler \cite{Gid18,ColGid}. Crawling encompasses the motility strategies employed by several animals, such as earthworms and leeches, and by biomimetic robots. Usually, a crawler can be effectively modelled as a chain of material points on a line, each subject to dry friction. The case $N=4$ is portrayed in Figure \ref{fig:crawler4}. The attribute \emph{soft} is due to the fact that each couple of adjacent masses is joined by an elastic, actuated link. By a mathematical point of view, this means that the actual shape of the locomotor is not directly prescribed, but undergoes to hysteresis. Soft actuation is widespread in Nature, where soft bodies and soft body parts, compliant joints and soft shells are the norm. This is even more evident for worm-like locomotion: for instance earthworms and leeches are entirely soft-bodied, while no lever action on the skeleton is employed by snakes during rectilinear locomotion. The properties of compliance and adaptability to a continuously variable and unstructured environment, observed in (soft) animal locomotion, have caught the attention of engineers, leading, in the last two decades, to the design of bio-inspired robotic locomotors -- crawling, but also swimming, running, etc. --  and the development of the novel field of \emph{soft robotics} \cite{Cal,Kim}. In robotic crawlers, soft actuation may be implemented in several ways. One common approach is to couple an elastic structure with a motor-tendon actuator \cite{Vikas}, or a coil made of shape-memory alloy \cite{Seok}, or a pneumatic actuator \cite{Raf18}. Alternatively, also links made of nematic elastomers have been successfully employed to provide both elasticity and actuation \cite{Jung}. 

In addition to the soft actuation on the links, a second active control is sometimes available to crawlers: the ability to change the friction coefficients in time.
The most remarkable example is inching, i.e.~the locomotion strategy of leeches and inchworms, which has been also reproduced in soft robotic devices \cite{Vikas,Gamus}. In inching locomotion the crawler can be modelled as a single link, periodically elongating and contracting, with the two extremities alternately increasing the friction coefficient  (anchoring): during elongation the backward extremity has more grip, so it remains steady while the forward extremity advances, and vice versa during contraction.
Other examples of active control of the friction coefficients can be observed  in crawlers using anisotropic friction: changing the tilt angle of bristles -- such as \emph{setae} and \emph{chaetae} in anellids \cite{AccCasDar04,Quillin} -- or scales -- such as in snakes \cite{HNSS09} -- and analogous mechanisms in robotic replicas \cite{Manw,Mar12,Raf18} produces a change in the friction coefficients \cite{GidDeS17}, that is used to facilitate sliding or gripping.

In the modelling of crawling locomotion, it is quite customary to work in the quasistatic setting, which is physically justified by their slow motion and observed stick-slip behaviour. Indeed, as discussed in \cite{WagLau}, crawling strategies that rely on inertial effects are possible, but would require an inefficient continuous sliding. We therefore propose to corroborate this choice, providing a mathematically rigorous derivation of the quasistatic limit for such models. 

To explain the occurence of system \eqref{eq:intro_dyn}, let us take a reference input $\tau\mapsto \left(\EE(\tau,\cdot),\RR(\tau,\cdot)\right)$, for $\tau\in[0,T]$, and suppose that it can be applied at an arbitrarily slow rate $\eps>0$, so that the characteristic time of the systems is proportional to $1/\eps$.
The evolution of the system is described by the following differential inclusion
\begin{equation}\label{eq:timeresc}
\mathbb{M}\ddot{x}_\eps(\tau)+\mathbb{V}\dot{x}_\eps(\tau)+\partial_{v} \mc R(\eps \tau,\dot{x}_\eps(\tau))+D_x\mc E(\eps \tau,x_\eps(\tau))\ni 0,
\end{equation}
on a time interval $\tau\in[0,T/\eps]$.
In the specific example of the locomotion model of Figure \ref{fig:crawler4}, the components $(x_\eps)_i$ of the solution will represent the position of the $i$-th block. The term $\M \ddot{x_\eps}$ describes the inertial forces, hence  $\M:=\diag\{m_1,\dots,m_N\}$ is the mass distribution. In this case the matrix $\mathbb V$ (possibly $\mathbb V=0$) could describe for instance viscous resistances to length changes in the links, or the linear component of a Bingham type friction on the blocks, caused by lubrication with a non-Newtonian fluid \cite{DeSGNT}. The term $\RR$ will in general have the form
\begin{equation*}
\RR(t,v)=\chi_K(v)+\RR_\mathrm{finite}(t,v),
\end{equation*}
where $\chi_K$ is the characteristic function of a closed convex cone $K$ and $\RR_\mathrm{finite}$ is positively homogeneous of degree one (in space) with values in $[0,+\infty)$. 
The dissipation potential $\RR_\mathrm{finite}$ accounts for dry friction forces, which may change in time. The term $\chi_K$ represents a constraint on velocities and may be used to describe situations in which hooks  or hard scales \cite{Men06} are used to create an extreme anisotropy in the interaction with the surface, so that motion ``against the hair'' may be considered impossible. The mathematical difference between a velocity constraint and a large but finite dry friction becomes extremely relevant for planar models, cf.~Subsection~\ref{subsec:plan}. 

The term $\EE$ describes the elastic energy of the system. We emphasize that, since we are dealing with a locomotion problem, rigid translations must be included in the space of admissible configurations. This implies that the elastic energy $\EE$ takes the form
\begin{equation*}
\mc E(t,x)=\mc E_\mathrm{sh}(t,\piz(x)),
\end{equation*}
where $\mc E_\mathrm{sh}$ is defined on a smaller subspace $Z\subseteq X$, on which it assumes the usual properties of coercivity/uniform convexity. The linear operator $\piz\colon X\to Z$ assigns to each configuration $x\in X$ the corresponding shape of the locomotor; in the example of Figure \ref{fig:crawler4} with $N=4$ a natural choice could be $\piz(x)=(x_2-x_1,x_3-x_2,x_4-x_3)$. We remark however that our results hold also outside locomotion, in the more classical framework with $Z=X$ and the energy $\EE$ coercive on the whole space.

We are therefore interested in the behaviour of the solutions $x_\eps\colon[0,T/\eps]\to X$ of \eqref{eq:timeresc} as $\eps\to 0^+$. In order to properly compare such solution it is necessary to rescale them in time, so that they are all defined on the same domain and to each instant $t$ corresponds the same stage of the input for all solutions.
Hence we consider the rescaled solutions $\xeps(t):=x_\eps(t/\eps )\colon[0,T]\to X$. It is easy to verify that $x_\eps$ is a solution of \eqref{eq:timeresc} if and only if $\xeps$ is a solution of \eqref{eq:intro_dyn}. Let us remark how velocity-independent forces (as elastic forces), rate-independent forces (as dry friction), and autonomous constraints are preserved by time-rescaling, whereas viscous and inertial forces are rescaled.

A more thorough interpretation of the dynamic problem \eqref{eq:intro_dyn} is the following:  we may  assume that $\M$ and $\RR$ have been normalized and their ratio has been absorbed in the parameter $\eps^2$, so that equation \eqref{eq:intro_dyn} can be seen as the result of a nondimensionalization of the system, and $\eps^2$ can be interpreted as a parameter expressing the ratio of the magnitude of inertial forces to that of dry friction forces. Measuring the weight of inertial forces compared to the other relevant forces in the system is a pivotal concept in the analysis of gaits; in terrestrial locomotion such ratio is often referred as \emph{Froude's number}, from an analogy to its namesake in fluidodynamics \cite{Ale,VauOMa05}. For instance, in legged locomotion Froude's number, together with the hip-height/stride-length ratio, plays a key role in characterizing gaits, and have led to the first estimates of the speeds achieved by dinosaurs \cite{SelMan07}.
Let us also remark that very low Froude's numbers, corresponding to quasistaticity, are not uncommon in locomotion: indeed a key challenge in the design of walking robots has been the transition from quasistatic to dynamic gaits.

For crawling locomotion, following \cite{WagLau}, Froude's number can be defined as the ratio 
\begin{equation}\label{eq:froude}
\text{Froude's number}:=\frac{\text{inertial forces}}{\text{dry friction forces}}=\frac{m_\mathrm{char}\, L_\mathrm{char}}{T_\mathrm{char}^2\, F_\mathrm{char}} 
\end{equation}
where $m_\mathrm{char},L_\mathrm{char},T_\mathrm{char},F_\mathrm{char}$ are respectively the characteristic mass (e.g.~the total mass), length (e.g. the distance covered in one iteration of the gait), time (e.g.~the period of the gait) and friction force (e.g.~the average friction force during sliding) of the locomotor.\footnote{Sometimes Froude's number for crawlers is defined as 
\begin{equation*}
\text{Froude's number}:=\frac{\text{inertial forces}}{\text{gravitational forces}}=\frac{v_\mathrm{char}^2}{gL_\mathrm{char}}\quad \left(=\mu_\mathrm{char}\, \frac{m_\mathrm{char}\, L_\mathrm{char}}{T_\mathrm{char}^2\, F_\mathrm{char}} \right),
\end{equation*}	
 which is the same expression used in legged locomotion. The validity of this second definition is based on the assumption that the normal load proportional to dry friction forces is  caused by gravity, so that  $F_\mathrm{char}= m_\mathrm{char}g\mu_\mathrm{char}$.
  The two notions are thus related by setting the characteristic speed as $v_\mathrm{char}=L_\mathrm{char}/T_\mathrm{char}$.
 We prefer the definition \eqref{eq:froude} for two reasons.
 Firstly, it provides a direct measure of the relevance of inertia in the gait, without the need to compare it with the characteristic friction coefficient $\mu_\mathrm{char}$. Secondly, not necessarily the normal load is produced by gravity: consider for instance a crawler underground or in a pipe.}  It is therefore possible to compute the relevance of inertial forces for  specific locomotor and gait.
For example, Froude's number can be estimated in the order of $10^{-3}$ both for an earthworm on the ground \cite[Chapter 6]{Ale}, and for rectilinear locomotion in boas \cite{MarBriHu}.

We finally remark that our results are not limited to soft locomotion. For instance, within the same formalism it is possible to describe finite-dimensional models made of elastic, viscous or plastic elements; these are often studied as rheological models, see e.g.~\cite{BBLbook, BroTan,Krejbook} and references therein. We briefly present some simple examples in Section~\ref{sec:models}, where, on a more theoretical perspective, we also recall how the play operator and the sweeping process are related with the quasistatic problem \eqref{eq:intro_stat} and briefly discuss the corresponding dynamic approximation.

\subsubsection*{Summary} The paper is structured as follows. In Section~\ref{sec:Setting} we present in detail our assumptions and state the main result of the paper. Section~\ref{sec:dynamicproblem} is dedicated to the dynamic problem \eqref{eq:intro_dyn}, studying existence, uniqueness and useful bounds on the solutions. The time-dependence of the dissipation functional $\RR$ requires a time-dependent generalization of BV functions, which we study in Section~\ref{secACBV}; since the arguments are the same, these results are presented in the more general framework of an arbitrary Banach space. 
The quasistatic problem \eqref{eq:intro_stat} is analysed in Section~\ref{secenergetic}, and the vanishing inertia limit is performed in Section~\ref{seclimit}. Finally, we present some applications and examples in Section~\ref{sec:models}.

\section{Setting of the problem and main result}\label{sec:Setting}

	Let $X$ be a finite dimensional vector space endowed with the norm $|\cdot|$. The same symbol will be also adopted for the modulus in $\R$; however, its meaning will be always clear from the context. We denote by $X^*$ the topological dual of $X$, and by $\langle x^*,\,x\rangle$ the duality product between $x^*\in X^*$ and $x\in X$. The operator norm in $X^*$ will be denoted by $|\cdot|_*$. Given $R>0$, by $\BB^X_R$ we denote the open ball in $X$ of radius $R$ and centered at the origin, and with $\overline{\BB^X_R}$ its closure.\par 
	Let us also recall some basic notions on set-valued maps. 	Given two topological spaces $A_1,A_2$, we denote with $F\colon A_1\rightrightarrows A_2$ a map from $A_1$ having as values subsets of $A_2$. We say that such a set-valued map is upper continuous in a point $a\in A_1$ if for every neighbourhood $U\subseteq A_2$ of $F(a)$ there exists a neighbourhood $V\subseteq A_1$ of $a$ such that $F(\tilde a)\subset U$ for every $\tilde a\in V$. We say that a map is upper semicontinuous if it is so for every point of its domain. We recall that if a set-valued map has compact values, then it is upper semicontinuous if and only if its graph is closed (cf.~e.g.~\cite{AubCel}). 
	
	Given a convex, lower semicontinuous map $\phi\colon X\to[0,+\infty]$, we define its subdifferential $\partial \phi(x_0)\subseteq X^*$ at each point $x_0\in X$ as
	\begin{equation*}
		\partial \phi(x_0)=\{\xi\in X^*\mid \phi(x_0)+\langle\xi,x-x_0\rangle \leq \phi(x) \quad\text{for every $x\in X$}\}.
	\end{equation*}
	Notice that $\partial \phi$ has closed convex values. Moreover, if $\phi(x_0)=+\infty$ and $\phi$ is finite in at least one point, then $\partial\phi(x_0)=\emptyset$.
	Given a subset $\KK\subset X$, we denote with $\chi_\KK\colon X\to [0,+\infty]$ its characteristic function:
	\begin{equation*}
	\chi_\KK(x):=\begin{cases}
	0,&\text{if $x\in \KK$},\\
	+\infty, &\text{if $x\notin \KK$}.
	\end{cases}
	\end{equation*}

	Let us now present in detail our assumptions on the mechanical problems which will be the subject of our investigation.
	\subsection*{Mass and viscosity}
	 Let $\mathbb{M}\colon X\to X^*$ be a symmetric positive-definite linear operator, which will represent mass distribution. Since $X$ has finite dimension, we observe that there exist two constants $M\geq m>0$ such that
	\begin{equation}\label{boundsassumption1}
m|x|^2\leq |x|_{\mathbb{M}}^2:=\langle\mathbb{M}x,x\rangle\leq M|x|^2,\qquad\text{ for every $x\in X$}.
	\end{equation}
We want to stress that the requirement on $\mathbb{M}$ of being positive definite, crucial for our analysis, fits well with the finite dimensional setting in which we are working; in particular, all the applications we have in mind, see Section~\ref{sec:models}, fulfil this assumption. On the contrary, in infinite dimensional models usually the mass operator is null on a subspace (see for instance \cite{MielkPetr}), thus in that case $\mathbb{M}$ turns out to be only positive-semidefinite.\par 
 We consider also the (possible) presence of viscous dissipation, by introducing the positive-semidefinite linear operator $\mathbb{V}\colon X\to X^*$ (symmetry is not needed here). As before, we notice that there
 exists a nonnegative constant $V\geq 0$ such that
 \begin{equation}\label{boundsassumption2}
 0\le|x|_\mathbb V^2:=\langle\mathbb{V}x,x\rangle\le V|x|^2,\qquad\text{ for every $x\in X$}.
 \end{equation}
 We point out that we include also the case $\mathbb{V}\equiv 0$, corresponding to the absence of viscous friction forces in the dynamic problem \eqref{dynprob}. Indeed, in this paper we are mostly interested in the presence of a different type of dissipation, which will be introduced in the following, and which actually overwhelms the effects of viscosity for the purposes of the vanishing inertia analysis.
 
\subsection*{The elastic energy}
Before introducing our assumptions on the elastic energy $\EE$, we recall that our main application concerns a locomotion problem. This implies that the space of admissible states $X$ must include translations, for which the elastic energy is invariant. Hence the elastic energy will be coercive only on a subspace, intuitively corresponding to the shape of the locomotor.

Let us therefore consider a linear subspace $Z\subseteq X$, which is often convenient to endow with its own norm $|\cdot|_Z$, cf.~the examples in  \cite{Gid18}. We assume that the elastic energy $\mc E\colon [0,T]\times X\to [0,+\infty)$ has the form $\mc E(t,x)=\mc E_\mathrm{sh}(t,\piz(x))$, where $\piz\colon X\to Z$ is a linear and surjective operator and $\mc E_\mathrm{sh}\colon [0,T]\times Z\to [0,+\infty)$ satisfies:
	\begin{enumerate}[label=\textup{(E\arabic*)}]
		\item \label{hyp:E1} $\mc E_\mathrm{sh} (\cdot,z)$ is absolutely continuous in $[0,T]$ for every $z\in Z$; 
		\item \label{hyp:E2} $\mc E_\mathrm{sh}(t,\cdot)$ is $\mu$-uniformly convex for some $\mu>0$ for every $t\in[0,T]$, namely for every $\theta\in[0,1]$, $z_1,z_2\in Z$:
		\begin{equation*}
			\mc E_\mathrm{sh}(t,\theta z_1+(1-\theta)z_2)\le\theta\mc E_\mathrm{sh}(t,z_1)+(1-\theta)\mc E_\mathrm{sh}(t,z_2)-\frac{\mu}{2}\theta(1-\theta)|z_1-z_2|^2_Z;
		\end{equation*}
		\item \label{hyp:E3} $\mc E_\mathrm{sh}(t,\cdot)$ is differentiable for every $t\in[0,T]$ and the differential $D_z \mc E_\mathrm{sh}$ is continuous in $[0,T]\times Z$;
		\item \label{hyp:E4} for a.e. $t\in[0,T]$ and for every $z\in Z$ it holds
		\begin{equation*}
			\left|\frac{\partial}{\partial t} \mc E_\mathrm{sh}(t,z)\right|\le\omega(\mc E_\mathrm{sh}(t,z))\gamma(t),
		\end{equation*}
		where $\omega\colon [0,+\infty)\to[0,+\infty)$ is nondecreasing and continuous, while $\gamma\in L^1(0,T)$ is nonnegative;
		\item \label{hyp:E5} for every $R>0$ there exists a nonnegative function $\eta_R\in L^1(0,T)$ such that for a.e. $t\in[0,T]$ and for every $z_1,z_2\in \overline{\BB_R^Z}$ it holds 
		\begin{equation*}
			\left|\frac{\partial}{\partial t}\mc E_\mathrm{sh}(t,z_2)-\frac{\partial}{\partial t}\mc E_\mathrm{sh}(t,z_1)\right|\le \eta_R(t)|z_2-z_1|_Z.
		\end{equation*}
	\end{enumerate}
Let us also introduce some additional assumptions on the energy $\EE$, which are in general not required, but provide sharper results.
\begin{enumerate}[label=\textup{(E\arabic*)},resume*]
	\item \label{hyp:E6} for every $\lambda>0$ and $R>0$ there exists $\delta=\delta(\lambda,R)>0$ such that if $|t-s|\le\delta$ and $z\in \BB^Z_R$, then $$\left|\frac{\partial}{\partial t} \mc E_\mathrm{sh}(t,z)-\frac{\partial}{\partial t} \mc E_\mathrm{sh}(s,z)\right|\le\lambda;$$
	\item \label{hyp:E7} for every $R>0$ there exists a nonnegative function $\varsigma_R\in L^1(0,T)$ such that for a.e. $t\in[0,T]$ and for every $z_1,z_2\in \overline{\BB_R^Z}$ it holds 
	\begin{equation*}	|D_z\mc E_\mathrm{sh}(t,z_2)-D_z\mc E_\mathrm{sh}(t,z_1)|_*\le \varsigma_R(t)|z_2-z_1|_Z.
			\end{equation*}
\end{enumerate}
We finally present the classical case of a quadratic energy:
	\begin{enumerate}[label=\textup{(QE)}]
\item \label{hyp:QE}  $\mc E_\mathrm{sh}(t,z)=\frac 12\langle\mathbb{A}_\mathrm{sh}(z-\ell_\mathrm{sh}(t)),z-\ell_\mathrm{sh}(t)\rangle_Z$, where $\mathbb{A}_\mathrm{sh}\colon Z\to Z^*$ is a symmetric, positive-definite linear operator  and $\ell_\mathrm{sh}\in AC([0,T];Z)$.
	\end{enumerate}
It can be easily verified that \ref{hyp:QE} implies conditions \ref{hyp:E1}--\ref{hyp:E5} and \ref{hyp:E7}, whereas it satisfies \ref{hyp:E6} if and only if $\ell_\mathrm{sh}$ has continuous derivative. However, for our purposes, the additional structure of \ref{hyp:QE} will alone provide a suitable alternative to \ref{hyp:E6}.

\begin{rmk}
We point out that the more common case $Z\equiv X$ is also included in our formulation. In such a case all the assumptions above on $\EE_\mathrm{sh}$ are taken directly on $\EE$.
\end{rmk}

\begin{rmk} 
Let us notice that, since $\piz$ is linear, if any of \ref{hyp:E1}, \ref{hyp:E3}--\ref{hyp:E7} holds, the same property enunciated for $\EE_\mathrm{sh}$ is satisfied also \lq\lq directly\rq\rq\ by the entire function $\EE$ on $[0,T]\times X$, with the only change of the addition of the multiplicative term $|\piz|_*$ in the bounds of \ref{hyp:E5}, \ref{hyp:E7}. 
The only caveat is with \ref{hyp:E2}, which implies that  $\mc E(t,\cdot)$ is convex, but in general not uniformly convex in the whole $X$. We however point out that convexity will not be necessary when dealing with the dynamic problem \eqref{eq:intro_dyn} and for the first part of the subsequent vanishing inertia analysis performed in Section~\ref{seclimit}, where also non convex energies are allowed.
\end{rmk}

Thanks to the above remark, we observe that by \ref{hyp:E1} and \ref{hyp:E3} we deduce that $\mc E$ is continuous in $[0,T]\times X$, while from \ref{hyp:E1} and \ref{hyp:E5} we get that $\frac{\partial}{\partial t} \mc E$ is a Caratheodory function. Thus for every $x\colon[0,T]\to X$ measurable, the function $t\mapsto \frac{\partial}{\partial t} \mc E(t,x(t))$ is measurable too. Moreover if $x$ is also bounded, namely $\sup\limits_{t\in[0,T]}|x(t)|\le R$, then \ref{hyp:E4} implies that $\frac{\partial}{\partial t}\mc E(\cdot,x(\cdot))$ is summable in $[0,T]$, indeed:
		\begin{equation*}
			\int_{0}^{T}\left|\frac{\partial}{\partial t}\mc E(\tau,x(\tau))\right|\d\tau\le\int_{0}^{T}\omega(\mc E(\tau,x(\tau)))\gamma(\tau)\d\tau\le\omega(M_R)\int_{0}^{T}\gamma(\tau)\d\tau<+\infty,
		\end{equation*}
		where $M_R$ denotes the maximum of $\mc E$ on the compact set $[0,T]\times \overline{\BB^X_R}$. If in addition $x$ is absolutely continuous from $[0,T]$ to $X$, by \ref{hyp:E1}, \ref{hyp:E3} and \ref{hyp:E4} we also deduce that $t\mapsto \mc E(t,x(t))$ is absolutely continuous in $[0,T]$ too, indeed for every $0\le s\le t\le T$ it holds:
		\begin{align*}
			|\mc E(t,x(t))-\mc E(s,x(s))|&\le |\mc E(t,x(t))-\mc E(t,x(s))|+|\mc E(t,x(s))-\mc E(s,x(s))|\\
			&\le C_R|x(t)-x(s)|+\int_{s}^{t}\left|\frac{\partial}{\partial t}\mc E(\tau,x(s))\right|\d\tau\\
			&\le C_R|x(t)-x(s)|+\omega(M_R)\int_{s}^{t} \gamma(\tau)\d\tau,
		\end{align*}
    where $C_R$ is the maximum of $|D_x\mc E|_*$ on $[0,T]\times \overline{\BB^X_R}$.

\subsection*{The dissipation potential} We introduce the main dissipative forces involved in the system, described by a time-dependent dissipation potential $\mc R\colon [0,T]\times X\to[0,+\infty]$ which takes into account both possible constraints on the velocity and the presence of dry friction. It originates from a function $\RR_\mathrm{finite}\colon [0,T]\times X\to[0,+\infty)$ with finite values on which we make the following assumptions:
\begin{enumerate}[label=\textup{(R\arabic*)}, start=1]
		\item \label{hyp:R1}  for every $t\in[0,T]$, the function $\mc R_\mathrm{finite}(t,\cdot)$ is convex, positively homogeneous of degree one, and satisfies $\RR_\mathrm{finite}(t,0)=0$;
		\item \label{hyp:R2} there exist two positive constants $\alpha^*\ge\alpha_*>0$ for which
		\begin{equation*}
		\alpha_*\abs{v}\le \mc R_\mathrm{finite}(t,v)\le \alpha^*\abs{v},\quad\text{ for every }(t,v)\in[0,T]\times X;
		\end{equation*}
		\item \label{hyp:R3}  there exists a nonnegative function $\rho\in L^1(0,T)$ for which
		\begin{equation*}
		|\mc R_\mathrm{finite}(t,v)-\mc R_\mathrm{finite}(s,v)|\le\abs{v}\int_{s}^{t}\rho(\tau)\d\tau,\text{ for every }0\le s\le t\le T \text{ and for every }v\in X.
		\end{equation*}
	\end{enumerate}
	\begin{rmk} 
We observe that the second inequality in \ref{hyp:R2} actually follows from \ref{hyp:R1} and \ref{hyp:R3}. Indeed, since we are in finite dimension, the convex function $\RR_\mathrm{finite}(t,\cdot)$ is automatically continuous on $X$; by \ref{hyp:R3} this easily implies $\RR_\mathrm{finite}$ is continuous on the whole $[0,T]\times X$, and hence by one-homogeneity we get $\RR_\mathrm{finite}(t,v)\leq C\abs{v}$ for some constant $C>0$ and every $(t,v)\in[0,T]\times X$.
\end{rmk}
	As regards $\RR$ we finally assume that:
	\begin{enumerate}[label=\textup{(R\arabic*)}, start=4]
	\item \label{hyp:RK} there exists a nonempty closed convex cone $K\subseteq X$, independent of time, and there exists a function $\RR_\mathrm{finite}\colon [0,T]\times X\to[0,+\infty)$ satisfying \ref{hyp:R1}--\ref{hyp:R3} such that for every $(t,v)\in[0,T]\times X$ it holds
		\begin{equation*}
		\RR(t,v)=\chi_K(v)+\RR_\mathrm{finite}(t,v).
		\end{equation*}
	\end{enumerate}
We will denote with $\partial_v \RR$ the subdifferential of $\RR$ with respect to its second variable. The choice of the letter $v$ when dealing with the dissipation potential reminds the fact that the second argument of $\RR$ is usually a velocity.

As an immediate consequence of condition \ref{hyp:RK} we can rephrase conditions \ref{hyp:R1}--\ref{hyp:R3} directly on $\mc R$:

\begin{cor}\label{propertiesR}
    Let $\RR$ be as in \ref{hyp:RK}. Then it holds:
    \begin{enumerate}[label=\textup{(\Roman*)}]
        \item \label{prop:regR1}  for every $t\in[0,T]$, the function $\mc R(t,\cdot)$ is convex, positively homogeneous of degree one, lower semicontinuous, and satisfies $\RR(t,0)=0$;
		\item \label{prop:regR2} for every $(t,v)\in[0,T]\times K$ one has
		\begin{equation*}
		\alpha_*\abs{v}\le \mc R(t,v)\le \alpha^*\abs{v},
		\end{equation*}
		with the same constants $\alpha^*$ and $\alpha_*$ of \ref{hyp:R2};
		\item \label{prop:regR3}  for every $0\le s\le t\le T$ and for every $v\in K$ one has
		\begin{equation*}
		|\mc R(t,v)-\mc R(s,v)|\le\abs{v}\int_{s}^{t}\rho(\tau)\d\tau,
		\end{equation*}
		with the same function $\rho$ of \ref{hyp:R3}.
    \end{enumerate}
    Moreover the following properties hold true:
    \begin{enumerate}[label=\textup{(\Roman*)},resume]
        \item \label{prop:R_IV} for every $(t,v)\in [0,T]\times X$ one has
        \begin{equation*}
            \partial_v\RR_\mathrm{finite}(t,v)\subseteq \overline{\BB^{X^*}_{\alpha^*}},
        \end{equation*}
        with $\alpha^*$ as in \ref{hyp:R2}. In particular $\partial_v\RR_\mathrm{finite}$ has compact, convex, non-empty values.
        \item \label{prop:R_V} the multivalued map  $\partial_v \RR_\mathrm{finite}$ is upper semicontinuous on $[0,T]\times X$;
    \end{enumerate}
\end{cor}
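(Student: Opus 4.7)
The plan is to combine the structural properties of $\chi_K$ (characteristic function of a closed convex cone containing the origin) with the assumptions \ref{hyp:R1}--\ref{hyp:R3} on $\RR_\mathrm{finite}$. For \ref{prop:regR1}, I note that $\chi_K$ is convex, lower semicontinuous, positively homogeneous of degree one (since $K$ is a cone), and vanishes at the origin; by \ref{hyp:R1} the same structural properties hold for $\RR_\mathrm{finite}$ (lower semicontinuity being automatic, since a finite-valued convex function on a finite-dimensional space is continuous), so the sum $\RR=\chi_K+\RR_\mathrm{finite}$ inherits them. Properties \ref{prop:regR2} and \ref{prop:regR3} are then immediate: for $v\in K$ we have $\chi_K(v)=0$, hence $\RR(t,v)=\RR_\mathrm{finite}(t,v)$, and the bounds of \ref{hyp:R2} and \ref{hyp:R3} apply verbatim.

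For \ref{prop:R_IV}, I would rely on the classical Euler-type identity for positively one-homogeneous convex functions. First, $\RR_\mathrm{finite}(t,0)=0$ follows directly from positive homogeneity. Testing the subdifferential inequality $\RR_\mathrm{finite}(t,w)\geq \RR_\mathrm{finite}(t,v)+\langle\xi,w-v\rangle$ at $w=2v$ and at $w=0$ yields, respectively, $\RR_\mathrm{finite}(t,v)\geq \langle\xi,v\rangle$ and $\langle\xi,v\rangle\geq \RR_\mathrm{finite}(t,v)$, so $\langle\xi,v\rangle=\RR_\mathrm{finite}(t,v)$. Substituting back,
\[
\langle \xi, w\rangle \leq \RR_\mathrm{finite}(t,w) \leq \alpha^* |w| \qquad \text{for every } w \in X,
\]
which gives $|\xi|_*\leq \alpha^*$. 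Convexity and closedness of $\partial_v\RR_\mathrm{finite}(t,v)$ are standard, and nonemptiness is guaranteed since $\RR_\mathrm{finite}(t,\cdot)$ is finite-valued and convex on a finite-dimensional space, hence subdifferentiable everywhere.

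For \ref{prop:R_V}, I would invoke the characterization recalled at the beginning of the section: a compact-valued set-valued map is upper semicontinuous if and only if its graph is closed. Compactness of the values is already guaranteed by \ref{prop:R_IV}, so it suffices to verify graph-closedness. Taking $(t_n,v_n)\to(t,v)$ in $[0,T]\times X$ and $\xi_n\to\xi$ in $X^*$ with $\xi_n\in\partial_v\RR_\mathrm{finite}(t_n,v_n)$, I would pass to the limit in the inequality
\[
\RR_\mathrm{finite}(t_n, w) \geq \RR_\mathrm{finite}(t_n, v_n) + \langle \xi_n, w - v_n\rangle \qquad \forall w \in X,
\]
to conclude that $\xi\in\partial_v\RR_\mathrm{finite}(t,v)$. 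The only technical input is the joint continuity of $\RR_\mathrm{finite}$ on $[0,T]\times X$, which, as observed in the remark preceding the statement, follows by combining the time-Lipschitz estimate \ref{hyp:R3} with the automatic continuity in $v$ of a finite-valued convex function in finite dimension. I do not anticipate any genuine obstacle: the proof is essentially a bookkeeping exercise assembling material already available.
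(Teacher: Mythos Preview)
Your proposal is correct and follows essentially the same route as the paper: parts \ref{prop:regR1}--\ref{prop:regR3} are dispatched exactly as you say, and part \ref{prop:R_V} is proved identically via graph-closedness using the joint continuity of $\RR_\mathrm{finite}$. The only cosmetic difference is in \ref{prop:R_IV}: the paper reaches the key inequality $\langle\xi,\tilde v\rangle\le\RR_\mathrm{finite}(t,\tilde v)$ directly from subadditivity (convexity plus one-homogeneity), whereas you first establish the Euler identity $\langle\xi,v\rangle=\RR_\mathrm{finite}(t,v)$ and then substitute back --- both arguments are standard and equivalent.
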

\begin{proof}
The first three points are a trivial consequence of \ref{hyp:R1}--\ref{hyp:R3}, respectively, due to the form of $\mc R$ given by \ref{hyp:RK}. We indeed notice that, since $K$ is a nonempty closed convex cone, its characteristic function $\chi_K$ is convex, positively homogeneous of degree one, lower semicontinuous, and vanishes at $v=0$. 
To prove \ref{prop:R_IV}, since $\RR_\mathrm{finite}$ has finite values, we deduce that $\xi\in \partial_v \RR_\mathrm{finite}(t,v)$ if and only if 
	\begin{equation*}
	    \scal{\xi}{\tilde v}\le \RR_\mathrm{finite}(t,\tilde v+v)-\RR_\mathrm{finite}(t,v),\quad\text{ for every }\tilde v\in X.
	\end{equation*}
	We now recall that convexity plus one-homogeneity easily yield subadditivity, thus we can continue the above inequality getting
	\begin{equation*}
	    \scal{\xi}{\tilde v}\le \RR_\mathrm{finite}(t,\tilde v),\quad\text{ for every }\tilde v\in X.
	\end{equation*}
	By means of \ref{hyp:R2} we thus deduce that $|\xi|_*\le \alpha^*$, and so \ref{prop:R_IV} is proved.

	To prove \ref{prop:R_V}, since $\partial_v \RR_\mathrm{finite}$ has compact values, it is sufficient to show that for every sequence $(t_k,v_k,\xi_k)$ in $[0,T]\times X\times X^*$ such that $\xi_k\in \partial_v\RR_\mathrm{finite}(t_k,v_k)$, if  $(t_k,v_k,\xi_k)\to(\bar t,\bar v,\bar \xi)\in [0,T]\times X\times X^*$ then $\bar \xi\in \partial_v\RR_\mathrm{finite}(\bar t,\bar v)$. 
	By definition of subdifferential, for every $k\in \N$ we have 
	\begin{equation*}
	\RR_\mathrm{finite}(t_k,v_k)+\langle\xi_k,v-v_k\rangle \leq \RR_\mathrm{finite}(t_k,v), \quad\text{for every $v\in X$}.
	\end{equation*}
	By the continuity of $\RR_\mathrm{finite}$ on $[0,T]\times X$ and of the dual coupling, passing to the limit in the above estimate gives
	\begin{equation*}
	\RR_\mathrm{finite}(\bar t, \bar v)+\langle\bar \xi,v-\bar v\rangle \leq \RR_\mathrm{finite}(\bar t,v), \quad\text{for every $v\in X$},
	\end{equation*}
	namely $\bar \xi\in \partial_v\RR_\mathrm{finite}(\bar t,\bar v)$, concluding the proof.
\end{proof}
\begin{rmk}[\textbf{Comparison with $\psi$-regularity \cite{HeiMielk}}] Let us remark that our assumptions on $\RR$ are very close to the notion of $\psi$-regularity introduced in \cite{HeiMielk} (see also Definition~\ref{psiregular}). Most of the differences between the two frameworks are due to the fact that \cite{HeiMielk} deals with functionals $\RR$ defined on a general Banach space $X$, but with finite values. For instance, if the functional $\RR$ has finite values, we observe that assumption \ref{hyp:RK} is automatically satisfied with $K=X$.
	
The are only two points in which our assumptions are actually slightly stricter than  \cite{HeiMielk}, and both are motivated. The first one is the left inequality in \ref{hyp:R2}, corresponding in the framework of \cite{HeiMielk} to the additional assumption $c\abs{v}\leq \psi(v)$. This is related to the fact that we have renounced to coercivity in the energy $\EE$, and such loss has to be compensated with a coercivity in the dissipation potential $\RR$, in order to recover some a priori estimates, such as \ref{prop:stima_i} in Corollary~\ref{unifbound}. We however point out that such a request is absolutely natural in the finite dimensional setting we are considering, as we will see in the examples of Section~\ref{sec:models}. On the contrary, it becomes very restrictive in infinite dimension: indeed, in standard models of elasticity where the simplest ambient space is $H^1_0(\Omega)$, a common choice of dissipation potential is $\int_\Omega |v(x)|\d x$, which of course lacks of coercivity.\par
The second stronger assumption is that the modulus of continuity appearing in \ref{hyp:R3} is of integral type. This is because we are interested in absolutely continuous solutions of the quasistatic problem \eqref{quasprob}, not just continuous ones, cf.~Proposition~\ref{regularity}. 
However, a general modulus of continuity (as the one used in \cite{HeiMielk}) would be enough to get all the results presented in Section~\ref{secACBV}.
\end{rmk}

Let us also introduce an optional assumption on $\RR$ (actually on the set $K$), which will be used to improve the regularity of the quasistatic solutions:

\begin{enumerate}[label=\textup{(R\arabic*)}, start=5]
	\item \label{hyp:R5} there exists a constant $C_K>0$ such that, for every $z\in Z$
	\begin{itemize}
	\item either $\piz(x)\neq z$ for every $x\in K;$
	\item or there exists $x\in K$ such that $\piz(x)=z$ and $\abs{x}\le C_K\abs{z}_Z$.
	\end{itemize}
\end{enumerate}
We remark that, by a physical point of view, assumption \ref{hyp:R5} is usually satisfied. Indeed, violating \ref{hyp:R5} would mean that the constraints allow a locomotor to achieve an arbitrarily large displacement with an arbitrarily small change in shape. All the concrete models we consider in Section~\ref{sec:models} satisfy \ref{hyp:R5}; we discuss a purely theoretical counterexample in Subsection \ref{subsec:notR5}. By a mathematical point of view, let us highlight some common situations where  \ref{hyp:R5} is true.

\begin{prop} \label{prop:R5char} Each of the following is a sufficient condition for \ref{hyp:R5}:
\begin{enumerate}[label=\textup{(\roman*)}]
	\item $K=X$ or $K=\{0\}$;
	\item $\dim Z= \dim X$;
	\item \label{cond:R5char3} $\dim X=1+\dim Z$ and $K$ is a polyhedral closed cone, i.e.~there exist $J$ covectors $f_1^K,\dots f_J^K\in X^*$ such that
	\begin{equation*}
		K=\{x\in X \mid \scal{f_j^K}{x}\geq 0 \quad \text{for every $j=1,\dots,J$}\};
	\end{equation*}
\end{enumerate}	

\end{prop}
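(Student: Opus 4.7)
For each of the three conditions the plan is, for every $z \in Z$, either to verify that $\piz^{-1}(z) \cap K = \emptyset$ (so the first alternative of \ref{hyp:R5} holds trivially) or to exhibit an explicit $x \in \piz^{-1}(z) \cap K$ with $|x| \leq C_K |z|_Z$.

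The linear cases (i) and (ii) are essentially immediate. If $K = \{0\}$ then no $x \in K$ maps to $z \neq 0$, while $x = 0$ works for $z = 0$. If $K = X$, surjectivity of $\piz$ in finite dimension furnishes a continuous linear right inverse $S : Z \to X$, and $x := S(z)$ fits the bill. Under (ii) the hypothesis $\dim Z = \dim X$ forces the surjective $\piz$ to be a linear isomorphism with continuous inverse, so $\piz^{-1}(z)$ is a singleton: take $x := \piz^{-1}(z)$ if it happens to lie in $K$, and invoke the first alternative otherwise.

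Case \ref{cond:R5char3} is the main obstacle. Since $\piz$ is surjective and $\dim X = 1 + \dim Z$, one has $\ker \piz = \R e$ for some unit vector $e$. Pick a continuous linear right inverse $S$ of $\piz$, so that $\piz^{-1}(z) = S(z) + \R e$ and membership may be parametrised by $t \in \R$. Setting $a_j := \scal{f^K_j}{e}$ and $b_j(z) := \scal{f^K_j}{S(z)}$, the condition $S(z) + t e \in K$ becomes the finite linear system $a_j t \geq -b_j(z)$ for $j = 1, \dots, J$. Partitioning the indices as $J^\pm := \{j : \pm a_j > 0\}$ and $J^0 := \{j : a_j = 0\}$, the feasible set in $t$ is a (possibly unbounded) closed interval $[t_-(z), t_+(z)]$ together with the $t$-independent sign constraints $b_j(z) \geq 0$ for $j \in J^0$; crucially, each finite endpoint is an extremal ratio $-b_{j_0}(z)/a_{j_0}$, hence linear in $z$ and bounded by $C|z|_Z$ thanks to $|b_j(z)| \leq |f^K_j|_* \, |S(z)| \leq C' |z|_Z$.

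It remains to pick a canonical feasible $t^*(z)$ satisfying $|t^*(z)| \leq C|z|_Z$ whenever $\piz^{-1}(z) \cap K \neq \emptyset$. If $J^+ \neq \emptyset$ I set $t^*(z) := t_-(z)$, which is finite, bounded as above, and automatically satisfies $t^*(z) \leq t_+(z)$ by the assumed feasibility. If $J^+ = \emptyset$ and $J^- \neq \emptyset$, the choice $t^*(z) := \min(t_+(z), 0)$ works, with $|t^*(z)| \leq |t_+(z)|$. In the residual case $a_j = 0$ for every $j$, feasibility is equivalent to $S(z) \in K$ and $t^*(z) := 0$ suffices. Thus $x^*(z) := S(z) + t^*(z) e$ lies in $K$ and obeys $|x^*(z)| \leq C_K |z|_Z$ with $C_K$ depending only on the $a_j$'s, $|f^K_j|_*$, and the operator norm of $S$.
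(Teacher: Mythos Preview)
Your proof is correct and follows essentially the same strategy as the paper: cases (i) and (ii) are dispatched trivially via a linear right inverse (respectively, the inverse) of $\piz$, and for case \ref{cond:R5char3} the one-dimensional fiber $\piz^{-1}(z)$ is parametrised by a scalar so that membership in the polyhedral cone $K$ reduces to finitely many linear inequalities in that scalar, whose extremal values are linear in $z$ and hence controlled by $|z|_Z$. The only tactical difference is in the \emph{selection} of the feasible scalar: the paper picks the norm-minimising $\bar\lambda$ and argues it is either zero or lies on an active face, whereas you pick an interval endpoint (or zero) directly via the case split on $J^+,J^-,J^0$; your choice is slightly more explicit and sidesteps any discussion of where the unconstrained minimiser of $\lambda\mapsto|\hat z+\lambda\eta|$ sits, but both arguments yield the same bound $C_K$ depending only on the $f_j^K$, the kernel direction, and the right inverse.
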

\begin{proof}
The first two points are trivial. Let us therefore prove the third point. First of all we observe that for $z=0_Z$ the second alternative of \ref{hyp:R5} is satisfied by $x=0_X$. For $z\neq 0_Z$, by homogeneity, it is sufficient to consider the case $\abs{z}_Z=1$. Moreover, without loss of generality we can assume $\bigl\lvert f_j^K \bigr\rvert_*=1$.

  Let $i\colon Z\times \ker \piz\to X$ be the canonical identification. For every $z\in Z$ we write $\hat z:=i(z,0)\in X$; moreover, fixed any nonzero vector $y\in\ker\piz$, we set $\eta:=i(0,y)/\abs{i(0,y)}$. Since $\dim X=1+\dim Z$, we deduce that $\piz(x)=z$ if and only if $x=\hat z+\lambda\eta$ for some $\lambda\in \R$.
  
  Let us write $\mathcal S=\{\hat z=i(z,0)\in X\mid\abs{z}_Z=1\}$ and set
\begin{align*} 
&C_1:=\max_{j=1,\dots,J} \max_{\hat{z}\in\mathcal S} \abs{\scal{f^K_j}{\hat z}},\\
&C_2:=\min_{j=1,\dots,J}  \{\abs{\scal{f^K_j}{\eta}}\mid\scal{f^K_j}{\eta}\neq 0\},\\
&C_3:=\max_{\hat{z}\in\mathcal S} \abs{\hat z}.
\end{align*}
 We claim that we can take $C_K={C_3+(C_1/C_2)}$.
Fix $z$ with norm $1$, and consider the corresponding $\hat z\in \mathcal S$. Since $K$ is closed, we have two alternative possibilities:
\begin{itemize} 
	\item either $\piz(x)\neq z$ for every $x\in K$;
	\item or there exists $\bar \lambda\in R$ such that $\hat z+\bar \lambda \eta\in K$ and
	\begin{equation*}
		\abs{\hat z+\bar \lambda \eta}\leq \abs{\hat z+\lambda \eta}, \qquad \text{for every $\lambda\in \R $ such that $\hat z+\lambda \eta\in K$}.
	\end{equation*}
\end{itemize} 
 To prove \ref{hyp:R5} it is sufficient to show that, if the second option holds,  $\abs{\bar\lambda}\leq C_1/C_2$, so that $\abs{\hat z+\bar\lambda \eta}\le\abs{\hat{z}}+\abs{\bar \lambda}\leq C_3+(C_1/C_2)$. 
 To show this estimate on $\abs{\bar\lambda}$, let us observe that, in order to minimize the absolute value, either $\bar \lambda=0$ or there exists an index $j$ such that 
\begin{equation*}
	\scal{f_j^K}{\hat z}+\bar \lambda\scal{f_j^K}{\eta}=0, \qquad\text{and}\qquad \scal{f_j^K}{\eta}\neq 0,
\end{equation*}
which implies $\abs{\lambda}\leq C_1/C_2$.
\end{proof}
\medskip

We now present the dynamic and quasistatic problems we will study and state our main result.

\subsection*{The dynamic problem} Let $\M,\V$ be as above, and assume that \ref{hyp:E1}, \ref{hyp:E3}--\ref{hyp:E5} and \ref{hyp:RK} are satisfied. For $\eps>0$ we  refer as \emph{dynamic problem} to the differential inclusion
	\begin{equation}\label{dynprob}
\begin{cases}
\eps^2 \mathbb{M}\xepsdd(t)+\eps \mathbb{V}\xepsd(t)+\partial_v \mc R(t,\xepsd(t))+D_x\mc E(t,\xeps(t))\ni 0,\\
\xeps(0)=x^\eps_0,\quad\xepsd(0)=x^\eps_1,
\end{cases}
\end{equation}
where the initial velocity satisfy the admissibility condition
\begin{equation} \label{eq:dyn_admiss}
	x^\eps_1\in K,
\end{equation}
for $K$ as in \ref{hyp:RK}. 
\begin{defi}
    We say that a function $\xeps\in W^{2,1}(0,T; X)$ is a \emph{differential solution} of \eqref{dynprob} if the differential inclusion holds true in $X^*$ for a.e. $t\in[0,T]$ and initial position and velocity are attained.
\end{defi} 
We discuss existence and uniqueness of a differential solution for \eqref{dynprob} in Section~\ref{sec:dynamicproblem}, see Theorem~\ref{existencedyn}.

\subsection*{The quasistatic problem} Assume that \ref{hyp:E1}--\ref{hyp:E5} and \ref{hyp:RK} are satisfied. We refer as \emph{quasistatic problem} to the differential inclusion
	\begin{equation}\label{quasprob}
	\begin{cases}
	\partial_v \mc R(t,\dot{x}(t))+D_x\mc E(t,x(t))\ni 0,\\
	x(0)=x_0.
	\end{cases}
	\end{equation}
For the quasistatic problem we introduce two notions of solution. Conditions for existence of each type of solution are a direct consequence of the main result, although they could be derived separately (see for instance \cite{MielkRoubbook} for a general argument based on time-discretization).
 	\begin{defi}
We say that a function $x\in AC([0,T];X)$ is a \emph{differential solution} of \eqref{quasprob} if the differential inclusion holds true in $X^*$ for a.e. $t\in[0,T]$ and the initial position is attained. 
	\end{defi}
We observe that the existence of differential solutions for \eqref{quasprob}  requires the admissibility condition on the initial datum
\begin{equation}\label{eq:adm_quasistat}
-D_x\mc E(0,x_0)\in \partial_v \mc R(0,0).
\end{equation}

In order to introduce the second (weaker) notion of solution, let us first state a suitable generalization of functions of bounded variation, which we will discuss in detail in Section~\ref{secACBV}.
\begin{defi} \label{def:RBV}
	Given a function $f\colon [a,b]\to X$, we define its $\mc R$-variation in $[s,t]$, with $a\le s<t\le b$, as:
	\begin{equation}\label{rvar}
	V_{\mc R}(f;s,t):=\lim\limits_{n\to +\infty}\sum_{k=1}^{n}\mc R(t_{k-1}, f(t_k)-f(t_{k-1})),
	\end{equation}
	where $\{t_k\}_{k=1}^{n}$ is a \emph{fine sequence of partitions} of $[s,t]$, namely it is of the form $s=t_0<t_1<\dots<t_n=t$ and satisfies
	\begin{equation}\label{finezza}
	\lim\limits_{n\to +\infty}\sup\limits_{k=1,\dots, n}(t_k-t_{k-1})= 0.
	\end{equation}
	We also set $V_{\mc R}(f;t,t):=0$, for every $t\in[a,b]$.\par 
	We say that $f$ is a function of bounded $\mc R$-variation in $[a,b]$, and we write $f\in BV_{\mc R}([a,b];X)$, if its $\mc R$-variation in $[a,b]$ is finite, i.e. $V_{\mc R}(f;a,b)<+\infty$.
\end{defi}
\begin{defi}\label{defenergetic}
	We say that $x\in BV_{\mc R}([0,T];X)$ is an \emph{energetic solution} for the quasistatic problem \eqref{quasprob} if the initial position is attained and the following global stability condition and weak energy balance hold true:
	\begin{enumerate}[label=\textup{(GS)}]
		\item \label{GS} $\mc E(t,x(t))\le \mc E(t,v)+\mc R(t,v-x(t)),\quad \text{ for every }v\in X \text{ and for every }t\in[0,T];$
	\end{enumerate}
	\begin{enumerate}[label=\textup{(WEB)}]
		\item \label{WEB} $\displaystyle \mc E(t,x(t))+V_{\mc R}(x;0,t)=\mc E(0,x_0)+\int_{0}^{t}\frac{\partial}{\partial t} \mc E(\tau,x(\tau))\d\tau, \quad \text{for every }t\in[0,T].$
	\end{enumerate}
\end{defi}
The justification of this definition, together with the main properties of energetic solutions, will be given in Section~\ref{secenergetic}; see in particular Proposition~\ref{propdiffenerg}.
We remark that the notion of energetic solution is more flexible than the one of differential solution, since it does not involve derivatives and in general allows for discontinuous solutions. We refer to \cite{MielkRoubbook} for a wide and complete presentation on the topic.

\subsection*{Main result} We are now ready to state the main result of this paper, concerning the asymptotic behaviour as $\eps\to 0^+$ of differential solutions of the dynamic problem \eqref{dynprob}.

\begin{thm}
	Let $\M,\V$ be as above; assume that $\RR$ satisfies \ref{hyp:RK}, and that $\mc E(t,x)=\mc E_\mathrm{sh}(t,\piz (x))$ satisfies \ref{hyp:E1}--\ref{hyp:E6} or \ref{hyp:QE}.
	Let $\xeps$ be a differential solution of the dynamic problem \eqref{dynprob} related to the initial position $x^\eps_0\in X$ and the initial velocity $ x^\eps_1\in K$, and assume
	\begin{align}\label{eq:conv_initial}
		\lim_{\eps\to 0} x^\eps_0=x_0, && \lim_{\eps\to 0}\eps x^\eps_1=0,
	\end{align}
	 for some $x_0$ satisfying \eqref{eq:adm_quasistat}. Then there exist a subsequence $\epsj\searrow 0$ and a function $x\in BV_{\mc R}([0,T];X)\cap \CC^0([0,T];X)$ such that $x$ is an energetic solution for \eqref{quasprob} with initial position $x_0$ and:
	\begin{enumerate}[label=\textup{(\alph*)}]
		\item $\lim\limits_{j\to +\infty}\xepsj(t)=x(t)$ uniformly on $[0,T]$;
		\item$\displaystyle\lim\limits_{j\to +\infty}\int_{s}^{t}\mc R(\tau,\xepsjd(\tau))\d\tau=V_\mc R(x;s,t)$ for every $0\leq s\le t\le T$;
		\item $\lim\limits_{j\to +\infty}\epsj |\xepsjd(t)|_{\mathbb{M}}=0$ uniformly on $[0,T]$;
		\item $\displaystyle\lim\limits_{j\to +\infty}\epsj\int_{0}^{T}|\xepsjd(\tau)|^2_\mathbb{V}\d\tau=0$.
	\end{enumerate}
In particular, in case of uniqueness of energetic solutions to the quasistatic problem \eqref{quasprob}, cf. for instance Lemmata~\ref{lemma:uniqquas} and \ref{lemma:uniqquas2}, the result holds true for the whole sequence $x^\eps$.\par 
If, in addition, \ref{hyp:R5} holds or $\mc R$ does not depend on time, then the limit function $x$ is absolutely continuous and, in particular, it is a differential solution of \eqref{quasprob}.
\end{thm}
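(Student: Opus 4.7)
The plan is to follow the classical vanishing-inertia scheme: derive uniform a priori estimates from the dynamic energy identity, extract a pointwise (in fact, uniform) limit through a Helly-type selection in $BV_\RR$, and pass to the limit in the characterizations of energetic solutions. Testing \eqref{dynprob} against $\xepsd$ produces the dynamic energy identity
\begin{equation*}
\tfrac{\eps^2}{2}|\xepsd(t)|_\M^2 + \EE(t,\xeps(t)) + \eps\!\int_0^t\!|\xepsd(\tau)|_\V^2\d\tau + \int_0^t\!\RR(\tau,\xepsd(\tau))\d\tau = \tfrac{\eps^2}{2}|x^\eps_1|_\M^2 + \EE(0,x^\eps_0) + \int_0^t\!\tfrac{\partial}{\partial t}\EE(\tau,\xeps(\tau))\d\tau.
\end{equation*}
Using \eqref{eq:conv_initial}, \ref{hyp:E4}, and a Gronwall argument, every term on the left-hand side is bounded uniformly in $\eps$; in particular the dissipation integral, combined with \ref{prop:regR2}, yields $\int_0^T|\xepsd|\d\tau\le C$, and the uniform convexity \ref{hyp:E2} bounds $|\piz(\xeps)|_Z$, so altogether $\xeps$ is uniformly bounded in $X$.

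Applying the generalized Helly selection principle developed in Section~\ref{secACBV}, I would extract a subsequence $\xepsj$ converging pointwise to some $x\in BV_\RR([0,T];X)$; uniform convergence then follows from the $L^1$-equicontinuity encoded in the uniform bound on $\int_0^T\RR(\tau,\xepsjd)\d\tau$ together with \ref{prop:regR2}. Lower semicontinuity of $V_\RR$ in \eqref{rvar} and continuity of $\EE$ allow passage to the limit in the energy identity, yielding the $\le$-direction of \ref{WEB},
\begin{equation*}
\EE(t,x(t)) + V_\RR(x;0,t) \le \EE(0,x_0) + \int_0^t\tfrac{\partial}{\partial t}\EE(\tau,x(\tau))\d\tau,
\end{equation*}
together with items (c) and (d), since the inertial and viscous contributions must actually vanish in the limit for all the surviving terms to be compatible with this inequality.

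The decisive step is the global stability \ref{GS}. Exploiting the convexity of $\EE(t,\cdot)$ inherited from \ref{hyp:E2}, for any competitor $v\in X$ and any $t\in[0,T]$ one has $\EE(t,\xeps(t))-\EE(t,v)\le\scal{D_x\EE(t,\xeps(t))}{\xeps(t)-v}$; substituting $D_x\EE$ from the differential inclusion and using subadditivity together with one-homogeneity of $\RR(t,\cdot)$, the right-hand side can be bounded by $\RR(t,v-\xeps(t))$ plus residuals coming from $\eps^2\M\xepsdd$ and $\eps\V\xepsd$. These residuals are then handled by averaging in time on a small window around $t$ (where assumption \ref{hyp:E6}, or the explicit quadratic structure \ref{hyp:QE}, guarantees that the averaged energy term is close to the pointwise one) and by exploiting the already-established uniform smallness of $\eps|\xepsd|_\M$ and of $\eps\int|\xepsd|_\V^2\d\tau$ to kill the inertial and viscous averages. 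Passing to the limit along $\epsj$ yields \ref{GS}; combined with the $\le$-inequality above and the standard lower energy estimate for energetically stable trajectories (see~\cite{MielkRoubbook}), this upgrades \ref{WEB} to equality and gives (b). Uniform convergence (a) has already been obtained. Finally, under \ref{hyp:R5}, absolute continuity of $\piz(x)$---which follows from the Lipschitz-in-time nature of $\RR|_K$ via \ref{prop:regR3} and \ref{WEB}---is lifted to absolute continuity of $x$ itself via the bounded selection in $K$ afforded by \ref{hyp:R5}; in the autonomous case the lift is unnecessary, as the absence of jumps follows directly from \ref{WEB} and \ref{GS}.

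The main obstacle is the passage from the local, pointwise dynamic inclusion to the global comparison condition \ref{GS}, because we must transform an infinitesimal balance into a statement quantifying over arbitrary competitors $v\in X$. This is precisely where convexity of $\EE$ enters essentially (it is not needed for the a priori estimates nor for \ref{WEB}), and where the auxiliary assumption \ref{hyp:E6} (or \ref{hyp:QE}) is required to make the averaging-in-time procedure rigorous.
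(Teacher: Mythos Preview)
Your overall strategy is correct, but two steps are misordered and one is incorrectly justified.

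First, the uniform convergence claim is premature. A uniform bound on $\int_0^T\RR(\tau,\xepsjd)\d\tau$---equivalently, on the total variation---gives no equicontinuity whatsoever (think of steepening step functions), so Helly yields only pointwise convergence at this stage. In the paper uniform convergence is obtained only \emph{after} the limit $x$ is known to be an energetic solution: continuity of $x$ then follows from Proposition~\ref{regularity}, and Lemma~\ref{uniformconv} upgrades pointwise to uniform convergence once the $\RR$-variations are shown to converge at every point (which in turn uses the limit energy balance). Incidentally, the a priori bound on $|\xeps|$ does not come from uniform convexity of $\EE_\mathrm{sh}$---that controls only $|\piz(\xeps)|_Z$, not the component in $\ker\piz$---but from integrating $|\xepsd|$ and using \ref{hyp:R2}; see Corollary~\ref{unifbound}.

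Second, you have swapped the roles of the two limit passages. Global stability \ref{GS} is the \emph{easy} part and does not require \ref{hyp:E6}: one integrates the local condition \ref{LSeps}, where the test vector $v$ is constant, so the inertial contribution becomes the pure boundary term $\eps_j^2\scal{\M(\xepsjd(t)-\xepsjd(s))}{v}=O(\eps_j)$ and vanishes; convexity then upgrades the resulting local stability to \ref{GS}. Your route, testing against $v-\xeps(\tau)$ and averaging, produces after integration by parts the term $\frac{\eps^2}{h}\int|\xepsd|_\M^2\d\tau$, which is bounded but not small from the a priori estimates alone. The place where \ref{hyp:E6} or \ref{hyp:QE} is genuinely needed is precisely the step you label ``standard'': the upper energy estimate (the $\geq$ direction of \ref{WEB}). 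That estimate is obtained by partitioning $[0,t]$, applying \ref{GS} telescopically, and controlling the Riemann-type sum $\sum_k\int_{t_{k-1}}^{t_k}\frac{\partial}{\partial t}\EE(\tau,x(t_k))\d\tau$; since $\frac{\partial}{\partial t}\EE(\cdot,z)$ is only $L^1$ in time, convergence of this sum to $\int_0^t\frac{\partial}{\partial t}\EE(\tau,x(\tau))\d\tau$ requires the additional $t$-equicontinuity provided by \ref{hyp:E6}, or the explicit structure of \ref{hyp:QE} (Lemmata~\ref{lemmaUEE} and~\ref{lemma: QE}).
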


We remark that assumption \eqref{eq:conv_initial} can be relaxed to the boundedness of the sequences. In such a case, as we argue in Theorem~\ref{finalthm}, we obtain similar results with energetic solutions having a (possible) jump in $t=0$.



\section{Existence of solutions for the dynamic problem}\label{sec:dynamicproblem}

This section is devoted to the analysis of the dynamic problem \eqref{dynprob} and to the proof of an existence result under the main assumptions \ref{hyp:E1}, \ref{hyp:E3}--\ref{hyp:E5} and \ref{hyp:RK}. Convexity, i.e. \ref{hyp:E2}, here is not needed. Condition \ref{hyp:E7} will be also added to obtain uniqueness of differential solutions, see Theorem~\ref{existencedyn}. Of course in this section the parameter $\eps>0$ is fixed; however, since some results we obtain here will be useful also in the rest of the paper where $\eps$ is sent to $0$, for the sake of brevity we prefer to assume that the initial data are uniformly bounded in $\eps$. Namely we require there exists a positive constant $\Lambda>0$ for which
\begin{equation}\label{unifdata}
    \abs{x^\eps_0}\le\Lambda,\quad\text{ and }\quad\abs{\eps x^\eps_1}\le\Lambda,\qquad\text{for every }\eps>0.
\end{equation}
Before starting the analysis we recall the following Gr\"onwall-type estimate:
\begin{lemma}[\textbf{Gr\"onwall inequality}]\label{Gronwall}
	Let $f\colon[a,b]\to [0,+\infty)$ be a bounded measurable function such that
	\begin{equation}\label{esthyp}
	f(t)\le C+\int_{a}^{t}\omega(f(\tau))g(\tau)\d\tau,\quad\text{ for every }t\in[a,b],
	\end{equation}
	where $C>0$ is a positive constant, $\omega\colon[0,+\infty)\to[0,+\infty)$ is a nondecreasing continuous function such that $\omega(x)>0$ if $x>0$, and $g\in L^1(a,b)$ is nonnegative.\par 
	Then it holds:
	\begin{equation*}
	f(t)\le\varphi^{-1}\left(\varphi(C)+\int_{a}^{t}g(\tau)\d\tau\right),\quad\text{ for every }t\in[a,b],
	\end{equation*}
	where $\displaystyle\varphi(t):=\int_{1}^{t}\frac{1}{\omega(\tau)}\d\tau$.
\end{lemma}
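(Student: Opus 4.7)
The plan is to follow the standard Bihari argument. First, I would introduce the auxiliary function
\begin{equation*}
F(t):=C+\int_{a}^{t}\omega(f(\tau))g(\tau)\d\tau,
\end{equation*}
noting that the integrand is measurable (since $\omega$ is continuous and $f$ is measurable) and integrable (since $f$ is bounded and $\omega$ is continuous, so $\omega\circ f$ is bounded, while $g\in L^1$). Thus $F\in AC([a,b])$, $F(a)=C>0$, $F$ is nondecreasing (so in particular $F\geq C$ on $[a,b]$), and $F'(t)=\omega(f(t))g(t)$ for a.e.\ $t$. The hypothesis \eqref{esthyp} reads $f\le F$ pointwise; since $\omega$ is nondecreasing, this yields $\omega(f(t))\le \omega(F(t))$, and therefore
\begin{equation*}
F'(t)\le \omega(F(t))g(t), \qquad \text{for a.e.\ }t\in[a,b].
\end{equation*}

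Next I would divide by $\omega(F(t))$, which is legitimate because $F(t)\ge C>0$ and $\omega$ is strictly positive on $(0,+\infty)$. The key observation is that $\varphi(\tau)=\int_1^\tau 1/\omega(s)\,\d s$ is well defined and $C^1$ on $(0,+\infty)$ with $\varphi'(\tau)=1/\omega(\tau)>0$, so $\varphi$ is strictly increasing and admits a continuous inverse $\varphi^{-1}$ on its image. Moreover, restricted to the compact range $[C,\max_{[a,b]}F]\subset (0,+\infty)$, the function $\varphi$ is Lipschitz (because $1/\omega$ is continuous hence bounded there). Consequently $\varphi\circ F\in AC([a,b])$ with $(\varphi\circ F)'(t)=F'(t)/\omega(F(t))$ a.e., and the differential inequality above gives
\begin{equation*}
(\varphi\circ F)'(t)\leq g(t), \qquad \text{for a.e.\ }t\in[a,b].
\end{equation*}

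Finally, I would integrate from $a$ to $t$, using $\varphi(F(a))=\varphi(C)$, to obtain
\begin{equation*}
\varphi(F(t))\le \varphi(C)+\int_{a}^{t}g(\tau)\d\tau,
\end{equation*}
and then apply the (monotone) $\varphi^{-1}$ to both sides, which yields the bound on $F(t)$ and, via $f\le F$, the desired estimate on $f(t)$. The only point requiring care is justifying the chain rule for $\varphi\circ F$, which is handled by the remark above that $F$ takes values in a compact subinterval of $(0,+\infty)$ on which $\varphi$ is Lipschitz; I do not expect any other obstacle, the argument being essentially a separation-of-variables computation made rigorous by absolute continuity.
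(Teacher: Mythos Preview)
Your argument is correct and essentially identical to the paper's: the only difference is that the paper defines the auxiliary function without the additive constant (so $F(a)=0$ and one works with $C+F$), and performs the key step via the change of variables $\int_a^t \dot F/\omega(C+F)\,\d\tau=\int_C^{C+F(t)}\d s/\omega(s)=\varphi(C+F(t))-\varphi(C)$ rather than invoking the chain rule for $\varphi\circ F$. Your extra care in justifying that $\varphi$ is Lipschitz on the compact range of $F$ is a welcome detail, but the two proofs are the same Bihari separation-of-variables computation.
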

\begin{proof}
	We consider the auxiliary function $F(t):=\int_{a}^{t}\omega(f(\tau))g(\tau)\d\tau$. Since $f$ is bounded, $F$ is absolutely continuous in $[a,b]$ and $F(a)=0$. Moreover by \eqref{esthyp} we deduce:
	\begin{equation*}
	\dot{F}(\tau)=\omega(f(\tau))g(\tau)\le \omega(C+F(\tau))g(\tau),\quad\text{ for a.e. }\tau\in[a,b].
	\end{equation*}
	From the above inequality we thus infer for every $t\in[a,b]$:
	\begin{align*}
	\int_{a}^{t}g(\tau)\d\tau&\ge\int_{a}^{t}\frac{\dot{F}(\tau)}{\omega(C+F(\tau))}\d\tau=\int_{C}^{C+F(t)}\frac{1}{\omega(\tau)}\d\tau=\varphi(C+F(t))-\varphi(C)\\
	&\ge \varphi(f(t))-\varphi(C),
	\end{align*}
	where in the last inequality we used again \eqref{esthyp} and exploited the monotonicity of $\varphi$. Hence we conclude.
\end{proof}

For a reason which will be clear later, to develop all the arguments of this section we need to introduce a truncated version of the elastic energy $\EE$. We argue as follows: for every $\rho\in(0,+\infty)$, let $\lambda^\rho\colon[0,+\infty)\to[0,\rho+1]$ be a $\CC^\infty$, monotone increasing, concave function such that $\lambda^\rho(r)=r$ for $r\leq \rho$ and let us consider the truncated energies
\begin{equation}
\EE^\rho(t,x)=\EE\left(t,\sigma_\rho(x)\right),
 \quad \text{where $\sigma_\rho(x):=\frac{\lambda^\rho(\abs x) x}{\abs{x}}$},
\end{equation}
setting in the limit case $\EE^{+\infty}\equiv\EE$. Notice that $\sigma_\rho$ is the identity on $\overline{\BB^X_\rho}$ and that the Jacobian of $\sigma_\rho$ at each point has (operator) norm less or equal to one.

We observe that the new functions $\EE^\rho$ cannot be expressed any longer as function of $(t,\piz(x))$. Yet they inherit many of the regularity properties of $\EE$ and $\EE_\mathrm{sh}$. Indeed we observe that, by \ref{hyp:E1} and \ref{hyp:E3}, the functions $\EE^\rho$ and $D_x \EE^\rho$ are continuous in $[0,T]\times X$, while from \ref{hyp:E1} and \ref{hyp:E5} we get that $\frac{\partial}{\partial t} \EE^\rho$ is a Caratheodory function.  Moreover, by \ref{hyp:E4} it holds
\begin{equation}\label{eq:Erho4}
\left|\frac{\partial}{\partial t} \EE^\rho(t,x)\right|\le\omega(\EE^\rho(t,x))\gamma(t), \qquad \text{for a.e.~$t\in[0,T]$ and for every $x\in X$},
\end{equation}
where $\omega$ and $\gamma$ are the same of \ref{hyp:E4} and in particular do not depend on $\rho$.
Furthermore, by compactness and the properties of $\sigma_\rho$, if $\rho\in(0,+\infty)$ then we get that $D_x\EE^\rho$ is bounded on the whole $[0,T]\times X$, namely there exists a constant $C_\rho>0$ such that
\begin{equation}\label{DxErho}
   \sup\limits_{(t,x)\in [0,T]\times X} |D_x\mc E^\rho(t,x)|_*\le C_\rho.
\end{equation}
The above estimate is the main reason why we introduced the truncated energy; it will be indeed crucial in the proof of Proposition~\ref{prop:trunc_exist}.\par 
If in addition also \ref{hyp:E7} holds, we deduce that there exists a function $\widetilde\varsigma_\rho\in L^1(0,T)$ such that
\begin{equation}\label{eq:Erho7}
|D_x\EE^\rho(t,x_1)-D_x\EE^\rho(t,x_2)|_*\le\widetilde\varsigma_\rho(t)\abs{x_1-x_2},
\end{equation}
for a.e. $t\in [0,T]$, and every $x_1,x_2\in \overline{\BB^X_\rho}$.

\medskip

Let us thus consider the approximated problems
	\begin{equation}\label{eq:trun_dynprob}
\begin{cases}
\eps^2 \mathbb{M}\xepsdd(t)+\eps \mathbb{V}\xepsd(t)+\partial_v \mc R(t,\xepsd(t))+D_x\mc E^\rho(t,\xeps(t))\ni 0,\\
\xeps(0)=x^\eps_0,\quad\xepsd(0)=x^\eps_1,
\end{cases}
\end{equation}
where for the sake of clarity we do not stress the dependence on $\rho$ of the solution. We recall that we are always assuming \ref{hyp:E1}, \ref{hyp:E3}--\ref{hyp:E5}, \ref{hyp:RK} and considering $\mathbb{M},\mathbb{V}$ as in Section~\ref{sec:Setting}, in particular satisfying \eqref{boundsassumption1} and \eqref{boundsassumption2}.\par
As a first step we present an alternative formulation of \eqref{eq:trun_dynprob}, based on the definition of subdifferential. We emphasize that the following results, where not otherwise explicitly stated, hold also for the original dynamic problem \eqref{dynprob}, corresponding to  $\rho=+\infty$. In particular, the uniform estimates with respect to the initial data of Corollary~\ref{unifbound} for the original dynamic problem will be employed later in the paper.

\begin{prop}\label{equivdyn}For every $\eps>0$ and $\rho\in(0,+\infty]$, a function $\xeps\in W^{2,1}(0,T; X)$ is a differential solution of \eqref{eq:trun_dynprob} if and only if initial data are attained and the following dynamic local stability condition and dynamic energy balance hold true:
\begin{enumerate}[label=\textup{(LS$^\eps$)}]
		\item \label{LSeps} for a.e.~time $t\in[0,T]$ and for every $v\in X$
		\begin{equation*}
		\mc R(t,v)+\langle D_x\mc E^\rho(t,\xeps(t))+\eps^2 \mathbb{M}\xepsdd(t)+\eps\mathbb{V}\xepsd(t),v\rangle\ge 0;
		\end{equation*}
\end{enumerate}
\begin{enumerate}[label=\textup{(EB$^\eps$)}]
		\item \label{EBeps} for every $t\in[0,T]$
		\begin{align*}
		&\quad\,\frac{\eps^2}{2}|\xepsd(t)|^2_{\mathbb{M}}+\mc E^\rho(t,\xeps(t))+\int_{0}^{t}\!\!\mc R(\tau,\xepsd(\tau))\d\tau+\eps\int_{0}^{t}|\xepsd(\tau)|^2_{\mathbb{V}}\d\tau\\
		&=\frac{\eps^2}{2}|x^\eps_1|^2_{\mathbb{M}}+\mc E^\rho(0,x^\eps_0)+\int_{0}^{t}\frac{\partial}{\partial t}\mc E^\rho(\tau,\xeps(\tau))\d\tau.
		\end{align*}
	\end{enumerate}
\end{prop}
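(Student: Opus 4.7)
The plan is to prove both directions simultaneously by exploiting a well-known characterization of the subdifferential of a convex, positively homogeneous of degree one function vanishing at the origin. By Corollary~\ref{propertiesR}\ref{prop:regR1}, for $\RR(t,\cdot)$ this characterization reads: $\xi\in\partial_v\RR(t,v)$ if and only if
\begin{equation*}
\langle\xi,w\rangle\le\RR(t,w)\text{ for every }w\in X, \qquad\text{ and }\qquad \langle\xi,v\rangle=\RR(t,v).
\end{equation*}
The first condition is equivalent to $\xi\in\partial_v\RR(t,0)$. I will apply this to the covector
\begin{equation*}
\xi^\eps(t):=-\eps^2\mathbb{M}\xepsdd(t)-\eps\mathbb{V}\xepsd(t)-D_x\mc E^\rho(t,\xeps(t)),
\end{equation*}
which belongs to $L^1(0,T;X^*)$ since $\xeps\in W^{2,1}(0,T;X)$ and $D_x\mc E^\rho$ is continuous.

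For the forward implication, assuming $\xeps$ solves the inclusion, we have $\xi^\eps(t)\in\partial_v\RR(t,\xepsd(t))$ for a.e.~$t$. The first part of the characterization, after rearrangement, is precisely condition \ref{LSeps} (and it implicitly ensures $\xepsd(t)\in K$ a.e.~via the finiteness of the subdifferential). The second part provides the pointwise identity
\begin{equation*}
\eps^2\langle\mathbb{M}\xepsdd(t),\xepsd(t)\rangle+\eps|\xepsd(t)|_{\mathbb{V}}^2+\langle D_x\mc E^\rho(t,\xeps(t)),\xepsd(t)\rangle+\RR(t,\xepsd(t))=0.
\end{equation*}
Integrating on $[0,t]$, the symmetry of $\mathbb{M}$ gives $\int_0^t\langle\mathbb{M}\xepsdd,\xepsd\rangle\d\tau=\tfrac{1}{2}|\xepsd(t)|_{\mathbb{M}}^2-\tfrac{1}{2}|x_1^\eps|_{\mathbb{M}}^2$, while the third term is rewritten through the chain rule
\begin{equation*}
\frac{d}{d\tau}\mc E^\rho(\tau,\xeps(\tau))=\frac{\partial}{\partial t}\mc E^\rho(\tau,\xeps(\tau))+\langle D_x\mc E^\rho(\tau,\xeps(\tau)),\xepsd(\tau)\rangle,
\end{equation*}
valid at a.e.~$\tau$: indeed $\xeps$ is absolutely continuous and $\mc E^\rho$ inherits from $\mc E$ the regularity properties \ref{hyp:E1}, \ref{hyp:E3} and (via \eqref{eq:Erho4}) \ref{hyp:E4}, which, exactly as in the computation at the end of Section~\ref{sec:Setting} for $\mc E$, imply that $\tau\mapsto\mc E^\rho(\tau,\xeps(\tau))$ is absolutely continuous with the claimed derivative. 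Rearranging gives \ref{EBeps}.

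For the reverse implication I assume \ref{LSeps} and \ref{EBeps}. Condition \ref{LSeps} is exactly $\langle\xi^\eps(t),w\rangle\le\RR(t,w)$ for every $w\in X$ at a.e.~$t$, so the first part of the subdifferential characterization is in place; in particular $\xepsd(t)\in K$ for a.e.~$t$ by finiteness of $\RR(t,\xepsd(t))$ granted by \ref{EBeps}. To close the argument I need $\langle\xi^\eps(t),\xepsd(t)\rangle=\RR(t,\xepsd(t))$. Both sides of \ref{EBeps} are absolutely continuous in $t$ (the kinetic and viscous terms because $\xeps\in W^{2,1}$, the term $\int_0^\cdot\RR(\tau,\xepsd(\tau))\d\tau$ by $L^1$ bounds from \ref{hyp:R2}, and $\mc E^\rho(\cdot,\xeps(\cdot))$ via the same chain-rule argument); differentiating at a.e.~$t$ and using once more the chain rule reproduces the pointwise identity of the previous paragraph, which is precisely $\langle\xi^\eps(t),\xepsd(t)\rangle=\RR(t,\xepsd(t))$. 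The characterization then yields $\xi^\eps(t)\in\partial_v\RR(t,\xepsd(t))$, i.e.~the differential inclusion in \eqref{eq:trun_dynprob}. The main technical point throughout is the chain rule for $t\mapsto\mc E^\rho(t,\xeps(t))$, but this is already essentially proved in the excerpt immediately preceding \eqref{eq:trun_dynprob}, so the argument is largely a bookkeeping exercise around the homogeneity-based subdifferential characterization.
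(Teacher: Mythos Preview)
Your proof is correct and follows essentially the same strategy as the paper. The only cosmetic difference is that you invoke the standard characterization of the subdifferential of a positively one-homogeneous convex function as a known fact, whereas the paper derives it inline (via the scaling $\tilde v=nv$ and the choice $\tilde v=0$); in both cases one reduces the inclusion to \ref{LSeps} together with the pointwise power balance, and then passes between the latter and \ref{EBeps} by integration/differentiation using the chain rule for $\tau\mapsto\mc E^\rho(\tau,\xeps(\tau))$.
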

\begin{proof}
	By definition of subdifferential we deduce that $\xeps\in W^{2,1}(0,T; X)$ is a differential solution of \eqref{eq:trun_dynprob} if and only if initial data are attained and for a.e. $t\in [0,T]$ and for every $\tilde{v}\in X$ it holds:
	\begin{equation}\label{equivsub}
	\begin{aligned}
	&\quad\,\mc R(t,\tilde v)+\langle D_x\mc E^\rho(t,\xeps(t))+\eps^2 \mathbb{M}\xepsdd(t)+\eps\mathbb{V}\xepsd(t),\tilde v\rangle\\
	&\ge\mc R(t,\xepsd(t))+\langle D_x\mc E^\rho(t,\xeps(t))+\eps^2\mathbb{M}\xepsdd(t)+\eps\mathbb{V}\xepsd(t),\xepsd(t)\rangle.
	\end{aligned}		 
	\end{equation}
	We thus conclude if we show that \eqref{equivsub} is equivalent to \ref{LSeps} and \ref{EBeps}. \par 
	We first assume that \eqref{equivsub} holds true. We fix $v\in X$ and we choose $\tilde{v}=nv$, with $n\in\N$; by means of the one homogeneity of $\mc R(t,\cdot)$ and letting $n\to +\infty$ we deduce the validity of \ref{LSeps}. Choosing $\tilde{v}=0$ and exploiting \ref{LSeps}, we instead get the following local energy balance (also called power balance):
\begin{enumerate}[label=\textup{(LEB$^\eps$)}]
		\item \label{LEBeps} for a.e.~time $t\in[0,T]$ it holds
		\begin{equation*}
		\mc R(t,\xepsd(t))+\langle D_x\mc E^\rho(t,\xeps(t))+\eps^2 \mathbb{M}\xepsdd(t)+\eps\mathbb{V}\xepsd(t),\xepsd(t)\rangle=0.
		\end{equation*}
	\end{enumerate}
	Integrating \ref{LEBeps} between $0$ and $t$ we finally get \ref{EBeps}. Indeed we recall that, since $\xeps$ is absolutely continuous, the map $\mc E^\rho(\cdot,\xeps(\cdot))$ is absolutely continuous too and $\frac{\partial}{\partial t}\mc E^\rho(\cdot,\xeps(\cdot))$ is summable in $[0,T]$.\par 
	We now assume that \ref{LSeps} and \ref{EBeps} hold true. By differentiating \ref{EBeps} we easily get \ref{LEBeps}; combining it with \ref{LSeps} we thus obtain \eqref{equivsub} and we conclude.
\end{proof}
Thanks to the energy balance \ref{EBeps} we are able to infer the following uniform bound of the involved energy along a differential solution. As we said before we assume that the initial data are uniformly bounded with respect to $\eps$ since this result will be useful also for the next sections.
\begin{prop}\label{energybound}
	Assume that the initial data satisfy \eqref{unifdata} and let $\xeps$ be a differential solution of \eqref{eq:trun_dynprob}. Then there exists a positive constant $\widetilde C_\Lambda>0$, independent of $\eps>0$ and of $\rho\in(0,+\infty]$, such that:
	\begin{equation}\label{unifest}
	\frac{\eps^2}{2}|\xepsd(t)|^2_{\mathbb{M}}+\mc E^\rho(t,\xeps(t))+\int_{0}^{t}\mc R(\tau,\xepsd(\tau))\d\tau+\eps \int_{0}^{t}|\xepsd(\tau)|^2_{\mathbb{V}}\d\tau\le \widetilde C_\Lambda,\quad\text{ for every }t\in[0,T].
	\end{equation}
\end{prop}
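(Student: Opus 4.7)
The plan is to combine the dynamic energy balance \ref{EBeps} of Proposition~\ref{equivdyn} with the Gr\"onwall-type inequality of Lemma~\ref{Gronwall}. I would introduce
\begin{equation*}
f(t) := \frac{\eps^2}{2}|\xepsd(t)|^2_{\mathbb{M}}+\mc E^\rho(t,\xeps(t))+\int_{0}^{t}\mc R(\tau,\xepsd(\tau))\d\tau+\eps \int_{0}^{t}|\xepsd(\tau)|^2_{\mathbb{V}}\d\tau,
\end{equation*}
i.e.\ the quantity we want to control. Since $\mathbb{M}$ and $\mathbb{V}$ are positive semidefinite and $\mc R\ge 0$ by \ref{prop:regR2}, every summand is nonnegative; in particular $\EE^\rho(t,\xeps(t))\le f(t)$ on $[0,T]$. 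Moreover $f$ is bounded on $[0,T]$: indeed $\xepsd$ is continuous because $\xeps\in W^{2,1}(0,T;X)$, while the integrands of the last two terms are dominated by $\alpha^*\abs{\xepsd(\tau)}$ and $V\abs{\xepsd(\tau)}^2$ respectively, via \ref{prop:regR2} and \eqref{boundsassumption2}. This makes $f$ eligible for Lemma~\ref{Gronwall}.

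Next, rewriting $f$ by means of \ref{EBeps} I get
\begin{equation*}
f(t) = \frac{\eps^2}{2}|x^\eps_1|^2_{\mathbb{M}} + \EE^\rho(0, x_0^\eps) + \int_0^t \frac{\partial}{\partial t}\EE^\rho(\tau,\xeps(\tau))\d\tau,
\end{equation*}
and I would bound each term uniformly in $\eps$ and $\rho$. The initial kinetic term is at most $\frac{M}{2}\Lambda^2$ by \eqref{boundsassumption1} and \eqref{unifdata}. For the initial elastic term, the key observation is that $\abs{\sigma_\rho(x)}\le \abs{x}$ for every $\rho\in(0,+\infty]$, which follows from $\lambda^\rho(r)=r$ for $r\le\rho$ combined with concavity of $\lambda^\rho$ (so its slope past $\rho$ stays $\le 1$). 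Hence $\abs{\sigma_\rho(x_0^\eps)}\le\Lambda$, and continuity of $\EE(0,\cdot)$ on the compact ball $\overline{\BB^X_\Lambda}$ yields $\EE^\rho(0,x_0^\eps)\le M_\Lambda$ for a constant $M_\Lambda$ depending only on $\Lambda$. Setting $C_\Lambda:=\frac{M}{2}\Lambda^2 + M_\Lambda$, the initial data are bounded by $C_\Lambda$ independently of $\eps$ and $\rho$.

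For the time-derivative integral I would invoke \eqref{eq:Erho4}, the monotonicity of $\omega$ and the inequality $\EE^\rho(\tau,\xeps(\tau))\le f(\tau)$, obtaining
\begin{equation*}
f(t)\le C_\Lambda + \int_0^t \omega(f(\tau))\gamma(\tau)\d\tau,\qquad t\in[0,T].
\end{equation*}
Applying Lemma~\ref{Gronwall} to the bounded measurable function $f$ then produces $f(t)\le \widetilde C_\Lambda := \varphi^{-1}\bigl(\varphi(C_\Lambda) + \norm{\gamma}_{L^1(0,T)}\bigr)$, a constant depending only on $\Lambda$, $\omega$, $\gamma$, $M$ and $T$, hence neither on $\eps$ nor on $\rho$; this is exactly \eqref{unifest}.

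No serious obstacle is expected. The mildly delicate point is the $\rho$-uniformity of the initial-energy bound, settled by the structural property $\abs{\sigma_\rho(x)}\le\abs{x}$ of the truncation. A secondary technicality is that Lemma~\ref{Gronwall} formally requires $\omega(s)>0$ for $s>0$; if \ref{hyp:E4} does not ensure this, one may replace $\omega$ by $\omega+\delta$ for $\delta>0$, apply the lemma, and let $\delta\to 0^+$ to recover the same bound by continuous dependence of $\varphi^{-1}$ on the relevant data.
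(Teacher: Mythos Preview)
Your proposal is correct and follows essentially the same approach as the paper: apply the energy balance \ref{EBeps}, bound the initial data uniformly, use \eqref{eq:Erho4} together with $\EE^\rho\le f$ and the monotonicity of $\omega$, and conclude via Lemma~\ref{Gronwall}. You are in fact more careful than the paper on two points it leaves implicit---the $\rho$-uniform bound on $\EE^\rho(0,x_0^\eps)$ via $|\sigma_\rho(x)|\le|x|$, and the positivity of $\omega$ needed in Lemma~\ref{Gronwall}---and both your treatments are sound.
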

\begin{proof}
	We denote by $\mc F^\eps(t)$ the left-hand side of \eqref{unifest}. By means of the energy balance \ref{EBeps}, together with the estimates \eqref{boundsassumption1} and \eqref{eq:Erho4}, we deduce that the following inequality holds true for every $t\in[0,T]$:
	\begin{align*}
	\mc F^\eps (t)&=\frac{\eps^2}{2}|x^\eps_1|^2_{\mathbb{M}}+\mc E^\rho(0,x^\eps_0)+\int_{0}^{t}\frac{\partial}{\partial t}\mc E^\rho(\tau,\xeps(\tau))\d\tau\le  C+\int_{0}^{t}\omega(\mc E^\rho(\tau,\xeps(\tau)))\gamma(\tau)\d\tau\\
	&\le  C+\int_{0}^{t}\omega(\mc F^\eps(\tau))\gamma(\tau)\d\tau.
	\end{align*}
	We now conclude by means of Lemma~\ref{Gronwall}.
\end{proof}
As a simple corollary we deduce:
\begin{cor}\label{unifbound}
	Assume that  the initial data satisfy \eqref{unifdata} and let $\xeps$ be a differential solution of \eqref{eq:trun_dynprob}. Then there exists a positive constant $C_\Lambda>0$, independent of $\eps>0$ and of $\rho\in(0,+\infty]$, such that:
	\begin{enumerate}[label=\textup{(\roman*)}]
		\item \label{prop:stima_i} $\max\limits_{t\in[0,T]}|\xeps(t)|< C_\Lambda$;
		\item \label{prop:stima_ii} $\displaystyle \int_{0}^{T}\mc R(\tau,\xepsd(\tau))\d\tau< C_\Lambda$;
		\item \label{prop:stima_iii} $\max\limits_{t\in[0,T]}\eps|\xepsd(t)|_{\mathbb{M}}< C_\Lambda$.
	\end{enumerate}
\end{cor}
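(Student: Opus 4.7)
The plan is to read off \ref{prop:stima_ii} and \ref{prop:stima_iii} directly from the energy bound \eqref{unifest} of Proposition \ref{energybound}, and then derive \ref{prop:stima_i} by integrating the velocity and exploiting the coercivity of $\mc R$ on the cone $K$.

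First, since the summand $\int_0^t \mc R(\tau,\xepsd(\tau))\d\tau$ appearing on the left-hand side of \eqref{unifest} is non-negative, it is bounded above by $\widetilde{C}_\Lambda$ uniformly in $t\in[0,T]$, $\eps>0$ and $\rho\in(0,+\infty]$; this yields \ref{prop:stima_ii}. In the same way the term $\frac{\eps^2}{2}\abs{\xepsd(t)}_\M^2$ is bounded by $\widetilde{C}_\Lambda$, which gives $\eps\abs{\xepsd(t)}_\M\leq\sqrt{2\widetilde{C}_\Lambda}$ uniformly in $t$, so \ref{prop:stima_iii} holds with any constant strictly larger than $\sqrt{2\widetilde{C}_\Lambda}$.

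For \ref{prop:stima_i}, since $\xeps\in W^{2,1}(0,T;X)$ is a differential solution of \eqref{eq:trun_dynprob}, the inclusion forces $\partial_v\RR(t,\xepsd(t))\neq\emptyset$ for a.e.~$t$, and in particular $\xepsd(t)\in K$ for a.e.~$t\in[0,T]$ (otherwise $\RR(t,\xepsd(t))=+\infty$ and the subdifferential would be empty). By the coercivity estimate \ref{prop:regR2} in Corollary \ref{propertiesR}, we then have $\alpha_*\abs{\xepsd(\tau)}\leq \mc R(\tau,\xepsd(\tau))$ for a.e.~$\tau\in[0,T]$. Combined with the triangle inequality and the uniform bound \eqref{unifdata} on the initial positions, this yields
\begin{equation*}
\abs{\xeps(t)}\leq \abs{x_0^\eps}+\int_0^t \abs{\xepsd(\tau)}\d\tau\leq \Lambda+\frac{1}{\alpha_*}\int_0^T \mc R(\tau,\xepsd(\tau))\d\tau\leq \Lambda+\frac{\widetilde{C}_\Lambda}{\alpha_*},
\end{equation*}
uniformly in $t\in[0,T]$, $\eps>0$, and $\rho\in(0,+\infty]$.

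Choosing $C_\Lambda$ strictly larger than the maximum of the three bounds obtained above yields the conclusion. The only non-routine point is justifying that $\xepsd(\tau)\in K$ almost everywhere, which however follows from the very meaning of differential solution combined with assumption \ref{hyp:RK}; no further argument is needed.
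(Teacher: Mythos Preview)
Your proof is correct and follows essentially the same approach as the paper: bounds \ref{prop:stima_ii} and \ref{prop:stima_iii} are read off directly from the energy estimate \eqref{unifest}, and \ref{prop:stima_i} is obtained by integrating the velocity, using that $\xepsd(t)\in K$ a.e.\ (since the subdifferential would otherwise be empty) together with the coercivity of $\mc R$ on $K$ and the bound on the initial position.
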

\begin{proof}
	The bounds in \ref{prop:stima_ii} and \ref{prop:stima_iii} simply follow from \eqref{unifest}. To get \ref{prop:stima_i} we recall that $\xeps$ belongs to $W^{2,1}([0,T];X)$, and hence by using  \ref{hyp:R2} we obtain:
	\begin{align*}
	|\xeps(t)|\le|x^\eps_0|+|\xeps(t)-x^\eps_0|\le \Lambda+\int_{0}^{t}|\xepsd(\tau)|\d\tau\le \Lambda+\frac{1}{\alpha_*}\int_{0}^{t}\mc R(\tau,\xepsd(\tau))\d\tau.
	\end{align*}
	We indeed notice that $\xepsd(t)$ is forced to live in $K$ for almost every time $t\in[0,T]$, otherwise $\partial_v\RR(t,\xepsd(t))$ would be empty or alternatively \ref{prop:stima_ii} could not be valid. Thus we conclude by \ref{prop:stima_ii}.
\end{proof}

Let us now recall a notion of normal cone suitable to our framework. For a convex subset  $\KK\subset X$ and a positive definite, symmetric linear operator $\A\colon X\to X^*$, we denote with $\NN^\A_\KK(x)$ the normal cone to the set $\KK$ in the point $x\in \KK$ with respect to the scalar product $\scal{\A \cdot}{\cdot}\colon X\times X\to \R$, namely
\begin{equation}
\NN^{\A}_\KK(x):=\{v\in X\mid\langle \A v, \tilde x-x\rangle\leq 0 \quad\text{for every $\tilde x\in \KK$}\}.
\end{equation}
If $x$ instead does not belong to $\KK$, for convention we set $\NN^{\A}_\KK(x):=\emptyset$. If finally the scalar product is the one endowed to the space, we simply write $\NN_\KK(x)$.

We also recall an existence and uniqueness result for the second order perturbed sweeping process, see \cite{Adly16}.
\begin{thm} \label{th:exist_2sweep}
	Let $E$ be an Euclidean space, $\KK\subseteq E$ a non-empty closed convex subset, and $F\colon[0,T]\times E\times \KK\rightrightarrows E$ an upper semicontinuous  set-valued map with non-empty compact convex values and satisfying for every $(t,\eta,\mu)\in [0,T]\times E\times \KK$ the bound
	\begin{equation*}
	F(t,\eta,\mu)\subseteq \beta(1+\abs{\eta}_E+\abs{\mu}_E)\BB_1^E,
	\end{equation*}
	where $\BB_1^E$ is the open unitary ball in $E$ centered at the origin.
 Then, for every $(\eta_0,\eta_1)\in E\times \KK$, the problem
	\begin{equation} \label{eq:sweep2order}
		\begin{cases}
		\ddot \eta(t)\in -\NN_\KK(\dot \eta(t))-F(t,\eta(t),\dot\eta(t)),\\
		\eta(0)=\eta_0, \qquad \dot{\eta}(0)=\eta_1,
		\end{cases}
	\end{equation}
	admits at least one differential solution, namely a function $\eta\in W^{2,1}(0,T; E)$ such that the differential inclusion holds true for a.e.~$t\in[0,T]$ and the initial data are attained. Moreover it actually holds  $\eta\in W^{2,\infty}(0,T; E)$.
\end{thm}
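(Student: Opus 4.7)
The plan is to construct a solution by Moreau's catching-up scheme adapted to the second-order setting, derive uniform a priori bounds, extract a convergent subsequence of the discrete approximations, and then pass to the limit by exploiting the closed-graph structure of $\NN_\KK$ together with the upper semicontinuity of $F$. First I would rewrite \eqref{eq:sweep2order} as a perturbed first-order sweeping process in the velocity $\xi=\dot\eta$:
\begin{equation*}
\dot\xi(t)\in -\NN_\KK(\xi(t))-F\!\left(t,\eta_0+\int_0^t\xi(s)\d s,\,\xi(t)\right),\qquad \xi(0)=\eta_1\in\KK,
\end{equation*}
which falls within the classical theory of sweeping processes with a time-independent convex set, since the $\eta$-dependence is absorbed into a Volterra-type perturbation.

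Next I would discretize in time with step $h=T/n$, setting $\xi_0^n=\eta_1$, $\eta_0^n=\eta_0$, and for each $k$ selecting any $f_k^n\in F(t_k,\eta_k^n,\xi_k^n)$ (which is non-empty by hypothesis) and then defining $\xi_{k+1}^n$ as the metric projection of $\xi_k^n-h f_k^n$ onto $\KK$ and $\eta_{k+1}^n=\eta_k^n+h\xi_{k+1}^n$. The projection characterization immediately yields $(\xi_{k+1}^n-\xi_k^n)/h+f_k^n\in -\NN_\KK(\xi_{k+1}^n)$, which is the discrete incarnation of the differential inclusion. Using non-expansiveness of the projection onto $\KK$ together with the linear growth bound $\abs{f_k^n}\le\beta(1+\abs{\eta_k^n}+\abs{\xi_k^n})$, a discrete Gr\"onwall argument gives uniform bounds on $\max_k(\abs{\xi_k^n}+\abs{\eta_k^n})$ independent of $n$, and hence a uniform bound on $\abs{f_k^n}$ as well.

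Consequently the piecewise affine interpolants $\xi^n,\eta^n$ are uniformly Lipschitz, so by Ascoli--Arzel\`a (in the finite dimensional space $E$) a subsequence converges uniformly to some $(\xi,\eta)$ with $\xi\in W^{1,\infty}(0,T;E)$ and $\eta\in W^{2,\infty}(0,T;E)$, and we may further assume $\dot\xi^n\wto\dot\xi$ weakly in $L^1$ (equivalently weakly-$*$ in $L^\infty$). The main obstacle is passing the differential inclusion to the limit, since $\NN_\KK$ is only closed-graph and $F$ only upper semicontinuous. The standard route is a Castaing--Valadier-type convergence theorem: the multifunction $(t,\eta,\xi)\mapsto -\NN_\KK(\xi)-F(t,\eta,\xi)$ has closed graph and convex values (as the sum of a closed-graph convex cone and an upper-semicontinuous compact-convex-valued map), hence Mazur's lemma applied to $\dot\xi^n$ together with the uniform convergence of $(\eta^n,\xi^n)$ yields $\dot\xi(t)\in -\NN_\KK(\xi(t))-F(t,\eta(t),\xi(t))$ for a.e.~$t\in[0,T]$. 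The attainment of the initial data is preserved by uniform convergence, and the $W^{2,\infty}$ regularity follows from the uniform essential bound on $\dot\xi^n$ transferred to the limit.
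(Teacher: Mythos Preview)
The paper does not actually prove this theorem: it is stated as a known result and attributed verbatim to \cite[Theorem 3.1]{Adly16} in the sentence immediately following the statements of Theorems~\ref{th:exist_2sweep} and~\ref{th:unique_2sweep}. So there is no ``paper's own proof'' to compare against; the authors simply import the result from the literature.

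Your sketch is the standard catching-up argument for second-order sweeping processes and is essentially the strategy one finds in the cited reference and its predecessors (Castaing, Monteiro~Marques, Kunze, etc.). The reduction to a first-order process in the velocity, the implicit Euler/projection step, the discrete Gr\"onwall bound exploiting $\xi_k^n\in\KK$ and $|\xi_{k+1}^n-\xi_k^n|\le h|f_k^n|$, and the Ascoli--Arzel\`a extraction are all correct. The only place where your outline is thin is the limit passage: the map $\xi\mapsto\NN_\KK(\xi)$ has unbounded (conical) values, so the ``closed graph plus convex values'' invocation needs to be supplemented by the observation that the normal-cone selections $\nu_k^n=-(\xi_{k+1}^n-\xi_k^n)/h-f_k^n$ are themselves uniformly bounded, which lets you extract a weak-$*$ limit $\nu$ separately and then use the maximal monotonicity (demiclosedness) of $\NN_\KK$ to conclude $\nu(t)\in\NN_\KK(\xi(t))$, while the upper semicontinuity and compact-convex values of $F$ handle the $f$-part via Mazur. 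With that refinement the argument goes through, and you recover exactly the result the paper quotes.
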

\begin{thm}\label{th:unique_2sweep}
	Under the assumptions of Theorem~\ref{th:exist_2sweep}, suppose in addition that there exists an open set $\UU\subseteq E$ such that 
	\begin{enumerate}[label=\textup{(j)}]
	\item\label{cond:j} every solution $\eta$ of \eqref{eq:sweep2order} satisfies $\eta(t)\in\UU$ for every $t\in[0,T]$;
	\end{enumerate}
\begin{enumerate}[label=\textup{(jj)}]
	\item\label{cond:jj} there exists a function $k\in L^1(0,T)$ such that
	\begin{equation*}
	\scal{f_1-f_2}{\mu_1-\mu_2}_E\geq -k(t)(\abs{\eta_1-\eta_2}_E^2+\abs{\mu_1-\mu_2}^2_E),
	\end{equation*}
	 for a.e.~$t\in[0,T]$ and for every $\eta_1,\eta_2\in \UU$, $\mu_1,\mu_2\in \KK$, $f_1\in F(t,\eta_1,\mu_1)$,  $f_2\in F(t,\eta_2,\mu_2)$. 
	\end{enumerate}
	 Then the solution of \eqref{eq:sweep2order} provided by Theorem~\ref{th:exist_2sweep} is unique.
\end{thm}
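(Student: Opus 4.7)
The plan is to adapt the classical Gronwall-type argument to the second-order sweeping structure. Suppose $\eta,\tilde\eta\in W^{2,1}(0,T;E)$ are two differential solutions of \eqref{eq:sweep2order} sharing the same initial data. From the differential inclusion, for almost every $t\in[0,T]$ we may pick selections $f(t)\in F(t,\eta(t),\dot\eta(t))$ and $\tilde f(t)\in F(t,\tilde\eta(t),\dot{\tilde\eta}(t))$ such that
\begin{equation*}
-\ddot\eta(t)-f(t)\in\NN_\KK(\dot\eta(t)),\qquad -\ddot{\tilde\eta}(t)-\tilde f(t)\in\NN_\KK(\dot{\tilde\eta}(t)),
\end{equation*}
which in particular forces $\dot\eta(t),\dot{\tilde\eta}(t)\in\KK$ for a.e.~$t$. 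By the very definition of normal cone, for every $\nu\in\KK$ and a.e.~$t$ we then have $\scal{\ddot\eta(t)+f(t)}{\nu-\dot\eta(t)}_E\ge 0$ and the analogous inequality for $\tilde\eta$.

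The key move is to test the first inclusion with $\nu=\dot{\tilde\eta}(t)\in\KK$ and the second with $\nu=\dot\eta(t)\in\KK$, then add the two resulting scalar inequalities. Exploiting the identity
\begin{equation*}
\frac{1}{2}\frac{\mathrm{d}}{\mathrm{d}t}\abs{\dot\eta(t)-\dot{\tilde\eta}(t)}_E^2=\scal{\ddot\eta(t)-\ddot{\tilde\eta}(t)}{\dot\eta(t)-\dot{\tilde\eta}(t)}_E,
\end{equation*}
valid a.e.~since the derivatives lie in $W^{1,1}(0,T;E)$, I would deduce
\begin{equation*}
\frac{1}{2}\frac{\mathrm{d}}{\mathrm{d}t}\abs{\dot\eta(t)-\dot{\tilde\eta}(t)}_E^2\le -\scal{f(t)-\tilde f(t)}{\dot\eta(t)-\dot{\tilde\eta}(t)}_E.
\end{equation*}
Now hypothesis \ref{cond:j} guarantees $\eta(t),\tilde\eta(t)\in\UU$ for every $t\in[0,T]$, so \ref{cond:jj} applied with $\mu_1=\dot\eta(t)$, $\mu_2=\dot{\tilde\eta}(t)$ yields the one-sided bound
\begin{equation*}
-\scal{f-\tilde f}{\dot\eta-\dot{\tilde\eta}}_E\le k(t)\bigl(\abs{\eta-\tilde\eta}_E^2+\abs{\dot\eta-\dot{\tilde\eta}}_E^2\bigr).
\end{equation*}

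Setting $\phi(t):=\abs{\eta(t)-\tilde\eta(t)}_E^2+\abs{\dot\eta(t)-\dot{\tilde\eta}(t)}_E^2$, which is absolutely continuous on $[0,T]$ with $\phi(0)=0$, and combining the above estimates with the trivial bound $\frac{\mathrm{d}}{\mathrm{d}t}\abs{\eta-\tilde\eta}_E^2=2\scal{\eta-\tilde\eta}{\dot\eta-\dot{\tilde\eta}}_E\le\phi$, I obtain
\begin{equation*}
\dot\phi(t)\le(1+2k(t))\phi(t),\qquad\text{for a.e.~}t\in[0,T].
\end{equation*}
Since $\phi(0)=0$, the classical linear Gronwall inequality then forces $\phi\equiv 0$ on $[0,T]$, proving $\eta\equiv\tilde\eta$. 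The one mildly delicate step is the substitution of the time-varying test vector $\nu(t)=\dot{\tilde\eta}(t)$ (resp.~$\dot\eta(t)$) into the normal-cone inequality: this is legitimate because that inequality holds pointwise for each fixed $t$ and every element of $\KK$, while the measurability of the selections $f,\tilde f$ and of $\ddot\eta,\ddot{\tilde\eta}$ ensures that the resulting pointwise bound is genuinely integrable.
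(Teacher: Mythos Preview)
Your argument is correct and is essentially the standard monotonicity--Gronwall proof for uniqueness in second-order sweeping processes. The paper itself does not spell out this computation: it simply records that the result is a straightforward corollary of \cite[Theorem~3.3]{Adly16}, the only additional remark being that once condition \ref{cond:j} confines all solutions to $\UU$, the one-sided Lipschitz bound \ref{cond:jj} need only be required there rather than globally. What you have written is precisely the argument behind that cited theorem, specialised to the present setting.

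One small comment on your final paragraph: the measurability of the selections $f,\tilde f$ is in fact not needed at all. You never integrate $f$ or $\tilde f$; you only use them, at each fixed $t$ where the inclusion holds, to derive the pointwise bound $\tfrac{1}{2}\tfrac{\mathrm d}{\mathrm dt}\abs{\dot\eta-\dot{\tilde\eta}}_E^2\le k(t)\phi(t)$. Since $t\mapsto\abs{\dot\eta(t)-\dot{\tilde\eta}(t)}_E^2$ is already known to be absolutely continuous (because $\dot\eta-\dot{\tilde\eta}\in W^{1,1}(0,T;E)$, in fact $W^{1,\infty}$ by Theorem~\ref{th:exist_2sweep}), the pointwise a.e.\ inequality on its derivative is all that is required to integrate and apply Gronwall.
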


The existence Theorem~\ref{th:exist_2sweep} is a special case of \cite[Theorem 3.1]{Adly16}.
The uniqueness Theorem~\ref{th:exist_2sweep} is instead a straightforward corollary of \cite[Theorem 3.3]{Adly16}, noticing that once a uniform bound \ref{cond:j} on the solutions is available, it is sufficient to require \ref{cond:jj} in a region $\UU$ where the solutions are contained.\par 
In the next proposition we translate these results in our framework, obtaining existence (and uniqueness) of solutions to \eqref{eq:trun_dynprob}, but only for $\rho\in (0,+\infty)$.
\begin{prop} \label{prop:trunc_exist}
	Fix $\eps>0$. For every initial values $x^\eps_0\in X$ and $x^\eps_1\in K$, and for every $\rho\in (0,+\infty)$, there exists at least a differential solution $\xeps\in W^{2,\infty}(0,T;X)$ to problem  \eqref{eq:trun_dynprob}.

Moreover, let us assume that also \ref{hyp:E7} holds. We take $\Lambda:=\max\{\abs{x^\eps_0},\abs{\eps x^\eps_1}\}$ and consider $C_\Lambda$ to be as in Corollary~\ref{unifbound}. Then for every  $\rho\in(C_\Lambda,+\infty)$ the solution of \eqref{eq:trun_dynprob} is unique.	
\end{prop}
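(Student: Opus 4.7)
The plan is to recast the truncated dynamic problem \eqref{eq:trun_dynprob} as a second-order perturbed sweeping process of the form \eqref{eq:sweep2order}, and then to apply Theorems~\ref{th:exist_2sweep} and \ref{th:unique_2sweep}. First, I endow $X$ with the Euclidean inner product $(x,y)\mapsto\langle\M x,y\rangle$; with respect to this structure, the normal cone $\NN_K(\mu)$ to $K$ at $\mu\in K$ coincides with $\M^{-1}\partial_v\chi_K(\mu)$. Splitting $\RR=\chi_K+\RR_\mathrm{finite}$ as in \ref{hyp:RK}, and using that the subdifferential sum rule holds (since $\RR_\mathrm{finite}$ is finite-valued convex in finite dimension), dividing \eqref{eq:trun_dynprob} by $\eps^2$ and applying $\M^{-1}$ rewrite the inclusion as
\begin{gather*}
\xepsdd(t)\in -F(t,\xeps(t),\xepsd(t))-\NN_K(\xepsd(t)),\\
F(t,\eta,\mu):=\tfrac{1}{\eps^2}\M^{-1}\!\left[\eps\V\mu+\partial_v\RR_\mathrm{finite}(t,\mu)+D_x\EE^\rho(t,\eta)\right],
\end{gather*}
where the coefficient $1/\eps^2$ in front of $\partial_v\chi_K$ has been absorbed by conicity of the normal cone.

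For existence, I would verify the hypotheses of Theorem~\ref{th:exist_2sweep}. Upper semicontinuity of $F$ with non-empty compact convex values follows from Corollary~\ref{propertiesR}\ref{prop:R_IV}--\ref{prop:R_V} combined with the continuity of the linear maps $\V$, $\M^{-1}$ and of $D_x\EE^\rho$. The linear growth condition is the point where the truncation pays off: by the uniform bound $|\partial_v\RR_\mathrm{finite}|_*\le\alpha^*$ from Corollary~\ref{propertiesR}\ref{prop:R_IV}, the estimate \eqref{DxErho}, and the norm equivalence \eqref{boundsassumption1}, one obtains $|F(t,\eta,\mu)|_\M\le\beta_\rho(1+|\mu|_\M)$ for a suitable $\beta_\rho>0$, which explains why the restriction $\rho<+\infty$ is imposed. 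Theorem~\ref{th:exist_2sweep}, applied with initial data $\eta_0=x^\eps_0$ and $\eta_1=x^\eps_1\in K$, then yields the desired $W^{2,\infty}(0,T;X)$ solution.

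For uniqueness under the additional assumption \ref{hyp:E7}, I would apply Theorem~\ref{th:unique_2sweep} with $\UU:=\BB^X_{C_\Lambda}$. Hypothesis \ref{cond:j} is exactly Corollary~\ref{unifbound}\ref{prop:stima_i}, whose proof carries over unchanged to the truncated problem. For \ref{cond:jj}, decomposing
\begin{equation*}
\langle f_1-f_2,\mu_1-\mu_2\rangle_\M=\tfrac{1}{\eps^2}\langle\eps\V(\mu_1-\mu_2)+(\xi_1-\xi_2)+D_x\EE^\rho(t,\eta_1)-D_x\EE^\rho(t,\eta_2),\mu_1-\mu_2\rangle,
\end{equation*}
with $\xi_i\in\partial_v\RR_\mathrm{finite}(t,\mu_i)$, three contributions appear: the $\V$-term is non-negative by positive-semidefiniteness, the $\xi$-term is non-negative by monotonicity of the convex subdifferential $\partial_v\RR_\mathrm{finite}(t,\cdot)$, and the $D_x\EE^\rho$-term is bounded from below by $-\widetilde\varsigma_\rho(t)|\eta_1-\eta_2|\,|\mu_1-\mu_2|$ via \eqref{eq:Erho7}. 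A Cauchy--Young inequality together with \eqref{boundsassumption1} then yields \ref{cond:jj} with $k(t):=\widetilde\varsigma_\rho(t)/(2m\eps^2)\in L^1(0,T)$. The main obstacle, and the reason for the threshold $\rho>C_\Lambda$, is precisely that \eqref{eq:Erho7} provides a Lipschitz estimate for $D_x\EE^\rho$ only on $\overline{\BB^X_\rho}$: requiring $\rho>C_\Lambda$ guarantees $\UU\subseteq\overline{\BB^X_\rho}$, so that the Lipschitz bound is available on the region in which all dynamic trajectories are confined.
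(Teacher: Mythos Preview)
Your proof is correct and follows essentially the same strategy as the paper: reformulate \eqref{eq:trun_dynprob} as a second-order perturbed sweeping process and invoke Theorems~\ref{th:exist_2sweep} and \ref{th:unique_2sweep}. The only cosmetic difference is the choice of Euclidean structure: you work directly in $(X,\langle\M\cdot,\cdot\rangle)$ with the solution $\xeps$ itself and the sweeping set $K$, whereas the paper passes to the dual $(X^*,\langle\cdot,\eps^{-2}\M^{-1}\cdot\rangle)$ via the change of variables $\eta^\eps:=\eps^2\M\xeps$ and the sweeping set $\M K$; the two formulations are isomorphic and your choice is arguably the more direct one.
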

\begin{proof}
Let us recall that by \ref{hyp:RK} and the linearity of the subdifferential with respect to the sum of two convex functions, we can write
\begin{equation*}
	\partial_v\RR(t,v)=\partial \chi_K(v)+\partial_v\RR_\mathrm{finite}(t,v),\quad\text{for every }(t,v)\in [0,T]\times X.
\end{equation*}
 Hence we can rewrite  problem \eqref{eq:trun_dynprob} as
	\begin{equation}
\begin{cases}
\eps^2 \mathbb{M}\xepsdd(t) \in -\partial \chi_K(\xepsd(t))-\widetilde F(t,\xeps(t),\xepsd(t)),\\
\xeps(0)=x^\eps_0,\quad\xepsd(0)=x^\eps_1,
\end{cases}
\end{equation}
where
\begin{equation*}
	\widetilde F(t,u,v):=\eps \mathbb{V}v+\partial_v\RR_\mathrm{finite}(t,v)+D_x\mc E^\rho(t,u).
\end{equation*}
We now observe that, by \ref{prop:R_IV} and \ref{prop:R_V} in Corollary~\ref{propertiesR}, the map $\partial_v\RR_\mathrm{finite}\colon [0,T]\times K\rightrightarrows X^*$ has compact, convex, non-empty values and it is upper semicontinuous. Thus trivially also the map $\widetilde F\colon[0,T]\times X\times K\rightrightarrows X^*$ has compact, convex, non-empty values and it is upper semicontinuous on the whole domain. Moreover, by \eqref{boundsassumption2}, \ref{prop:R_V} in Corollary~\ref{propertiesR} and \eqref{DxErho}, for every $\rho\in(0,+\infty)$ there exists a constant $\tilde \beta_\rho>0$ such that
\begin{equation} \label{eq:preLineargrowth}
	\widetilde{F}(t,u,v)\subseteq \tilde\beta_\rho(1+\abs{v})\,\BB_1^{X^*}, \qquad\text{for every $(t,u,v)\in [0,T]\times X\times K$},
\end{equation}
where $\BB_1^{X^*}$ is the open unitary ball in $X^*$ centered at the origin.

Let us now set $\Q^\eps:=\eps^{-2}\M^{-1} \colon  X^*\to X$, so that $\Q^\eps$ is a positive definite, symmetric linear operator. Using also that $K$ is a closed, convex cone, for every $\eta\in X^*$ we have 
\begin{align*}
\partial\chi_K(\Q^\eps\eta)&=\{\xi\in X^*\mid \chi_K(\Q^\eps\eta)+\scal{\xi}{x}\leq \chi_K(\Q^\eps \eta+x)\quad\text{for every $x\in X$}\}\\
&=\{\xi\in X^*\mid \chi_K(\Q^\eps\eta)+\scal{\xi}{\Q^\eps \zeta}\leq \chi_K(\Q^\eps (\eta+\zeta)) \quad\text{for every $\zeta\in X^*$}\}\\
&=\{\xi\in X^*\mid \chi_{\M K}(\eta)+\scal{\xi}{\Q^\eps \zeta}\leq \chi_{\M K}(\eta+\zeta) \quad\text{for every $\zeta\in X^*$}\}\\
&=\{\xi\in X^*\mid \chi_{\M K}(\eta)+\scal{\xi}{\Q^\eps (\tilde \eta-\eta)}\leq \chi_{\M K}(\tilde \eta) \quad\text{for every $\tilde \eta\in X^*$}\}\\
&=\NN^{\Q^\eps}_{\M K}(\eta).
\end{align*}
In the third step we have used the fact that $K$ is a cone to neglect the factor $\eps^2$.
The last step follows by observing that both sets are empty if $\eta\notin \mathbb{M}K$, since the inequality would fail for $\tilde\eta \in \mathbb{M} K$. On the other hand, if $\eta\in \mathbb{M}K$, the inequality is always true for $\tilde\eta \notin \mathbb{M}K$, while it is equivalent to $\scal{\xi}{\Q^\eps(\tilde \eta-\eta)}\leq 0$ for $\tilde\eta \in \mathbb{M}K$.

Let us now introduce the Euclidean space $E$ as the vector space $X^*$ endowed with the scalar product $\scal{\cdot}{\Q^\eps\cdot}$ with $\Q^\eps$ as above. By \eqref{boundsassumption1} we observe that
\begin{equation}\label{eq:scal_equiv}
	\frac{1}{\eps \sqrt{M}}\abs{\eta}_*\leq \abs{\eta}_E\leq \frac{1}{\eps \sqrt{m}}\abs{\eta}_*,\quad\text{for every }\eta\in E.
\end{equation}

Then, $\xeps$ is a differential solution of \eqref{eq:trun_dynprob} if and only if $\eta^\eps:=\eps^2\M\xeps$ is  a differential solution of the following second order perturbed sweeping process on $E$:
	\begin{equation} \label{eq:sweep2form}
\begin{cases}
\ddot\eta^\eps(t) \in -\NN_{\M K}(\dot \eta^\eps(t))- F(t,\eta^\eps(t),\dot\eta^\eps(t)),\\
\eta^\eps(0)=\eps^2\M x^\eps_0,\quad\dot\eta^\eps(0)=\eps^2\M x^\eps_1,
\end{cases}
\end{equation}
where the function $F\colon[0,T]\times E\times \M K\rightrightarrows E$ is defined by
\begin{equation*}
	F(t,u,v):=\widetilde F(t, \Q^\eps u,\Q^\eps v).
\end{equation*}
We observe that, by \eqref{eq:scal_equiv} and the linearity of $\Q^\eps$, we have
that the map $F$ has compact, convex, non-empty values and is upper semicontinuous on the whole domain with respect to the norm of $E$. Moreover, by \eqref{eq:preLineargrowth} and \eqref{eq:scal_equiv},  for every $\rho\in(0,+\infty)$ there exists a constant $\beta_\rho>0$ such that
\begin{equation*}
F(t,u,v)\subseteq \beta_\rho(1+\abs{v}_E)\BB_1^E,\qquad\text{for every $(t,u,v)\in [0,T]\times E\times \mathbb{M}K$},
\end{equation*}
where $\BB_1^E$ is the unitary ball in $E$ centered at the origin.
We have therefore verified all the hypotheses of Theorem~\ref{th:exist_2sweep}, hence proving the existence of a solution $\eta^\eps\in W^{2,\infty}(0,T;E)$ of \eqref{eq:sweep2form}.  Noticing that $\xeps=\Q^\eps\eta^\eps\in W^{2,\infty}(0,T;X)$, we complete the first part of the proof.

\medskip
It remains to show that such a solution is unique. Therefore, let us now consider $\rho\in(C_\Lambda,+\infty)$ and assume \ref{hyp:E7}, with the consequence that also \eqref{eq:Erho7} holds.
 
Since to every solution $\eta^\eps$ of \eqref{eq:sweep2form} corresponds a solution $\xeps=\Q^\eps\eta^\eps$ of \eqref{eq:trun_dynprob}, which by Corollary~\ref{unifbound} is contained in the open ball $\BB^{X}_{C_\Lambda}$, we deduce that every solution  $\eta^\eps$ of \eqref{eq:sweep2form} is contained in the set $\UU:=\eps^2\M\BB^{X}_{C_\Lambda}$, which is open also in the topology of $E$. Hence condition \ref{cond:j} of Theorem~\ref{th:unique_2sweep} is satisfied. 

We then observe that the function $\widetilde{F}$ can be decomposed in two parts.
The first part $\widetilde F^a(t,v):=\eps \mathbb{V}v+\partial_v\RR_\mathrm{finite}(t,v)$, at each time $t$, is included in the subdifferential with respect to $v$ of a convex function, namely $\widetilde F^a(t,v)\subseteq\partial_v[\eps\scal{\V v}{v}+\RR_\mathrm{finite}(t,v)]$. Hence by monotonicity of the subdifferential it holds:
\begin{equation*}
	\scal{\widetilde f^a_1-\widetilde f^a_2}{v_1-v_2}\geq 0,
\end{equation*}
for every $t\in [0,T]$, $v_1,v_2\in K$, $\widetilde f^a_1\in \widetilde F^a(t,v_1)$, $\widetilde f^a_2\in \widetilde F^a(t,v_2)$. Therefore, taking $\mu_1=\eps^2\mathbb{M} v_1$ and $\mu_2= \eps^2\mathbb{M} v_2$, we infer that
 \begin{equation} \label{eq:stimauniq1}
 \scal{\widetilde f^a_1-\widetilde f^a_2}{\mu_1-\mu_2}_E=\scal{\widetilde f^a_1-\widetilde f^a_2}{\Q^\eps\mu_1-\Q^\eps\mu_2}=\scal{\widetilde f^a_1-\widetilde f^a_2}{v_1-v_2}\geq 0,
 \end{equation}
for every $t\in [0,T]$, $\mu_1,\mu_2\in \M K$, $\widetilde f^a_1\in \widetilde F^a(t,\Q^\eps\mu_1)$, $\widetilde f^a_2\in \widetilde F^a(t,\Q^\eps\mu_2)$.

 Let us now consider the second part $\widetilde F^b(t,u):=D_x\EE^\rho(t,u)$ of $\widetilde{F}$. By 
 \eqref{eq:Erho7} there exists a function $\widetilde\varsigma_\rho\in L^1(0,T)$ such that
\begin{equation*}
|\widetilde F^b(t,u_1)-\widetilde F^b(t,u_2)|_*\le\widetilde\varsigma_\rho(t)\abs{u_1-u_2},
\end{equation*}
for a.e.~$t\in [0,T]$, and for every $u_1,u_2\in \BB^X_{C_\Lambda}$. As before, taking $\eta_1= \eps^2\mathbb{M}u_1$ and $\eta_2= \eps^2\mathbb{M}u_2$, we deduce that
\begin{align}
|\widetilde F^b(t,\Q^\eps\eta_1)-\widetilde F^b(t,\Q^\eps\eta_2)|_E &\leq \frac{1}{\eps \sqrt{m}}|\widetilde F^b(t,\Q^\eps\eta_1)-\widetilde F^b(t,\Q^\eps\eta_2)|_*\notag\\
&\le\frac{\widetilde\varsigma_\rho(t)}{\eps \sqrt{m}}|\Q^\eps\eta_1-\Q^\eps\eta_2|\leq \frac{\widetilde\varsigma_\rho(t)}{\eps^2 \sqrt{mM}}|\eta_1-\eta_2|_E,
\label{eq:stimauniq2}
\end{align}
which therefore holds for a.e. $t\in [0,T]$, and every $\eta_1,\eta_2\in \UU$.

Hence, by combining \eqref{eq:stimauniq1} and \eqref{eq:stimauniq2} we obtain
\begin{align*}
	\scal{f_1-f_2}{\mu_1-\mu_2}_E&\geq
	\scal{\widetilde F^b(t,\Q^\eps\eta_1)-\widetilde F^b(t,\Q^\eps\eta_2)}{\mu_1-\mu_2}_E\\&\geq -|\widetilde F^b(t,\Q^\eps\eta_1)-\widetilde F^b(t,\Q^\eps\eta_2)|_E\abs{\mu_1-\mu_2}_E \\
	&\geq - \frac{\widetilde\varsigma_\rho(t)}{2\eps^2 \sqrt{mM}} (\abs{\eta_1-\eta_2}_E^2+\abs{\mu_1-\mu_2}^2_E),
\end{align*}
for a.e.~$t\in[0,T]$, and for every $\eta_1,\eta_2\in \UU$, $\mu_1,\mu_2\in \mathbb{M}K$, $f_1\in F(t,\eta_1,\mu_1)$,  $f_2\in F(t,\eta_2,\mu_2)$. 

Hence also condition \ref{cond:jj} of Theorem~\ref{th:unique_2sweep} is satisfied, yielding the uniqueness result of the proposition.
\end{proof}
The main result of this section, concerning the original problem \eqref{dynprob}, is a straightforward corollary of Proposition~\ref{prop:trunc_exist}.
	\begin{thm}\label{existencedyn}
	Fix $\eps>0$, let $\mathbb{M}, \mathbb{V}$ be as in Section~\ref{sec:Setting}, and assume that $\RR$ satisfies \ref{hyp:RK} and $\mc E(t,x)=\mc E_\mathrm{sh}(t,\piz (x))$ satisfies \ref{hyp:E1}, \ref{hyp:E3}--\ref{hyp:E5}. Then for every initial values $x^\eps_0\in X$ and $x^\eps_1\in K$ there exists at least a differential solution $\xeps\in W^{2,\infty}(0,T;X)$ to problem \eqref{dynprob}.\par 
	If in addition \ref{hyp:E7} holds, then such a solution is unique.
\end{thm}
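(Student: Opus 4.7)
The plan is to deduce the theorem as an almost immediate corollary of Proposition~\ref{prop:trunc_exist} combined with the a priori estimate of Corollary~\ref{unifbound}, whose crucial feature is that the constant $C_\Lambda$ is independent of the truncation parameter $\rho\in(0,+\infty]$. The core idea is that once the solutions of the truncated problem are confined in a fixed ball, the truncation becomes invisible along the trajectory, and the truncated solution automatically solves the original problem.

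More precisely, I would first set $\Lambda:=\max\{\abs{x^\eps_0},\abs{\eps x^\eps_1}\}$ and let $C_\Lambda$ be the corresponding constant from Corollary~\ref{unifbound}. Choose any $\rho\in(C_\Lambda,+\infty)$. By the existence part of Proposition~\ref{prop:trunc_exist}, there exists a differential solution $\xeps\in W^{2,\infty}(0,T;X)$ of the truncated problem \eqref{eq:trun_dynprob}. Applying Corollary~\ref{unifbound} to such a solution yields $\abs{\xeps(t)}<C_\Lambda<\rho$ for every $t\in[0,T]$. Since $\sigma_\rho$ acts as the identity on $\overline{\BB^X_\rho}$, the definition $\EE^\rho(t,x)=\EE(t,\sigma_\rho(x))$ implies $D_x\EE^\rho(t,\xeps(t))=D_x\EE(t,\xeps(t))$ for every $t\in[0,T]$. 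Consequently, the differential inclusion in \eqref{eq:trun_dynprob} coincides with the one in \eqref{dynprob} along $\xeps$, proving that $\xeps$ is a differential solution of the original dynamic problem, with the same regularity $W^{2,\infty}(0,T;X)$.

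For the uniqueness statement under the additional assumption \ref{hyp:E7}, I would reverse the argument: any differential solution $x$ of \eqref{dynprob} itself satisfies the a priori bound of Corollary~\ref{unifbound} (which applies to the case $\rho=+\infty$ with the same constant $C_\Lambda$), hence $\abs{x(t)}<C_\Lambda<\rho$. By the same identification of $\EE^\rho$ with $\EE$ on such trajectories, $x$ is also a differential solution of the truncated problem \eqref{eq:trun_dynprob} for the chosen $\rho>C_\Lambda$. The uniqueness clause of Proposition~\ref{prop:trunc_exist} then forces all such solutions to coincide.

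There is no real obstacle left to overcome, since the difficult work (reformulation as a perturbed sweeping process and invocation of the Adly--Haddad existence/uniqueness theorems) has already been carried out for the truncated energies. The only point deserving care is to emphasize that Corollary~\ref{unifbound} holds with a $\rho$-independent constant — a fact that rests on the Gr\"onwall argument of Proposition~\ref{energybound} using only the bound \eqref{eq:Erho4}, which is inherited from \ref{hyp:E4} with the same $\omega$ and $\gamma$. This uniformity is what permits the truncation level $\rho$ to be fixed a posteriori, once $\Lambda$ is known, without any circularity.
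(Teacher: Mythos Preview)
Your proposal is correct and follows essentially the same approach as the paper: fix $\Lambda$, choose $\rho>C_\Lambda$, and use the $\rho$-independence of the a priori bound in Corollary~\ref{unifbound} to identify solutions of the truncated and original problems. Your write-up is slightly more explicit than the paper's (you spell out $D_x\EE^\rho(t,\xeps(t))=D_x\EE(t,\xeps(t))$ and separate the existence and uniqueness directions), but the argument is the same.
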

\begin{proof}
Let us set $\Lambda:=\max\{\abs{x^\eps_0},\abs{\eps x^\eps_1}\}$. Taken $C_\Lambda>0$ given by Corollary~\ref{unifbound}, we fix $\rho\in(C_\Lambda,+\infty)$. 

We observe that by definition of the truncated energy $\mc E^\rho$ the two problems \eqref{dynprob} and \eqref{eq:trun_dynprob} coincide in the region $(t,x^\eps,\xepsd)\in[0,T]\times \BB^X_\rho \times K$; moreover, by Corollary~\ref{unifbound}, the solutions of both the initial value problems are contained in that region. Hence, the solutions of \eqref{dynprob} and \eqref{eq:trun_dynprob} coincide. Since by Proposition~\ref{prop:trunc_exist} problem \eqref{eq:trun_dynprob} admits at least one differential solution $x^\eps$, which additionally satisfies $\xeps\in W^{2,\infty}(0,T;X)$ and which is unique if also \ref{hyp:E7} is satisfied, so does the original dynamic problem \eqref{dynprob}.
\end{proof}



\section{$\mc R$-absolutely continuous functions  and functions of bounded $\mc R$-variation}\label{secACBV}
In this section we introduce and present the main properties of the analogue of absolutely continuous (vector-valued) functions and of functions of bounded variation when the norm $|\cdot|$ is replaced by a general time-dependent functional $\mc R$. These two notions will be useful to deal with both problems \eqref{dynprob} and \eqref{quasprob}. Here we consider the case of a reflexive Banach space $X$ and instead of limiting ourselves to potentials $\RR$ satisfying \ref{hyp:RK} we consider the larger class of $\psi$-regular functionals used in \cite{HeiMielk} (but still with the additional coercivity assumption, see \ref{psi4} below). This choice is motivated by two reasons: first of all we provide new results which are not investigated in \cite{HeiMielk} and thus we prefer to state them in the broadest possible setting; furthermore all the proofs here presented would not be simplified by restricting to our more specific framework. We want also to recall that a more general theory can be developed even in a metric setting, see for instance \cite{AmbrGiglSav}, Chapter~1.\par
We follow the presentation given in \cite{HeiMielk} for the definition and the main features of functions of bounded $\mc R$-variation when $\mc R$ depends on time, and we provide some more properties we will need during the paper. We also refer to the Appendix of \cite{Brez} for a very well detailed presentation of the classical case in which $\mc R$ is the norm of the Banach space $X$.

We thus consider a reflexive Banach space $X$ and a $\psi$-regular function $\mc R\colon [a,b]\times X\to [0,+\infty]$ in the sense of the following Definition, see also \cite{HeiMielk}:
\begin{defi}\label{psiregular}
		Given an \emph{admissible} function $\psi\colon X\to [0,+\infty]$, namely satisfying 
		\begin{enumerate}[label=\textup{($\psi\arabic*$)}]
		\item \label{psi0} $ \psi(0)=0$;
		\item \label{psi1} $ \psi$ is convex;
		\item \label{psi2} $ \psi$ is positively homogeneous of degree one;
		\item \label{psi3} $\psi$ is lower semicontinuous;
		\item \label{psi4} there exists a positive constant $c>0$ such that $c|\cdot|\le \psi(\cdot)$,
	\end{enumerate}
		we say that $\mc R\colon [a,b]\times X\to[0,+\infty]$ is \emph{$\psi$-regular} if:
		\begin{itemize}
			\item for every $t\in[a,b]$, $\mc R(t,\cdot)$ is convex, positively homogeneous of degree one, lower semicontinuous, and satisfies $\RR(t,0)=0$;
			\item there exist two positive constants $\alpha^*\ge\alpha_*>0$ for which
			\begin{equation}\label{Rbound}
				\alpha_*\psi(v)\le \mc R(t,v)\le \alpha^*\psi(v),\quad\text{ for every }(t,v)\in[a,b]\times X;
			\end{equation}
			\item there exists a nonnegative and nondecreasing function $\sigma\in \CC^0([0,b-a])$ satisfying $\sigma(0)=0$ and for which
			\begin{equation}\label{Rcont}
				|\mc R(t,v)-\mc R(s,v)|\le\psi(v)\sigma(t-s),\text{ for every }a\le s\le t\le b \text{ and for every }v\in X \text{ s.t. }\psi(v)<+\infty.
			\end{equation}
		\end{itemize}
	\end{defi}
	\begin{rmk}
	    We again notice that this definition actually differs from the one considered in \cite{HeiMielk} due to the additional assumption \ref{psi4}, which gives coercivity. Most of the results of this section are however valid without \ref{psi4}, as the reader can check from the proofs; we always stress the points where it is really necessary.
	\end{rmk}
	We want to point out that if $\RR$ satisfies \ref{hyp:RK}, then it is $\psi^K$-regular (with an absolutely continuous $\sigma$) with respect to the admissible function
	\begin{equation}\label{psiK}
\psi^K(v)=\chi_K(v)+\abs{v},
\end{equation}
where $K$ is given by \ref{hyp:RK}. On the other hand, any $\psi$-regular functional $\mc R$ can be written as
\begin{equation*}
    \RR(t,v)=\chi_{\{\psi<+\infty\}}(v)+\RR_{|_{\{\psi<+\infty\}}}(t,v),
\end{equation*}
where $\RR_{|_{\{\psi<+\infty\}}}$ has finite values due to \eqref{Rbound} and the set ${\{\psi<+\infty\}}$ is a nonempty convex cone thanks to \ref{psi0}--\ref{psi2}. However, in general, this set is not closed and moreover the second inequality in \eqref{Rbound} cannot be improved to \ref{hyp:R2}, since no bounds from above for $\psi$ are available. These are the main differences between $\psi$-regular functionals and functionals satisfying \ref{hyp:RK}.\par 
We first deal with the notion of $\mc R$-absolutely continuous functions:
\begin{defi}
	We say that a function $f\colon [a,b]\to X$ is $\mc R$-absolutely continuous, and we write $f\in AC_\mc R([a,b];X)$ if $f$ is absolutely continuous and $\displaystyle \int_{a}^{b}\mc R(\tau,\dot{f}(\tau))\d\tau< +\infty$.
\end{defi}
Next proposition provides a natural link between $\mc R$-absolutely continuous and classical absolutely continuous functions.
\begin{prop}\label{equivrabs}
	Given a function $f\colon [a,b]\to X$, the following are equivalent:
	\begin{enumerate}[label=\textup{(\arabic*)}]
		\item \label{prop:rabs1} $f$ is $\mc R$-absolutely continuous;
		\item \label{prop:rabs2} $f$ is absolutely continuous and $\displaystyle \int_{a}^{b}\psi(\dot{f}(\tau))\d\tau<+\infty$;
		\item \label{prop:rabs3} there exists a nonnegative function $m\in L^1(a,b)$ such that:
		\begin{equation*}
		\psi(f(t)-f(s))\le \int_{s}^{t}m(\tau)\d\tau,\quad\text{ for every }a\le s\le t\le b.
		\end{equation*}
	\end{enumerate}
\end{prop}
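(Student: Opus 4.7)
The plan is to establish the cyclic chain of implications $(1) \Rightarrow (2) \Rightarrow (3) \Rightarrow (1)$. The first implication is immediate from the lower bound $\alpha_*\psi(v) \le \RR(t,v)$ in \eqref{Rbound}, which gives $\int_a^b \psi(\dot f(\tau))\d\tau \le \alpha_*^{-1}\int_a^b \RR(\tau,\dot f(\tau))\d\tau < +\infty$, so $f$ remains absolutely continuous with $\psi(\dot f) \in L^1$.

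For $(2) \Rightarrow (3)$, the natural candidate is $m(\tau) := \psi(\dot f(\tau)) \in L^1(a,b)$, and the statement reduces to the Jensen-type estimate
\[
\psi\bigl(f(t)-f(s)\bigr) = \psi\!\left(\int_s^t \dot f(\tau)\d\tau\right) \le \int_s^t \psi(\dot f(\tau))\d\tau,
\]
where the equality uses that $f$ is absolutely continuous, hence equals the Bochner integral of its derivative. I would first verify the inequality on step functions: for $g = \sum_{i=1}^N v_i \chi_{I_i}$ with the $I_i$ forming a partition of $[s,t]$, sublinearity of $\psi$ (which follows from convexity and positive one-homogeneity) gives $\psi(\int_s^t g) = \psi(\sum_i |I_i|\, v_i) \le \sum_i |I_i|\, \psi(v_i) = \int_s^t \psi(g)$. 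Then, approximating $\dot f$ by conditional-expectation step functions $g_n$ on finer and finer dyadic partitions, so that $g_n \to \dot f$ in $L^1(s,t;X)$ and $\int g_n \to \int \dot f$ in $X$, I would pass to the limit using the lower semicontinuity of $\psi$ on the left, and on the right the convexity-based domination $\psi(g_n(\tau)) \le \frac{1}{|I_{n,k(\tau)}|}\int_{I_{n,k(\tau)}}\psi(\dot f)$, which converges pointwise almost everywhere to $\psi(\dot f(\tau))$ by Lebesgue differentiation and yields convergence of the integrals.

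For $(3) \Rightarrow (1)$, the coercivity assumption \ref{psi4} immediately gives $|f(t)-f(s)| \le c^{-1}\int_s^t m(\tau)\d\tau$, so $f$ is absolutely continuous, and by reflexivity of $X$ it is differentiable almost everywhere with $f(t)-f(s) = \int_s^t \dot f(\tau)\d\tau$. At every Lebesgue point $\tau$ of $m$, the lower semicontinuity and positive one-homogeneity of $\psi$ yield
\[
\psi(\dot f(\tau)) \le \liminf_{h\to 0^+}\psi\!\left(\frac{f(\tau+h)-f(\tau)}{h}\right) \le \liminf_{h\to 0^+}\frac{1}{h}\int_\tau^{\tau+h} m(s)\d s = m(\tau),
\]
so $\psi(\dot f) \le m$ almost everywhere; combined with the upper bound $\RR \le \alpha^*\psi$ in \eqref{Rbound}, this gives $\int_a^b \RR(\tau,\dot f(\tau))\d\tau \le \alpha^*\|m\|_{L^1} < +\infty$, closing the cycle.

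The main obstacle I anticipate is the rigorous execution of the Jensen-type estimate in $(2) \Rightarrow (3)$: the limit passage has to be compatible with the mere lower semicontinuity of $\psi$ rather than continuity, and in a (possibly) infinite-dimensional setting both the Bochner-integrability of $\dot f$ and the choice of a sufficiently canonical approximation $g_n$ need some care. The remaining pieces — the coercivity-plus-Lebesgue-differentiation argument and the bookkeeping from the two-sided bound in \eqref{Rbound} — are essentially routine once the key inequality is in hand.
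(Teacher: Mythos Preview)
Your proposal is correct and follows essentially the same route as the paper: the paper establishes $(1)\Leftrightarrow(2)$ directly from \eqref{Rbound}, then $(2)\Rightarrow(3)$ via the Jensen-type inequality with $m=\psi(\dot f)$, and $(3)\Rightarrow(2)$ via coercivity \ref{psi4} plus lower semicontinuity and one-homogeneity at differentiability points. Your cyclic packaging and the extra care you take in justifying the Jensen step are harmless variations; the paper simply invokes Jensen's inequality together with \ref{psi2} without further comment.
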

\begin{proof}
	The equivalence between \ref{prop:rabs1} and \ref{prop:rabs2} follows by means of \eqref{Rbound}.\par 
	Now assume \ref{prop:rabs2}. Then for every $a\le s\le t\le b$ we have:
	\begin{equation*}
	\psi(f(t)-f(s))=\psi\left(\int_{s}^{t}\dot{f}(\tau)\d\tau\right)\le\int_{s}^{t}\psi(\dot{f}(\tau))\d\tau,
	\end{equation*}
	where in the last step we used Jensen's inequality together with \ref{psi2}. Since $\psi(\dot{f}(\cdot))$ is summable we obtain \ref{prop:rabs3} with $m(t)=\psi(\dot{f}(t))$. \par
	If instead we assume \ref{prop:rabs3}, then by \ref{psi4} we get that $f$ is absolutely continuous, so $\dot{f}$ is well defined almost everywhere in $[a,b]$ as a (strong) limit of differential quotients. By means of \ref{psi2} and \ref{psi3} we thus deduce:
	\begin{align*}
	\psi(\dot{f}(\tau))\le\liminf\limits_{h\searrow 0}\frac{\psi(f(\tau+h)-f(\tau))}{h}\le\liminf\limits_{h\searrow 0}\frac{1}{h}\int_{\tau}^{\tau+h}m(\theta)\d\theta=m(\tau),\quad\text{ for a.e. }\tau\in[a,b],
	\end{align*}
	which implies $\displaystyle \int_{a}^{b}\psi(\dot{f}(\tau))\d\tau\le\int_{a}^{b}m(\tau)\d\tau<+\infty$.
\end{proof}
\begin{rmk}
    In the special case of a potential $\RR$ satisfying \ref{hyp:RK}, namely when $\psi$ has the form \eqref{psiK}, from \ref{prop:rabs2} we deduce that $f\in AC_\mc R([a,b];X)$ if and only if $f$ is absolutely continuous and $\dot{f}(t)\in K$ for almost every time $t\in [a,b]$.
\end{rmk}

Recalling that the notion  of functions of bounded $\mc R$-variation has already been introduced in Definition \ref{def:RBV}, we make some additional remarks and present some of their properties.
\begin{rmk}
	We want to say that the limit in \eqref{rvar} exists and it does not depend on the fine sequence of partitions chosen, thus the Definition is well-posed. If $\mc R$ does not depend on time, the limit in \eqref{rvar} can be replaced by a supremum. For a proof of these facts we refer to \cite{HeiMielk}, Appendix A.
\end{rmk}
\begin{rmk}[\textbf{Notation}]
	During the section it will be useful to consider the variation of a function with respect to the time-independent function $\mc R(\bar t,\cdot)$, namely when the time $t=\bar t$ is frozen. In this case we denote the variation by $V_{\mc R(\bar t\,)}(f;s,t)$. We notice that $V_{\mc R(\bar t\,)}(f;s,t)$ can be obtained by replacing $\mc R(t_k, f(t_k)-f(t_{k-1}))$ with $\mc R(\bar t, f(t_k)-f(t_{k-1}))$ in \eqref{rvar}, or by taking the supremum over finite partitions since the frozen potential does not depend on time.
\end{rmk}
From the Definition \ref{def:RBV} we easily notice that \eqref{Rbound} allows us to deduce that a function $f$ belongs to $BV_{\mc R}([a,b];X)$ if and only it it is a function of bounded $\psi$-variation, i.e. $V_\psi(f;a,b)<+\infty$; moreover by \ref{psi4} we deduce that $f$ is a function of bounded variation in the classical sense. As a byproduct, see for instance the Appendix in \cite{Brez}, we obtain that any $f\in BV_{\mc R}([a,b];X)$ has at most a countable number of discontinuity points, and at every $t\in[a,b]$ there exist right and left (strong) limits of $f$, namely:
\begin{equation}\label{rllimit}
f^+(t):=\lim\limits_{t_k\searrow t}f(t_k),\quad\text{ and }\quad f^-(t):=\lim\limits_{t_k\nearrow t}f(t_k).
\end{equation}
\begin{rmk}
	Given a function $f\colon[a,b]\to X$, with a little abuse of notation we will always consider, and still denote, by $f$ its constant extension to a slightly larger interval $(a-\delta,b+\delta)$, for some $\delta>0$; namely $f(t)=f(a)$ if $t\in(a-\delta,a]$ and $f(t)=f(b)$ if $t\in[b,b+\delta)$. This ensures that the limits in \eqref{rllimit} are well defined also in $t=a,b$ and in particular it holds $f^-(a)=f(a)$ and $f^+(b)=f(b)$.
\end{rmk} 
\begin{rmk}
	In the particular case in which $\RR$ satisfies \ref{hyp:RK}, namely when $\psi$ is given by \eqref{psiK}, it is easy to see that $f\in BV_\mc R([a,b];X)$ if and only if $f$ has bounded variation (in the classical sense) and $f(t)-f(s)\in K$ for every $a\le s\le t\le b$.
\end{rmk} 
	Trivially the $\mc R$-variation of $f$ is monotone in both entries (see \ref{var:a} in the next proposition), thus for every $a\le s\le t\le b$ they are well defined:
\begin{equation*}
\begin{gathered}
V_{\mc R}(f;s,t+):=\lim\limits_{t_k\searrow t}	V_{\mc R}(f;s,t_k),\quad V_{\mc R}(f;s,t-):=\lim\limits_{s\le t_k,t_k\nearrow t}	V_{\mc R}(f;s,t_k),\\ 
V_{\mc R}(f;s-,t):=\lim\limits_{s_k\nearrow s}V_{\mc R}(f;s_k,t),\quad V_{\mc R}(f;s+,t):=\lim\limits_{s_k\le t,s_k\searrow s}	V_{\mc R}(f;s_k,t),\\
V_{\mc R}(f;s-,t+):=\lim\limits_{s_k\nearrow s,t_k\searrow t}	V_{\mc R}(f;s_k,t_k),\\
V_{\mc R}(f;s-,t-):=\lim\limits_{s_k\le t_k,s_k\nearrow s,t_k\nearrow t}	V_{\mc R}(f;s_k,t_k),\\
V_{\mc R}(f;s+,t+):=\lim\limits_{s_k\le t_k,s_k\searrow s,t_k\searrow t}	V_{\mc R}(f;s_k,t_k).
\end{gathered}
\end{equation*}
Next proposition gathers all the properties of the $\mc R$-variation we will need throughout the paper.
\begin{prop}\label{propertiesvariationtime}
	Given a function $f\colon [a,b]\to X$, the following properties hold true:
	\begin{enumerate}[label=\textup{(\alph*)}]
		\item \label{var:a} for every $a\le r \le s\le t\le b$ it holds:
		\begin{equation*}
		V_{\mc R}(f;r,t)=V_{\mc R}(f;r,s)+V_{\mc R}(f;s,t);
		\end{equation*}
		\item \label{var:b} for every $a\le s\le t\le b$ it holds:
		\begin{align*}
		V_{\mc R}(f;s-,t+)=V_{\mc R}(f;s-,s)+V_{\mc R}(f;s,t)+V_{\mc R}(f;t,t+);
		\end{align*}
		\item \label{var:c} if $f\in BV_{\mc R}([a,b];X)$, then for every $t\in [a,b]$ the following equalities hold true:
		\begin{align*}
		&V_{\mc R}(f;t,t+)=V_{\mc R(t)}(f;t,t+)=\lim\limits_{t_k\searrow t}\mc R(t,f(t_k)-f(t)),\quad V_{\mc R}(f;t,t-)=	0,\\
		&V_{\mc R}(f;t-,t)=V_{\mc R(t)}(f;t-,t)=\lim\limits_{t_k\nearrow t}\mc R(t,f(t)-f(t_k)),\quad V_{\mc R}(f;t+,t)=	0\\
		&V_{\mc R}(f;t-,t-)=	0,\quad V_{\mc R}(f;t+,t+)=	0;
		\end{align*}
		\item \label{var:d} if $f\in BV_{\mc R}([a,b];X)$, then $f^+,f^-$ belong to $BV_\mc R([a,b],X)$ and for every $a\le s\le t\le b$ the following inequalities hold true:
		\begin{align*}
		&V_{\mc R}(f;s-,t+)\ge \max\left\{V_{\mc R}(f^+;s-,t+),V_{\mc R}(f^-;s-,t+)\right\},\\
		&V_{\mc R}(f;s+,t+)\ge V_{\mc R}(f^+;s,t+),\\
		&V_{\mc R}(f;s-,t-)\ge V_{\mc R}(f^-;s-,t).
		\end{align*}
	\end{enumerate}
\end{prop}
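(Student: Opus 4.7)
My plan is to prove (a) first, derive (b) from it, establish (c) through a $\sigma$-controlled bridge to the frozen-time variation $V_{\mc R(t)}$, and finally obtain (d) via a lower-semicontinuity argument.

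\textbf{For (a).} Given $a\le r\le s\le t\le b$ and a fine sequence of partitions $\{P_n\}$ of $[r,t]$, I consider the refined partitions $Q_n:=P_n\cup\{s\}$, which remain fine. These split canonically into fine partitions of $[r,s]$ and $[s,t]$, so
$$S_{\mc R}(Q_n)=S_{\mc R}(Q_n\cap[r,s])+S_{\mc R}(Q_n\cap[s,t])\to V_{\mc R}(f;r,s)+V_{\mc R}(f;s,t).$$
The discrepancy $|S_{\mc R}(P_n)-S_{\mc R}(Q_n)|$ involves only the three terms on the single subinterval of $P_n$ containing $s$, and I would control it by combining subadditivity of $\mc R(t_{k-1},\cdot)$ (from \ref{psi1}--\ref{psi2}) with the $\sigma$-continuity estimate \eqref{Rcont}; the resulting bound is of the form $\sigma(|P_n|)\,V_\psi(f;a,b)$, with $V_\psi(f;a,b)<+\infty$ since $f\in BV_\mc R$ implies $f\in BV_\psi$ by the left inequality in \eqref{Rbound}. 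As $\sigma(|P_n|)\to 0$ by continuity of $\sigma$, the defect vanishes. Item (b) then follows by applying (a) on $[s_k,t_k]$ for $s_k\le s\le t\le t_k$ to obtain $V_{\mc R}(f;s_k,t_k)=V_{\mc R}(f;s_k,s)+V_{\mc R}(f;s,t)+V_{\mc R}(f;t,t_k)$, and passing to the limit $s_k\nearrow s$, $t_k\searrow t$ using monotonicity in the endpoints (itself from (a) and the non-negativity of $V_\mc R$).

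\textbf{For (c).} The key tool is the bridge to frozen-time variation, namely
$$|V_{\mc R}(f;u,v)-V_{\mc R(t)}(f;u,v)|\le \sigma\!\bigl(\max\{|u-t|,|v-t|\}\bigr)\,V_\psi(f;u,v),$$
obtained by applying \eqref{Rcont} termwise in the partition sums; specializing to $u=t$, $v=t_k\searrow t$ gives $V_{\mc R}(f;t,t+)=V_{\mc R(t)}(f;t,t+)$. I then identify this with $\mc R(t,f^+(t)-f(t))$: the lower bound uses the one-term partition, lsc of $\mc R(t,\cdot)$, and $f(t_k)\to f^+(t)$; for the upper bound, any partition sum on $[t,v]$ is bounded by $\mc R(t,f^+(t)-f(t))+\alpha^*V_\psi(f;t+,v)$ (using subadditivity on the first term together with $\psi(f(s_1)-f^+(t))\le V_\psi(f;t+,s_1)$, and $\mc R\le\alpha^*\psi$ on the remaining terms), after which taking the supremum over partitions and letting $v\searrow t$ annihilates the second term since $V_\psi(f;t+,v)\to 0$ by additivity of $V_\psi$. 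The identity $\lim_{t_k\searrow t}\mc R(t,f(t_k)-f(t))=\mc R(t,f^+(t)-f(t))$ is obtained analogously (lsc gives the $\liminf$; the subadditivity bound $\mc R(t,f(t_k)-f(t))\le\mc R(t,f^+(t)-f(t))+\alpha^*\psi(f(t_k)-f^+(t))$ with $\psi(f(t_k)-f^+(t))\le V_\psi(f;t+,t_k)\to 0$ gives the $\limsup$). The four vanishing identities are immediate from the definitions: the constraints (e.g.~$s\le t_k$ combined with $t_k\nearrow t$ for $s=t$) force $t_k=t$, collapsing the intervals to $\{t\}$, where $V_\mc R(f;t,t)=0$.

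\textbf{For (d).} I plan to use lower semicontinuity under pointwise approximation. For the representative case $V_{\mc R}(f;s-,t+)\ge V_{\mc R}(f^+;s-,t+)$, I fix $s_k<s$, $t_k>t$, take a fine partition $\{s_k=\tau_0<\dots<\tau_N=t_k\}$ of $[s_k,t_k]$, and for each $i$ choose $\tau_i^+>\tau_i$ close enough that $f(\tau_i^+)\to f^+(\tau_i)$ in norm and $\psi(f(\tau_i^+)-f^+(\tau_i))\to 0$ (the latter via $\psi(\cdot)\le V_\psi(f;\tau_i+,\tau_i^+)\to 0$). Combining lsc of $\mc R(\tau_{i-1},\cdot)$ with $\sigma$-continuity in the time variable, each term $\mc R(\tau_{i-1},f^+(\tau_i)-f^+(\tau_{i-1}))$ is $\le\liminf\mc R(\tau_{i-1}^+,f(\tau_i^+)-f(\tau_{i-1}^+))$; summing the finitely many such inequalities (using $\sum\liminf\le\liminf\sum$) yields a bound by $V_\mc R(f;s_k,t_k+\delta)$, since the right-hand side is a partition sum for $f$ on a subinterval of $(s_k,t_k+\delta]$. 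Letting $\delta\to 0^+$ then $s_k\nearrow s$, $t_k\searrow t$ gives the claim; the case of $f^-$ is symmetric with left approximations $\tau_i^-<\tau_i$, and the two remaining inequalities follow by fixing one endpoint. The main obstacle I anticipate is the defect control in (a) at jump points of $f$: a naive pointwise analysis leaves an irreducible residual from the strict inequality in subadditivity, and absorbing it requires the $\sigma$-continuity to enter uniformly in the mesh against $V_\psi$. A related subtlety in (c) is that the upper bound on $V_{\mc R(t)}(f;t,t+)$ implicitly exploits the compatibility between the fine-partition limit and a supremum over partitions, which I expect to justify via monotonicity of partition sums under refinement.
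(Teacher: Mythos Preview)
Your strategy for (b), (c), (d) is essentially the paper's. The $\sigma$-controlled bridge to the frozen variation in (c) and the lower-semicontinuity approximation in (d) are exactly what the paper does; your extra detail on the identification $V_{\mc R(t)}(f;t,t+)=\lim\mc R(t,f(t_k)-f(t))$ fills in something the paper treats as routine.

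The genuine gap is in (a), and you have correctly identified it yourself without resolving it. The defect $|S_\mc R(P_n)-S_\mc R(Q_n)|$ does \emph{not} vanish when $s$ is a jump point of $f$. If $t_{k-1}<s<t_k$ and $f$ jumps at $s$, then
\[
\mc R(t_{k-1},f(s)-f(t_{k-1}))+\mc R(s,f(t_k)-f(s))-\mc R(t_{k-1},f(t_k)-f(t_{k-1}))
\]
can be bounded below by a fixed positive quantity comparable to the jump, independently of the mesh: subadditivity points the wrong way for this direction, and the $\sigma$-correction cannot absorb a fixed residual. Your claimed bound $\sigma(|P_n|)\,V_\psi(f;a,b)$ therefore fails. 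Note also that (a) is stated for arbitrary $f$, so invoking $V_\psi(f;a,b)<+\infty$ is not available anyway.

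The fix is much simpler than defect control: the limit in \eqref{rvar} is \emph{independent of the choice of fine sequence} (this is the Remark following Definition~\ref{def:RBV}, referring to \cite{HeiMielk}). Since $Q_n=P_n\cup\{s\}$ is itself a fine sequence of partitions of $[r,t]$, you get $S_\mc R(Q_n)\to V_\mc R(f;r,t)$ directly, with nothing to estimate. This is the paper's one-line argument: take a fine sequence of partitions of $[r,t]$ containing $s$.

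One further imprecision in (d): you bound the approximating sum $\sum\mc R(\tau_{i-1}^+,f(\tau_i^+)-f(\tau_{i-1}^+))$ by $V_\mc R(f;s_k,t_k+\delta)$, but a \emph{single} partition sum need not be $\le V_\mc R$, since for time-dependent $\mc R$ the variation is a limit over fine sequences, not a supremum. The paper avoids this by working with a fine \emph{sequence} (indexed by $n$) whose endpoints are fixed continuity points $s',t'$ of $f$, and choosing the approximating points $\tilde t_{k}\in[t_{k},t_{k+1})$ so that the $\tilde t_k$ themselves form a fine sequence of $[s',t']$; then one passes to the limit in $n$ rather than bounding a single sum.
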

\begin{proof}
For \ref{var:a} it is enough to take a fine sequence of partions of $[r,t]$ containing $s$. The proof of \ref{var:b} follows easily by \ref{var:a}.\par
The only nontrivial part in \ref{var:c} are the two equalities: 
\begin{equation}\label{toprove}
	V_{\mc R}(f;t,t+)=V_{\mc R(t)}(f;t,t+),\quad\text{ and }\quad V_{\mc R}(f;t-,t)=V_{\mc R(t)}(f;t-,t).
\end{equation}
We prove only the first one, the other being analogous. Exploiting \eqref{Rcont} we deduce that for every $t'>t$ we have:
\begin{align*}
	|V_{\mc R}(f;t,t')-V_{\mc R(t)}(f;t,t')|&\le \limsup\limits_{n\to +\infty}\sum_{k=1}^{n}|\mc R(t_{k-1}, f(t_k)-f(t_{k-1}))-\mc R(t, f(t_k)-f(t_{k-1}))|\\
	&\le \limsup\limits_{n\to +\infty}\sum_{k=1}^{n}\psi(f(t_k)-f(t_{k-1}))\sigma(t_{k-1}-t)\\
	&\le V_\psi(f;t,t')\sigma(t'-t),
\end{align*}
where $\{t_k\}_{k=1}^n$ is a fine sequence of partitions of $[t,t']$. Letting now $t'\searrow t$ we get \eqref{toprove}.\par 
As regards the first inequality in \ref{var:d}, it is enough to prove	\begin{equation}
V_{\mc R}(f;s',t')\ge \max\left\{V_{\mc R}(f^+;s',t'),V_{\mc R}(f^-;s',t')\right\},
\end{equation} 
where $s'<s\le t<t'$ are continuity points of $f$. So we fix $\delta>0$ and a fine sequence of partition of $[s',t']$. Then, exploiting lower semicontinuity and \eqref{Rcont}, for any of these partitions there exists another partition of $[s',t']$, made of continuity points of $f$ and such that each point $\tilde{t}_{k-1}$ belongs to $[t_{k-1},t_k)$, which satisfies:
\begin{align*}
	&\quad\,\sum_{k=1}^{n}\mc R(t_{k-1}, f^+(t_k)-f^+(t_{k-1}))\le \sum_{k=1}^{n}\mc R(t_{k-1}, f(\tilde t_k)-f(\tilde t_{k-1}))+\delta\\
	&\le \sum_{k=1}^{n}\mc R(\tilde t_{k-1}, f(\tilde t_k)-f(\tilde t_{k-1}))+\sum_{k=1}^{n}\psi(f(\tilde t_k)-f(\tilde t_{k-1}))\sigma(\tilde t_{k-1}-t_{k-1})+\delta\\	
	&\le \sum_{k=1}^{n}\mc R(\tilde t_{k-1}, f(\tilde t_k)-f(\tilde t_{k-1}))+V_\psi(f;s',t')\sup_{k=1,\dots n}\sigma(t_k-t_{k-1}) +\delta.
\end{align*}
By letting first $n\to +\infty$ and then $\delta\to 0$, recalling \eqref{finezza} and the uniform continuity of $\sigma$, we get $V_\mc R(f;s',t')\ge V_\mc R(f^+;s',t')$, and arguing in a similar way we also obtain $V_\mc R(f;s',t')\ge V_\mc R(f^-;s',t')$, thus the first inequality in \ref{var:d} is proved. \par 
We now prove the second inequality of \ref{var:d}. We fix $t'>t$ a continuity point of $f$, we consider $\delta>0$ and a fine sequence of partitions of $[s,t']$. As before, for any of these partitions there exist continuity points of $f$ such that each point $\tilde{t}_{k-1}$ belongs to $(t_{k-1},t_k)$ and they satisfy:
\begin{align*}
	&\quad\,\sum_{k=1}^{n}\mc R(t_{k-1}, f^+(t_k)-f^+(t_{k-1}))\\
	&\le \sum_{k=1}^{n}\mc R(\tilde t_{k-1}, f(\tilde t_k)-f(\tilde t_{k-1}))+V_\psi(f;s',t')\sup_{k=1,\dots n}\sigma(t_k-t_{k-1}) +\delta\\
	&=\sum_{k=1}^{n}\mc R(\tilde t_{k-1}, f(\tilde t_k)-f(\tilde t_{k-1}))+\mc R(s,f(\tilde t_0){-}f(s))-\mc R(s,f(\tilde t_0){-}f(s))\\
	&\quad+V_\psi(f;s',t')\sup_{k=1,\dots n}\sigma(t_k-t_{k-1})+\delta.
\end{align*}
Letting $n\to +\infty$, thanks to \eqref{finezza}, we deduce
\begin{align*}
	V_\mc R(f^+;s,t')\le V_\mc R(f;s,t')-V_{\mc R(s)}(f;s,s+)+\delta= V_\mc R(f;s+,t')+\delta.
\end{align*}
Letting now $\delta\to 0$ and $t'\searrow t$ we deduce $V_\mc R(f;s+,t+)\ge 	V_\mc R(f^+;s,t+)$.\par 
The third inequality in \ref{var:d} follows in a similar way, thus we conclude.
\end{proof}
As in the classical case, the inclusion $AC_\mc R([a,b];X)\subseteq BV_\mc R([a,b];X)$ holds true, as stated in the next proposition:
\begin{prop}\label{abscontvar}
	A function $f\colon [a,b]\to X$ is $\mc R$-absolutely continuous if and only if it is of bounded $\mc R$-variation and the function $t\mapsto V_\mc R(f;a,t) $ is absolutely continuous. In this case it holds 
	\begin{equation*}
		V_\mc R(f;s,t)=\int_{s}^{t}\mc R(\tau,\dot{f}(\tau))\d\tau,\quad\text{ for every }a\le s\le t\le b.
	\end{equation*}
\end{prop}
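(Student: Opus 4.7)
The plan is to prove both implications by exploiting the interplay between the pointwise bound \eqref{Rbound} (which gives $\psi \asymp \mc R$ up to constants), the time-continuity \eqref{Rcont}, and the one-homogeneity/subadditivity of $\mc R(t,\cdot)$. Throughout I set $F(t):=V_\mc R(f;a,t)$ and $\tilde F(t):=V_\psi(f;a,t)$, and I will use that $\alpha_*\tilde F(t)\le F(t)$, so absolute continuity of $F$ forces absolute continuity of $\tilde F$.

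For the forward implication I would fix $a\le s\le t\le b$ and a fine sequence of partitions $s=t_0<\dots<t_n=t$ of mesh $\delta_n\to 0$. Applying Jensen's inequality together with one-homogeneity gives
\begin{equation*}
\mc R(t_{k-1},f(t_k)-f(t_{k-1}))\le\int_{t_{k-1}}^{t_k}\mc R(t_{k-1},\dot f(\tau))\d\tau,
\end{equation*}
and \eqref{Rcont} lets me replace $\mc R(t_{k-1},\dot f(\tau))$ with $\mc R(\tau,\dot f(\tau))+\psi(\dot f(\tau))\sigma(\delta_n)$. Summing and passing to the limit yields $V_\mc R(f;s,t)\le\int_s^t\mc R(\tau,\dot f(\tau))\d\tau$; this already shows $f\in BV_\mc R$ and, via additivity \ref{var:a}, that $F$ is absolutely continuous dominated by the integral of $\mc R(\cdot,\dot f(\cdot))$.

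For the converse I first upgrade $f\in BV_\mc R$ with $F$ absolutely continuous to $f\in AC([a,b];X)$. The key inequality, obtained by subadditivity of $\mc R(s,\cdot)$ on a fine partition of $[s,t]$ and \eqref{Rcont}, is
\begin{equation*}
\mc R(s,f(t)-f(s))\le V_\mc R(f;s,t)+\sigma(t-s)V_\psi(f;s,t)\le V_\mc R(f;s,t)\bigl(1+\sigma(b-a)/\alpha_*\bigr),
\end{equation*}
which combined with \ref{psi4} and \eqref{Rbound} gives $|f(t)-f(s)|\lesssim F(t)-F(s)$; the absolute continuity of $F$ then transfers to $f$. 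Next, $\dot f$ exists a.e., and applying the same bound on $[\tau,\tau+h]$, dividing by $h$, and using one-homogeneity gives
\begin{equation*}
\mc R\!\left(\tau,\tfrac{f(\tau+h)-f(\tau)}{h}\right)\le\frac{F(\tau+h)-F(\tau)}{h}+\sigma(h)\,\frac{\tilde F(\tau+h)-\tilde F(\tau)}{h}.
\end{equation*}
Taking $h\to 0^+$, the RHS converges a.e. to $\dot F(\tau)$ (since $\sigma(h)\to 0$ and $\dot{\tilde F}(\tau)<+\infty$ a.e.), while lower semicontinuity of $\mc R(\tau,\cdot)$ gives $\mc R(\tau,\dot f(\tau))\le\liminf_{h\to 0^+}\mc R(\tau,\tfrac{f(\tau+h)-f(\tau)}{h})$, yielding $\mc R(\tau,\dot f(\tau))\le\dot F(\tau)$ a.e. Since $\dot F\in L^1$, this proves $f\in AC_\mc R$. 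Integrating this last pointwise inequality and comparing with the forward inequality forces equality $V_\mc R(f;s,t)=\int_s^t\mc R(\tau,\dot f(\tau))\d\tau$.

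I expect the main technical subtlety to be precisely the handling of the correction term $\sigma(\cdot)V_\psi$ in both directions: it must be controlled by $V_\mc R$ via the lower bound in \eqref{Rbound}, and the absolute continuity of $\tilde F$ must be extracted from that of $F$, so that the remainder vanishes in the limit $h\to 0^+$ without contributing to $\dot F$. Everything else is routine once one uses subadditivity together with the fact that the limit defining $V_\mc R$ is independent of the fine partition chosen.
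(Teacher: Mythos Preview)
Your proposal is correct and follows essentially the same route as the paper's proof: the forward bound via Jensen plus the $\sigma$-correction from \eqref{Rcont}, and the converse via the inequality $\mc R(\tau,f(\tau+h)-f(\tau))\le V_\mc R(f;\tau,\tau+h)+\sigma(h)V_\psi(f;\tau,\tau+h)$, divided by $h$ and combined with lower semicontinuity. The only cosmetic difference is that the paper invokes Proposition~\ref{equivrabs} directly (using $V_\mc R\ge\alpha_*V_\psi\ge\alpha_*\psi(f(t)-f(s))$) to recover absolute continuity of $f$, whereas you pass through $\mc R(s,f(t)-f(s))$ first; both are equivalent once you note that the increment bound $\alpha_*(\tilde F(t)-\tilde F(s))\le F(t)-F(s)$ (not merely the pointwise bound $\alpha_*\tilde F\le F$) is what transfers absolute continuity.
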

\begin{proof}
	Assume $f$ is $\mc R$-absolutely continuous. We fix $a\le s\le t\le b$ and we consider a fine sequence of partitions of $[s,t]$. Thanks to \eqref{Rbound} and \eqref{Rcont} we estimate:
	\begin{align*}
		\sum_{k=1}^{n}\mc R(t_{k-1},f(t_k)-f(t_{k-1}))&\le \sum_{k=1}^{n}\int_{t_{k-1}}^{t_k}\mc R(t_{k-1},\dot f(\tau))\d\tau\\
		&\le\sum_{k=1}^{n}\left(\int_{t_{k-1}}^{t_k}\mc R(\tau,\dot f(\tau))\d\tau +\int_{t_{k-1}}^{t_k}\psi(\dot{f}(\tau))\sigma(\tau-t_{k-1})\d\tau \right)\\
		&\le \int_{s}^{t}\mc R(\tau,\dot f(\tau))\d\tau+\sup\limits_{k=1,\dots n}\sigma(t_k-t_{k-1})\int_{s}^{t}\psi(\dot f(\tau))\d\tau.
	\end{align*}
	Letting $n\to +\infty$ (we again recall \eqref{finezza}) we deduce 
	\begin{equation}\label{firstineq}
		V_\mc R(f;s,t)\le\int_{s}^{t}\mc R(\tau,\dot{f}(\tau))\d\tau,
	\end{equation}
	thus $f$ is of bounded $\mc R$-variation and the $\mc R$-variation is absolutely continuous.\par 
	To obtain also the other implication and the opposite inequality in \eqref{firstineq} we argue as follows: first of all we notice that \eqref{Rbound} implies:
	\begin{equation}\label{estpsi}
		V_\mc R(f;s,t)\ge \alpha_*V_\psi(f;s,t)\ge\alpha_*\psi(f(t)-f(s)),\quad\text{ for every }a\le s\le t\le b,
	\end{equation}
	and thus $f$ is $\mc R$-absolutely continuous by applying Proposition~\ref{equivrabs} (thus \ref{psi4} here is needed). To conclude, introducing the notation $v_\mc R(t):=V_\mc R(f;a,t)$, we only need to prove that $\dot{v}_\mc R(\tau)\ge \mc R(\tau,\dot{f}(\tau))$ for almost every $\tau\in[a,b]$.\par 
	With this aim we fix a point $\tau$ of differentiability for both $v_\mc R$ and $f$, and we consider $h>0$. By using \eqref{Rcont} we obtain:
	\begin{equation*}
		v_\mc R(\tau+h)-v_\mc R(\tau)=V_\mc R(f;\tau,\tau+h)\ge \mc R(\tau,f(\tau+h)-f(\tau))-V_\psi(f;\tau,\tau+h)\sigma(h).
	\end{equation*}
	Hence, letting $h\to 0$ we deduce:
	\begin{align*}
		\dot{v}_\mc R(\tau)&\ge\liminf\limits_{h\to 0}\mc R\left(\tau,\frac{f(\tau+h)-f(\tau)}{h}\right)-\lim\limits_{h\to 0}\frac 1hV_\psi(f;\tau,\tau+h)\sigma(h)\\
		&\ge \mc R(\tau,\dot{f}(\tau)),
	\end{align*}
	where the limit vanishes if we pick $\tau$ which is also a differentiability point of $V_\psi(f;a,\cdot)$, which is absolutely continuous by \eqref{estpsi}. Hence the proof is complete.
\end{proof}
Like in the classical case, the $\mc R$-variation is pointwise weakly lower semicontinuous, as stated in the following lemma:
\begin{lemma}\label{weaklscvar}
	 Let $\{f_j\}_{j\in \N}$ be a sequence of functions from $[a,b]$ to $X$ such that $f_j(t)\wto f(t)$ weakly for every $t\in[a,b]$. Then one has
	 \begin{equation*}
	 	V_\mc R(f;s,t)\le \liminf\limits_{j\to +\infty}V_\mc R(f_j;s,t),\quad\text{ for every }a\le s\le t\le b.
	 \end{equation*}
\end{lemma}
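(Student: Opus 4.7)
The plan is to approximate $V_\RR(f;s,t)$ from below by Riemann-type sums along finite partitions, apply the weak lower semicontinuity of $\RR(\tau,\cdot)$ termwise to pass to the limit in $j$, and then compare such finite sums against $V_\RR(f_j;s,t)$ through a refinement estimate. The key technical difficulty is that, because $\RR$ depends on time, the $\RR$-variation is defined as a limit along fine partitions and not as a supremum, so the inequality $\sum_k \RR(t_{k-1},g(t_k)-g(t_{k-1}))\le V_\RR(g;s,t)$ is not available for free.

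First, if $\liminf_{j\to+\infty}V_\RR(f_j;s,t)=+\infty$ there is nothing to prove, so I extract a subsequence (not relabeled) along which $V_\RR(f_j;s,t)\to L<+\infty$. By \eqref{Rbound} the quantities $V_\psi(f_j;s,t)\le V_\RR(f_j;s,t)/\alpha_*$ are then uniformly bounded.

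Next, I would establish the refinement estimate: for any finite partition $P=\{s=\tau_0<\tau_1<\dots<\tau_m=t\}$, any $g\colon[s,t]\to X$, and writing $S(P,g):=\sum_{k=1}^m \RR(\tau_{k-1},g(\tau_k)-g(\tau_{k-1}))$ and $|P|:=\max_k(\tau_k-\tau_{k-1})$, one has
\begin{equation*}
S(P,g)\le V_\RR(g;s,t)+\sigma(|P|)\,V_\psi(g;s,t).
\end{equation*}
Given any refinement $P'\supseteq P$, I bound the single term $\RR(\tau_{k-1},g(\tau_k)-g(\tau_{k-1}))$ by the sum along the subpartition of $P'$ inside $[\tau_{k-1},\tau_k]$ using the subadditivity of $\RR(\tau_{k-1},\cdot)$ (consequence of convexity plus one-homogeneity); then, on each subinterval, I replace the frozen time $\tau_{k-1}$ with the corresponding left endpoint via \eqref{Rcont}, at the cost of an error of the form $\psi(\cdot)\sigma(|P|)$. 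Summing over $k$ yields $S(P,g)\le S(P',g)+\sigma(|P|)\sum_{P'}\psi(\cdot)$, and the last sum is $\le V_\psi(g;s,t)$ because $\psi$ is time-independent and so its variation \emph{can} be taken as a supremum. Letting $P'$ vary along a fine sequence of refinements of $P$ gives $S(P',g)\to V_\RR(g;s,t)$ and hence the claim.

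Finally, I pick a fine sequence of partitions $P_n$ of $[s,t]$ along which $S(P_n,f)\to V_\RR(f;s,t)$, as provided by \eqref{rvar}. Since $\RR(\tau,\cdot)$ is convex and strongly lower semicontinuous on the reflexive space $X$, it is also weakly lower semicontinuous, so for each fixed $n$ the pointwise weak convergence $f_j(\tau_k)\wto f(\tau_k)$ gives $S(P_n,f)\le \liminf_{j\to+\infty}S(P_n,f_j)$. Combining with the refinement estimate applied to each $f_j$ leads to
\begin{equation*}
S(P_n,f)\le \liminf_{j\to+\infty}V_\RR(f_j;s,t)\bigl(1+\sigma(|P_n|)/\alpha_*\bigr)=L\bigl(1+\sigma(|P_n|)/\alpha_*\bigr),
\end{equation*}
and sending $n\to+\infty$, so that $|P_n|\to 0$ and thus $\sigma(|P_n|)\to 0$, concludes the proof. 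The main obstacle, as already emphasized, is exactly the time-dependence of $\RR$: in the autonomous case the bound $S(P,g)\le V_\RR(g;s,t)$ is immediate from the supremum characterization and the lemma would follow directly from weak lower semicontinuity of each summand, whereas here one has to introduce and control the error term $\sigma(|P|)V_\psi(g;s,t)$ via the refinement argument and the uniform bound on $V_\psi(f_j;s,t)$.
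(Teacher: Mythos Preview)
Your proof is correct and follows essentially the same approach as the paper: both arguments pass to the limit termwise on a fine partition via weak lower semicontinuity of $\RR(\tau,\cdot)$, and both control the finite-partition sum against $V_\RR(f_j;s,t)$ by the same refinement estimate $S(P,g)\le V_\RR(g;s,t)\bigl(1+\sigma(|P|)/\alpha_*\bigr)$, derived from subadditivity, \eqref{Rcont} and \eqref{Rbound}. The paper states this estimate directly (referring to \cite{HeiMielk} for details), while you spell out the refinement argument, but the strategy is identical.
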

\begin{proof}
	We only sketch the proof, see the Appendix of \cite{HeiMielk} for more details. If $s=t$ the inequality is trivial, thus let us fix $a\le s< t\le b$ and without loss of generality we assume $\liminf\limits_{j\to +\infty}V_\mc R(f_j;s,t)<+\infty$. We now consider a fine sequence of partitions of $[s,t]$ and, recalling that convexity plus lower semicontinuity implies weak lower semicontinuity, we obtain:
	\begin{align}\label{first}
		\sum_{k=1}^{n}\mc R(t_{k-1},f(t_k)-f(t_{k-1}))&\le \liminf\limits_{j\to +\infty}\sum_{k=1}^{n}\mc R(t_{k-1},f_j(t_k)-f_j(t_{k-1})).
	\end{align}
	We now fix $j\in \N$ and we notice that by subadditivity (ensured by convexity and one homogeneity), \eqref{Rbound} and \eqref{Rcont} we have
	\begin{equation}\label{second}
		\begin{aligned}
		\sum_{k=1}^{n}\mc R(t_{k-1},f_j(t_k)-f_j(t_{k-1}))&\le V_\mc R(f_j;s,t)+V_\psi(f_j;s,t)\sup_{k=1,\dots n}\sigma(t_k-t_{k-1})\\
		&\le V_\mc R(f_j;s,t)\left(1+\frac{1}{\alpha_*}\sup_{k=1,\dots n}\sigma(t_k-t_{k-1})\right).
		\end{aligned}
	\end{equation}
	Combining \eqref{first} and \eqref{second} we hence deduce:
	\begin{equation*}
			\sum_{k=1}^{n}\mc R(t_{k-1},f(t_k)-f(t_{k-1}))\le \liminf\limits_{j\to +\infty}V_\mc R(f_j;s,t)\left(1+\frac{1}{\alpha_*}\sup_{k=1,\dots n}\sigma(t_k-t_{k-1})\right).
	\end{equation*}
	Letting $n\to +\infty$ and recalling \eqref{finezza} we conclude.
\end{proof}
We finally state and prove a useful generalisation in $BV_\mc R([a,b];X)$ of the following classical result: a sequence of nondecreasing and continuous scalar functions pointwise converging to a continuous function (in a compact interval) actually converges uniformly.
\begin{lemma}\label{uniformconv}
    Let $\{f_j\}_{j\in \N}\subseteq BV_\mc R([a,b];X)$ be a sequence of functions pointwise strongly converging to $f\in BV_\mc R([a,b];X)$. Assume that:
    \begin{itemize}
        \item $V_\mc R(f_j;a,\cdot)$ are continuous in $[a,b]$ for every $j\in\N$ and $V_\mc R(f;a,\cdot)$ is continuous in $[a,b]$;
        \item $\lim\limits_{j\to +\infty} V_\mc R(f_j;a,t)=V_\mc R(f;a,t)$, for every $t\in[a,b]$.
    \end{itemize}
    Then the (strong) convergence of $f_j$ to $f$ is actually uniform in $[a,b]$.
\end{lemma}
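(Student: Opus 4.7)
The plan is to exploit the monotonicity of the scalar variation functions $v_j(t) := V_\mc R(f_j;a,t)$ and $v(t) := V_\mc R(f;a,t)$ to partition $[a,b]$ into finitely many short pieces on which the oscillation of $f_j$ is controlled \emph{uniformly in} $j$, and then use the pointwise convergence at the (finitely many) partition nodes to conclude.

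First I would record the key comparison estimate: by \ref{psi4} combined with \eqref{estpsi} applied to each $f_j$ and to $f$, for every $a\leq s\leq t\leq b$ it holds
\begin{equation*}
c\alpha_*\,\abs{f_j(t)-f_j(s)}\;\leq\; \alpha_*\,\psi(f_j(t)-f_j(s))\;\leq\; V_\mc R(f_j;s,t)\;=\; v_j(t)-v_j(s),
\end{equation*}
and analogously for $f$. Hence controlling the displacement of $f_j$ reduces to controlling increments of the (nondecreasing) scalar functions $v_j$.

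Next, fix an arbitrary $\eta>0$. Since $v$ is continuous on the compact interval $[a,b]$, it is uniformly continuous, so I can choose a partition $a=t_0<t_1<\cdots<t_N=b$ with mesh small enough that $v(t_k)-v(t_{k-1})<\eta$ for every $k$. By the pointwise convergence hypotheses applied to the \emph{finite} set of nodes, for $j$ large enough I simultaneously have $\abs{f_j(t_k)-f(t_k)}<\eta$ and $\abs{v_j(t_k)-v(t_k)}<\eta$ for all $k=0,1,\dots,N$.

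For an arbitrary $t\in[a,b]$, located in some $[t_{k-1},t_k]$, I would then split
\begin{equation*}
\abs{f_j(t)-f(t)}\;\leq\;\abs{f_j(t)-f_j(t_{k-1})}+\abs{f_j(t_{k-1})-f(t_{k-1})}+\abs{f(t_{k-1})-f(t)}.
\end{equation*}
The middle term is $<\eta$ by the node-wise convergence. For the first term, the comparison estimate and the monotonicity of $v_j$ give $c\alpha_*\abs{f_j(t)-f_j(t_{k-1})}\leq v_j(t_k)-v_j(t_{k-1})\leq (v(t_k)-v(t_{k-1}))+2\eta<3\eta$. The third term is bounded analogously using the monotonicity of $v$ alone, yielding $c\alpha_*\abs{f(t)-f(t_{k-1})}<\eta$. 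Summing the three contributions produces a bound of the form $\abs{f_j(t)-f(t)}\leq C_{c,\alpha_*}\,\eta$, uniformly in $t\in[a,b]$ and in $j$ large. Since $\eta>0$ is arbitrary, uniform convergence follows.

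I do not expect any genuine obstacle: the whole content of the lemma is the classical Pólya-type ``monotone transfer'' mechanism, and neither convexity nor the time-dependence of $\mc R$ enter the argument beyond the coercivity \ref{psi4} and the lower bound \eqref{estpsi} used to compare $\abs{f_j(t)-f_j(s)}$ with $v_j(t)-v_j(s)$. The only small care needed is that the pointwise control $\abs{v_j(t_k)-v(t_k)}<\eta$ must be invoked on \emph{both} endpoints of each subinterval to absorb the $2\eta$ loss when replacing $v_j(t_k)-v_j(t_{k-1})$ with $v(t_k)-v(t_{k-1})$.
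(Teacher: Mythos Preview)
Your proposal is correct and follows essentially the same approach as the paper: both use the coercivity bound $c\alpha_*\abs{f_j(t)-f_j(s)}\le V_\mc R(f_j;s,t)$, a finite partition governed by the uniform continuity of the limit variation $v$, and a three-term splitting at the partition nodes. The only cosmetic difference is that the paper first invokes the classical P\'olya-type result to upgrade $v_j\to v$ to uniform convergence and then derives equicontinuity of the $f_j$, whereas you bypass that intermediate statement by controlling $v_j(t_k)-v_j(t_{k-1})$ directly via node-wise pointwise convergence; the content is the same.
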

\begin{proof}
	We denote for simplicity $v^j_\mc R(t):=V_\mc R(f_j;a,t)$ and $v_\mc R(t):=V_\mc R(f;a,t)$. By assumptions and since the $\mc R$-variation is nondecreasing, we deduce that $\{v^j_\mc R\}_{j\in\N}$ is a sequence of nondecreasing and continuous functions pointwise converging to the nondecreasing continuous function $v_\mc R$; this implies that the convergence is actually uniform in $[a,b]$.\par
	We now fix  $s,t\in[a,b]$ and we estimate by using \ref{psi4} and \eqref{Rbound}:
	\begin{align*}
	c\alpha_*|f_j(t)-f_j(s)|&\le |v^j_\mc R(t)-v^j_\mc R(s)|\\
	&\le|v_\mc R(t)-v_\mc R(s)|+|v^j_\mc R(t)-v_\mc R(t)|+|v^j_\mc R(s)-v_\mc R(s)|\\
	&\le |v_\mc R(t)-v_\mc R(s)|+2\max\limits_{\tau\in[a,b]}|v^j_\mc R(\tau)-v_\mc R(\tau)|.
	\end{align*}
	Since $v^j_\mc R$ uniformly converges to $v_\mc R$ and $v_\mc R$ is (uniformly) continuous on $[a,b]$, we get that for every $\eps>0$ there exist $j_\eps\in\N$ and $\delta_\eps>0$ such that, assuming $|t-s|\le\delta_\eps$, it holds:
	\begin{equation}\label{equicont}
		|f_j(t)-f_j(s)|\le \frac{\eps}{3}, \quad\text{ for every }j>j_\eps.
	\end{equation}
	So we fix $\eps>0$ and we consider a finite partition of $[a,b]$ of the form $a=\tau_0<\tau_1<\dots<\tau_{N_\eps}=b$ such that $\max\limits_{k=1,\dots N_\eps}(\tau_k-\tau_{k-1})\le\delta_\eps$. This means that for every $t\in[a,b]$ there exists a point of this partition, denoted by $\tau(t)$, for which $|t-\tau(t)|\le\delta_\eps$. Without loss of generality we can assume that $\delta_\eps$ is also the treshold given by the (uniform) continuity of $f$ (indeed notice that $f$ is continuous since $v_\mc R$ is continuous by assumption). Thus by means of \eqref{equicont} we deduce that for every $j>j_\eps$ and for every $t\in[a,b]$ we have:
	\begin{align*}
		|f_j(t)-f(t)|&\le |f_j(t)-f_j(\tau(t))|+|f_j(\tau(t))-f(\tau(t))|+|f(t)-f(\tau(t))|\\
		&\le \frac \eps 3+\max\limits_{k=0,\dots,N_\eps}|f_j(\tau_k)-f(\tau_k)|+\frac \eps 3.
	\end{align*}
	Since the maximum in the above estimate involves only a finite number of terms, by means of the assumption of pointwise convergence and by considering a possibly greater $J_\eps\ge j_\eps$ we conclude that for every $t\in [a,b]$ it holds
	\begin{equation*}
		|f_j(t)-f(t)|\le \eps,\quad\text{ for every }j>J_\eps,
	\end{equation*}
	and we conclude.
\end{proof}



\section{Differential and energetic solutions for the quasistatic problem}\label{secenergetic}

In this section we discuss the quasistatic problem \eqref{quasprob} and in particular the notion of \emph{energetic solution}, which we recalled in Definition \ref{defenergetic}. Hence all the assumption of the quasistatic problem \eqref{quasprob}, namely \ref{hyp:E1}--\ref{hyp:E5} and \ref{hyp:RK}, hold here. The main purpose of this section is to prove temporal regularity of the energetic solutions to \eqref{quasprob}, which we obtain in Proposition~\ref{regularity}. Such regularity will allow us to deduce the equivalence between the two notions of energetic and differential solutions. We also present some well known cases in which uniqueness for energetic (and differential) solutions holds; we point out that for a general elastic energy, as the one we consider here, the question of uniqueness is still open.

To start, we notice that, in the quasistatic setting, it is possible to provide a characterisation of differential solutions analogous to that of Proposition~\ref{equivdyn} for the dynamic problem. In fact, convexity leads to a better result, which also clarifies Definition \ref{defenergetic} of energetic solutions.
\begin{prop}\label{propdiffenerg}
	A function $x\in AC([0,T];X)$ is a differential solution of the quasistatic problem \eqref{quasprob} if and only if the initial position is attained and one of the following two equivalent conditions is satisfied:
	\smallskip
	\begin{enumerate}[label=\textup{(\arabic*)}]
	\addtolength{\itemindent}{-0.5 cm}
	\item \label{cond:quasi1} $\Biggl\{$\!
	\begin{minipage}{0.9\textwidth}
		\begin{enumerate}[label=\textup{(LS)}]
	\item \label{LS}\, $\mc R(t,v)+\langle D_x \mc E(t,x(t)),v\rangle\ge 0$ \quad for every $t\in[0,T]$ and for every $v\in X$;
	\end{enumerate}
\vspace{1mm}
	\begin{enumerate}[label=\textup{(LEB)}]
	\item \label{LEB}\, $\mc R(t,\dot{x}(t))+\langle D_x \mc E(t,x(t)),\dot{x}(t)\rangle= 0$\quad for a.e. $t\in[0,T]$;
	\end{enumerate}
	\end{minipage}
	\vspace{10 pt}
	\item \label{cond:quasi2} $\left\{\rule{0cm}{8mm}\right.$\!\!
	\begin{minipage}[c][8mm][c]{0.9\textwidth}
		\vspace{4pt}
	\begin{enumerate}[label=\textup{(GS)}]
	\item \label{diffGS}\, $\mc E(t,x(t))\le \mc E(t,v)+\mc R(t,v-x(t))$ \quad for every $t\in[0,T]$  and $v\in X$;
	\end{enumerate}
\begin{enumerate}[label=\textup{(EB)}]	
	\item \label{EB}\, $\displaystyle \mc E(t,x(t))+\!\!\int_{0}^{t}\!\!\!\mc R(\tau,\dot{x}(\tau))\d\tau=\mc E(0,x_0)+\!\!\int_{0}^{t}\!\frac{\partial}{\partial t} \mc E(\tau,x(\tau))\d\tau$\quad for every $t\in[0,T]$.
	\end{enumerate}
\end{minipage}	
\end{enumerate}
\smallskip
\end{prop}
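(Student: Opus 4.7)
The overall strategy is to establish the two equivalences in sequence: first that being a differential solution is equivalent to \ref{cond:quasi1} (i.e.~LS $\wedge$ LEB), and then that \ref{cond:quasi1} is equivalent to \ref{cond:quasi2} (i.e.~GS $\wedge$ EB). Convexity of $\mathcal{E}$ (via \ref{hyp:E2}) will only enter in the second equivalence, as well as the chain rule formula for $\mathcal E(\cdot, x(\cdot))$ established in Section~\ref{sec:Setting}.

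\smallskip

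\emph{Step 1: differential solution $\Longleftrightarrow$ \ref{cond:quasi1}.} By the very definition of subdifferential, the inclusion $-D_x\mathcal{E}(t,x(t))\in\partial_v\mathcal{R}(t,\dot{x}(t))$ is equivalent, for a.e.~$t$, to
\begin{equation*}
\mathcal{R}(t,v)+\langle D_x\mathcal{E}(t,x(t)),v-\dot{x}(t)\rangle\ge \mathcal{R}(t,\dot{x}(t))\qquad\text{for every }v\in X.
\end{equation*}
I plan to split this into LS and LEB by testing with two special choices of $v$: putting $v=n\tilde{v}$ for $n\in\mathbb{N}$, using the one-homogeneity of $\mathcal{R}(t,\cdot)$, dividing by $n$ and letting $n\to+\infty$ yields the inequality $\mathcal{R}(t,\tilde v)+\langle D_x\mathcal{E}(t,x(t)),\tilde v\rangle\ge 0$, which is LS. Choosing instead $v=0$ gives $\mathcal{R}(t,\dot{x}(t))+\langle D_x\mathcal{E}(t,x(t)),\dot{x}(t)\rangle\le 0$, while the opposite inequality follows from LS tested at $v=\dot{x}(t)$, producing LEB. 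Conversely, LS combined with LEB (used to replace the right-hand side above by~$0$) reconstructs the subdifferential inequality, so the two characterizations coincide.

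\smallskip

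\emph{Step 2: \ref{cond:quasi1} $\Longleftrightarrow$ \ref{cond:quasi2}.} For the implication ($\Rightarrow$), convexity of $\mathcal{E}(t,\cdot)$ gives the first-order inequality $\mathcal{E}(t,v)-\mathcal{E}(t,x(t))\ge \langle D_x\mathcal{E}(t,x(t)),v-x(t)\rangle$, and combining it with LS applied to $v-x(t)$ in place of $v$ immediately yields GS. To get EB, I use that $t\mapsto\mathcal{E}(t,x(t))$ is absolutely continuous (as noted in Section~\ref{sec:Setting}), together with the chain rule
\begin{equation*}
\frac{d}{dt}\mathcal{E}(t,x(t))=\frac{\partial}{\partial t}\mathcal{E}(t,x(t))+\langle D_x\mathcal{E}(t,x(t)),\dot{x}(t)\rangle \qquad \text{a.e.},
\end{equation*}
and integrate after substituting LEB. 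Conversely, starting from GS with the test vector $v=x(t)+h\tilde v$ for $h>0$, using $\mathcal{R}(t,h\tilde v)=h\mathcal{R}(t,\tilde v)$, dividing by $h$ and passing to the limit $h\to 0^+$ using the differentiability of $\mathcal{E}(t,\cdot)$, I recover LS. To obtain LEB from EB, I differentiate the energy identity in time and compare with the chain rule expansion of $\frac{d}{dt}\mathcal{E}(t,x(t))$, which yields the cancellation $\langle D_x\mathcal{E}(t,x(t)),\dot{x}(t)\rangle+\mathcal{R}(t,\dot{x}(t))=0$ for a.e.~$t$.

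\smallskip

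\emph{Anticipated difficulty.} The only nontrivial point is the rigorous use of the chain rule for $\mathcal{E}(\cdot,x(\cdot))$ along absolutely continuous curves, but this is already justified by the discussion following assumptions \ref{hyp:E1}--\ref{hyp:E5} in Section~\ref{sec:Setting}. The $n\to+\infty$ homogeneity trick in Step~1 is the same device already used in the proof of Proposition~\ref{equivdyn}, so no new ingredient is required; convexity of $\mathcal{E}$ is essential in Step~2 and is precisely where the quasistatic analysis departs from the purely dynamic one.
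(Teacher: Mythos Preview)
Your proof is correct and follows essentially the same route as the paper's: the argument for Step~1 is exactly the one from Proposition~\ref{equivdyn} (which the paper simply invokes), and Step~2 matches the paper's reasoning that \ref{LEB}$\Leftrightarrow$\ref{EB} by differentiation/integration via the chain rule, while \ref{LS}$\Leftrightarrow$\ref{diffGS} by one-homogeneity of $\mathcal R(t,\cdot)$ in one direction and convexity of $\mathcal E(t,\cdot)$ in the other. The only detail you omit, which the paper mentions explicitly, is that your Step~1 yields \ref{LS} only for a.e.~$t$, and the upgrade to \emph{every} $t\in[0,T]$ requires the continuity of $t\mapsto \mathcal R(t,v)+\langle D_x\mathcal E(t,x(t)),v\rangle$ (from \ref{hyp:R3}, \ref{hyp:E3} and $x\in AC$).
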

\begin{proof}
	The fact that $x\in AC([0,T];X)$ is a differential solution of \eqref{quasprob} if and only if the initial position is attained and \ref{cond:quasi1} is fulfilled follows by arguing as in the proof of Proposition~\ref{equivdyn}. Notice that the passage from a.e. to every time is granted by continuity.  We only need to show that \ref{cond:quasi1} and \ref{cond:quasi2} are equivalent; first of all we notice that \ref{LEB} is equivalent to \ref{EB} since we can obtain the first one by differentiating the second one. The fact that \ref{diffGS} implies \ref{LS} follows since $\mc R(t,\cdot)$ is one homogeneous, while the contrary follows since the function $v\mapsto \mc E(t,x(t)+v)$ is convex by \ref{hyp:E2}.
\end{proof}
\begin{rmk}
	As the reader can check from the proof, convexity assumption \ref{hyp:E2} is needed only to deduce the global stability \ref{diffGS} from the local one \ref{LS}.
\end{rmk}
\begin{rmk}
    We point out that, by \ref{EB}, any differential solution of \eqref{quasprob} is actually $\RR$-absolutely continuous. In particular, due to Proposition~\ref{abscontvar}, it is an energetic solution. 
\end{rmk}
We now pass to the main object of this section, namely the temporal regularity of energetic solutions. The argument follows the already consolidated ideas of \cite{Mielk}, \cite{MielkRoubbook}, and \cite{Thom}; the first step exploits uniform convexity to improve the estimate furnished by the global stability condition \ref{GS}. However, since in our setting uniform convexity holds only for the restricted energy $\mc E_\mathrm{sh}$, we need to introduce also the notion of restricted dissipation potential from \cite{Gid18}.

Given any functional $\Phi\colon X\to [0,+\infty]$ we thus define its (shape-)restricted version $\Phi_\mathrm{sh}\colon Z\to [0,+\infty]$ in the following way:
\begin{equation}\label{restricted}
\Phi_\mathrm{sh}(z):=\inf\{\Phi(x)\mid x\in X \text{ and } \piz(x)=z\}.
\end{equation}
The following properties are a straightforward consequence of the definition of $\Phi_\mathrm{sh}$:
\begin{itemize}
	\item if $\Phi^1\le \Phi^2$ on $X$, then $\Phi^1_\mathrm{sh}\le \Phi^2_\mathrm{sh}$ on $Z$;
	\item $\Phi_\mathrm{sh}(\piz(x))\le \Phi(x)$ for every $x\in X$;
	\item if $\Phi$ is positively homogeneous of degree one, then $\Phi_\mathrm{sh}$ is positively homogeneous of degree one.
\end{itemize}

Notice that not all the properties of $\RR$ are inherited by $\RR_\mathrm{sh}$: for instance, to obtain an upper bound analogous to \ref{prop:regR2} it is necessary to require  \ref{hyp:R5}, as we show in the following lemma.

\begin{lemma} \label{lemma:Zbound} Suppose in addition that $\RR$ satisfies \ref{hyp:R5}. If $(t,z)\in[0,T]\times Z$ is such that $\RR_\mathrm{sh}(t,z)<+\infty$, then
	\begin{equation*}
		\RR_\mathrm{sh}(t,z)\leq \alpha^*C_K\abs{z}_Z,
	\end{equation*} 
	with $\alpha^*$ and $C_K$ as in \ref{hyp:R2} and \ref{hyp:R5}, respectively.
\end{lemma}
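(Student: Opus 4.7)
The plan is a direct application of the definitions combined with assumption \ref{hyp:R5}. The essential observation is that the hypothesis $\RR_\mathrm{sh}(t,z)<+\infty$ forces the set of preimages of $z$ under $\piz$ intersected with $K$ to be nonempty, so that the first alternative in \ref{hyp:R5} is ruled out and the quantitative second alternative must hold.

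More precisely, I would first unravel the definition of $\RR_\mathrm{sh}$ from \eqref{restricted} together with the form of $\RR$ prescribed by \ref{hyp:RK}: since $\RR(t,x)=\chi_K(x)+\RR_\mathrm{finite}(t,x)$, any $x\in X$ with $\piz(x)=z$ and $x\notin K$ contributes $+\infty$ to the infimum, so if $\RR_\mathrm{sh}(t,z)<+\infty$ there must exist at least one $x\in K$ with $\piz(x)=z$. This eliminates the first bullet in \ref{hyp:R5}, so the second bullet applies, providing some $\bar x\in K$ with $\piz(\bar x)=z$ and $\abs{\bar x}\leq C_K\abs{z}_Z$.

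The conclusion then follows by plugging $\bar x$ as a competitor in the infimum defining $\RR_\mathrm{sh}$ and using the upper bound in \ref{hyp:R2}:
\begin{equation*}
\RR_\mathrm{sh}(t,z)\leq \RR(t,\bar x)=\chi_K(\bar x)+\RR_\mathrm{finite}(t,\bar x)=\RR_\mathrm{finite}(t,\bar x)\leq \alpha^*\abs{\bar x}\leq \alpha^*C_K\abs{z}_Z.
\end{equation*}

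There is no real obstacle here: the whole content of the lemma is packaged inside \ref{hyp:R5}, which was designed precisely to guarantee that finiteness of $\RR_\mathrm{sh}(t,z)$ automatically produces a preimage in $K$ whose $X$-norm is controlled linearly by $\abs{z}_Z$. The only mild subtlety to check is the implication $\RR_\mathrm{sh}(t,z)<+\infty\Rightarrow \exists x\in K\text{ with }\piz(x)=z$, which follows from the definition of $\chi_K$ and the fact that $\RR_\mathrm{finite}$ has finite values.
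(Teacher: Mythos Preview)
Your proof is correct and follows essentially the same approach as the paper's: both argue that $\RR_\mathrm{sh}(t,z)<+\infty$ forces the existence of some $x\in K$ with $\piz(x)=z$, invoke \ref{hyp:R5} to select such an $x$ with $\abs{x}\le C_K\abs{z}_Z$, and then use it as a competitor in the infimum together with the bound $\RR(t,x)\le\alpha^*\abs{x}$ from Corollary~\ref{propertiesR}. Your version is slightly more explicit about why the finiteness hypothesis rules out the first alternative in \ref{hyp:R5}, but the argument is otherwise identical.
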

\begin{proof}
	Since $\RR_\mathrm{sh}(t,z)<+\infty$, there exists $\tilde x\in K$ such that $\piz(\tilde x)=z$. Thus, by \ref{hyp:R5} it is possible to select this $\tilde x$ in such a way that $\abs{\tilde x}\leq C_K \abs{z}_Z$. Hence, recalling Corollary~\ref{propertiesR}, we have
\begin{equation*}
\RR_\mathrm{sh}(t,z)\leq \RR(t,\tilde x) \leq \alpha^*\abs{\tilde x}\leq \alpha^*C_K\abs{z}_Z,
\end{equation*}
and we conclude.
\end{proof}

We now prove that the global stability condition \ref{GS} is actually equivalent to an enhanced version of stability.
\begin{lemma}[\textbf{Improved Stability}]
	Fix $t\in[0,T]$. If $x^*\in X$ satisfies
	\begin{equation}\label{stabstar}
	\mathcal{E}(t,x^*)\le\mathcal{E}(t,x)+\mathcal{R}(t,x-x^*),\quad\text{for every }x\in X,
	\end{equation}
	then also the following stronger version of stability holds true:
	\begin{equation}\label{improvedgs}
	\mathcal{E}(t,x^*)+\frac{\mu}{2}|\piz(x^*)-\piz(x)|_Z^2\le\mathcal{E}(t,x)+\mathcal{R}_\mathrm{sh}(t,\piz(x)-\piz(x^*)),\quad\text{for every }x\in X.
	\end{equation}
\end{lemma}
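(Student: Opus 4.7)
The strategy is to combine the hypothesis \eqref{stabstar} with the $\mu$-uniform convexity \ref{hyp:E2} of $\mc E_\mathrm{sh}$, using a convex combination trick along a direction that lifts $\piz(x)-\piz(x^*)$ to $X$.

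Fix $x\in X$ and, without loss of generality, assume $\mc R_\mathrm{sh}(t,\piz(x)-\piz(x^*))<+\infty$ (otherwise the right-hand side is trivially $+\infty$). By definition \eqref{restricted} of the restricted dissipation, I pick an arbitrary representative $y\in X$ with $\piz(y)=\piz(x)-\piz(x^*)$. The key observation is that $\piz(x^*+\theta y)=(1-\theta)\piz(x^*)+\theta\piz(x)$, which allows me to test the given stability \eqref{stabstar} at $x^*+\theta y$ and then invoke uniform convexity of $\mc E_\mathrm{sh}$ at the level of shapes.

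Concretely, for $\theta\in(0,1]$, the positive one-homogeneity of $\mc R(t,\cdot)$ combined with \eqref{stabstar} applied to $x^*+\theta y$ gives
\begin{equation*}
\mc E(t,x^*)\leq \mc E_\mathrm{sh}\bigl(t,(1-\theta)\piz(x^*)+\theta\piz(x)\bigr)+\theta\mc R(t,y).
\end{equation*}
The $\mu$-uniform convexity \ref{hyp:E2} applied to the first term on the right bounds it above by $(1-\theta)\mc E(t,x^*)+\theta\mc E(t,x)-\frac{\mu}{2}\theta(1-\theta)|\piz(x^*)-\piz(x)|_Z^2$. Subtracting $(1-\theta)\mc E(t,x^*)$ from both sides and dividing by $\theta>0$ yields
\begin{equation*}
\mc E(t,x^*)+\frac{\mu}{2}(1-\theta)|\piz(x^*)-\piz(x)|_Z^2\leq \mc E(t,x)+\mc R(t,y).
\end{equation*}

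Finally, letting $\theta\to 0^+$ removes the factor $(1-\theta)$ and, since $y$ was an arbitrary lifting of $\piz(x)-\piz(x^*)$, taking the infimum over such $y$ on the right produces exactly $\mc R_\mathrm{sh}(t,\piz(x)-\piz(x^*))$, yielding \eqref{improvedgs}. The argument is essentially routine; the only delicate point is ensuring that the convex-combination trick works through the projection $\piz$, which is why the infimum in \eqref{restricted} appears naturally at the end rather than the beginning of the computation.
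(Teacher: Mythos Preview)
Your proof is correct and follows essentially the same approach as the paper: test the stability condition with a convex interpolation, apply the $\mu$-uniform convexity of $\mc E_\mathrm{sh}$, divide by $\theta$, and let $\theta\to 0^+$. The only cosmetic difference is the ordering: the paper first passes from $\RR$ to $\RR_\mathrm{sh}$ (obtaining the intermediate inequality \eqref{stabstarsh}) and then tests with $\theta x+(1-\theta)x^*$, whereas you test \eqref{stabstar} directly with $x^*+\theta y$ for an arbitrary lift $y$ and take the infimum over lifts at the very end---both routes are equivalent.
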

\begin{proof}
	From the definition of restricted dissipation potential \eqref{restricted} and recalling that $\mc E(t,\cdot)=\mc E_\mathrm{sh}(t,\piz(\cdot))$, we deduce that \eqref{stabstar} implies:
	\begin{equation}\label{stabstarsh}
	\mathcal{E}(t,x^*)\le\mathcal{E}(t,x)+\mathcal{R}_\mathrm{sh}(t,\piz(x)-\piz(x^*)),\quad\text{for every }x\in X.
	\end{equation}
	Furthermore, by means of \ref{hyp:E2} we know that for every $x_1,x_2\in X$ and for every $\theta\in(0,1)$ it holds:
	\begin{equation}\label{theta}
	\mc E(t,\theta x_1+(1-\theta)x_2)\le \theta\mc E(t,x_1)+(1-\theta)\mc E(t,x_2)-\frac{\mu}{2}\theta(1-\theta)|\piz(x_1)-\piz(x_2)|_Z^2.
	\end{equation}
	We now fix $x\in X$ and we choose $\theta x+(1-\theta)x^*$ as competitor for $x^*$ in \eqref{stabstarsh}; by using the one-homogeneity of $\mc R_\mathrm{sh}(t,\cdot)$, the linearity of $\piz$, and \eqref{theta}, we get:
	\begin{align*}
	\mathcal{E}(t,x^*)&\le \mc E(t,\theta x+(1-\theta)x^*)+\mc R_\mathrm{sh}(t,\theta(\piz(x)-\piz(x^*)))\\
	&\le \theta\mc E(t,x)+(1-\theta)\mc E(t,x^*)-\frac{\mu}{2}\theta(1-\theta)\abs{\piz(x)-\piz(x^*)}_Z^2+\theta\mc R_\mathrm{sh}(t,\piz(x)-\piz(x^*)).
	\end{align*}
	By subtracting $\mathcal{E}(t,x^*)$ from both sides and dividing by $\theta$ we hence obtain:
	\begin{align*}
	0\le \mc E(t,x)-\mc E(t,x^*)-\frac{\mu}{2}(1-\theta)\abs{\piz(x)-\piz(x^*)}_Z^2+\mc R_\mathrm{sh}(t,\piz(x)-\piz(x^*)).
	\end{align*}
	We conclude letting $\theta\searrow 0$.
\end{proof}
Next lemma will be used in the proof of Proposition~\ref{regularity}.
\begin{lemma}\label{inequalityintegral}
	Let $(V,\Vert\cdot\Vert)$ be a normed space and let $f\colon[a,b]\to V$ be a bounded measurable function such that:
	\begin{equation}\label{hypo}
	\Vert f(t)-f(s)\Vert^2\le \int_{s}^{t}\Vert f(t)-f(\tau)\Vert g(\tau)d\tau+\Vert f(t)-f(s)\Vert\int_{s}^{t}h(\tau)d\tau,\quad\text{for every }a\le s\le t\le b,
	\end{equation}
	for some nonnegative $g,h\in L^1(a,b)$. Then it holds:
	\begin{equation*}
	\Vert f(t)-f(s)\Vert\le \int_{s}^{t}\big(g(\tau)+h(\tau)\big)d\tau,\quad\text{for every }a\le s\le t\le b.
	\end{equation*}
\end{lemma}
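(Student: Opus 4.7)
The plan is to reduce to the case of a monotone quantity by introducing, for a fixed $t \in [a,b]$, the auxiliary function
\begin{equation*}
\psi(s) := \sup_{\tau \in [s,t]} \|f(t) - f(\tau)\|, \qquad s \in [a,t].
\end{equation*}
Since $f$ is bounded, $\psi$ is finite, and by construction $\psi$ is nonincreasing on $[a,t]$ with $\psi(t)=0$. Moreover $\|f(t)-f(s)\|\le \psi(s)$, so it suffices to prove $\psi(s) \le \int_s^t (g+h)\,d\tau$ for every $s \in [a,t]$.

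To do this, fix $s \in [a,t]$ and choose a sequence $\{s_n\} \subset [s,t]$ with $\|f(t)-f(s_n)\| \to \psi(s)$. Applying the standing hypothesis \eqref{hypo} to the pair $(s_n,t)$ instead of $(s,t)$ yields
\begin{equation*}
\|f(t)-f(s_n)\|^2 \le \int_{s_n}^t \|f(t)-f(\tau)\|\,g(\tau)\,d\tau + \|f(t)-f(s_n)\| \int_{s_n}^t h(\tau)\,d\tau.
\end{equation*}
Now comes the key observation: since $s_n \ge s$, for every $\tau \in [s_n,t]$ we have $\tau \in [s,t]$ and hence $\|f(t)-f(\tau)\| \le \psi(s)$; likewise $\|f(t)-f(s_n)\|\le \psi(s)$. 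Bounding the first factor in each integrand by $\psi(s)$ and enlarging the domains of integration from $[s_n,t]$ to $[s,t]$, we obtain
\begin{equation*}
\|f(t)-f(s_n)\|^2 \le \psi(s)\int_s^t g(\tau)\,d\tau + \psi(s)\int_s^t h(\tau)\,d\tau = \psi(s)\int_s^t (g(\tau)+h(\tau))\,d\tau.
\end{equation*}

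Passing to the limit $n\to\infty$ gives $\psi(s)^2 \le \psi(s)\int_s^t (g+h)\,d\tau$; dividing by $\psi(s)$ (the conclusion being trivial if $\psi(s)=0$) yields the sought estimate. There is no real obstacle here: the only subtle point is that the supremum defining $\psi(s)$ may not be attained, which is why one argues along an approximating sequence $s_n$ rather than with a maximizer — but this is a purely technical step and the monotonicity of $\psi$ makes the two integral terms controllable uniformly in $n$.
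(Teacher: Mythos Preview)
Your proof is correct and follows essentially the same approach as the paper: your function $\psi(s)$ is precisely the paper's $\overline{\beta_t}(s):=\sup_{\theta\in[s,t]}\|f(t)-f(\theta)\|$, and both arguments bound the right-hand side of \eqref{hypo} by $\psi(s)\int_s^t(g+h)\,d\tau$ before taking the supremum on the left. The only cosmetic difference is that the paper applies \eqref{hypo} to an arbitrary $\theta\in[s,t]$ and then passes to the sup directly, whereas you pass to the sup via an approximating sequence $s_n$; the content is identical.
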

\begin{proof}
	Fix $t\in[a,b]$. For $s\in[a,t]$ we define the functions $\beta_t(s):=\Vert f(t)-f(s)\Vert$ and $\overline{\beta_t}(s):=\sup\limits_{\theta\in[s,t]}\beta_t(\theta)$, where the latter is finite since $f$ is bounded.
	
	We now fix $s\in [a,t]$ and, by using \eqref{hypo}, for every $\theta\in[s,t]$ we hence obtain:
	\begin{align*}
	\beta_t(\theta)^2&\le \int_{\theta}^{t}\beta_t(\tau)g(\tau)\d\tau+\beta_t(\theta)\int_{\theta}^{t}h(\tau)\d\tau\\
	&\le \overline{\beta_t}(s)\int_{s}^{t}\big(g(\tau)+h(\tau)\big)d\tau,
	\end{align*}
	which implies
	\begin{equation*}
	\overline{\beta_t}(s)^2\le \overline{\beta_t}(s)\int_{s}^{t}\big(g(\tau)+h(\tau)\big)d\tau,\quad\text{for every }a\le s\le t\le b.
	\end{equation*}
	Since $\beta_t(s)\le\overline{\beta_t}(s)$, we conclude.
\end{proof}
We are now in a position to state and prove the main result of this section:
\begin{prop}\label{regularity}
Assume that $\RR$ satisfies \ref{hyp:RK} and $\mc E(t,x)=\mc E_\mathrm{sh}(t,\piz (x))$ satisfies \ref{hyp:E1}--\ref{hyp:E5}. Then any energetic solution $x$ for \eqref{quasprob} is continuous.\par 
Suppose in addition that \ref{hyp:R5} holds or, alternatively, that $\mc R$ does not depend on time. Then $x$ is $\mc R$--absolutely continuous and, therefore, a differential solution of \eqref{quasprob}.
\end{prop}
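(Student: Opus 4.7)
For the continuity, I would apply \ref{WEB} to deduce that $t\mapsto \mc E(t,x(t))+V_\mc R(x;0,t)$ is absolutely continuous on $[0,T]$. Taking left limits $s\nearrow t$ in this identity and using continuity of $\mc E$ together with Proposition \ref{propertiesvariationtime}\ref{var:c} yields the jump formula
\[
\mc E(t,x^-(t))-\mc E(t,x(t))=V_\mc R(x;t-,t)=\mc R(t,x(t)-x^-(t)).
\]
Passing \ref{GS} at $s$ to the limit as $s\nearrow t$ (competitors $v$ with $v-x^-(t)\in K$ satisfy $v-x(s)\in K$ eventually, since $x^-(t)-x(s)\in K$ and $K$ is a cone, while $\mc R$ is continuous on $K$ by \ref{hyp:R3}) shows that $x^-(t)$ is itself a stable state at time $t$. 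Applying the Improved Stability lemma to $x^-(t)$ with test point $x(t)$ and combining with $\mc R_\mathrm{sh}(t,\piz(x(t)-x^-(t)))\leq \mc R(t,x(t)-x^-(t))$ and the jump identity forces $\piz(x^-(t))=\piz(x(t))$; substituting back in the improved stability yields $\mc R(t,x(t)-x^-(t))=0$, so that $x(t)=x^-(t)$ by the coercivity in \ref{prop:regR2}. Right-continuity is analogous, using stability at $t$ tested against $v\to x^+(t)$.

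For the $\mc R$-absolute continuity, with $x$ now continuous I would first establish the variation identity
\[
V_\mc R(x;s,t)=V_{\mc R_\mathrm{sh}}(\piz\circ x;s,t)
\]
by summing the improved stability inequalities over fine partitions at $(t_{k-1},x(t_{k-1})\to x(t_k))$, telescoping the energy terms through \ref{WEB} (dominated convergence via \ref{hyp:E4} is justified because $\tau\mapsto\partial_\tau\mc E(\tau,x(t_k))\to \partial_\tau\mc E(\tau,x(\tau))$ by continuity of $x$) and observing that $\sum_k |\piz(x(t_{k-1}))-\piz(x(t_k))|_Z^2\to 0$ by continuity of $\piz\circ x$; the reverse inequality is immediate from $\mc R\geq \mc R_\mathrm{sh}\circ \piz$. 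A single application of improved stability at $x^*=x(s)$, $x=x(t)$, combined with \ref{WEB} and \ref{hyp:E5}, then produces the key estimate
\[
V_\mc R(x;s,t)+\frac{\mu}{2}|\piz(x(t))-\piz(x(s))|_Z^2\leq \mc R_\mathrm{sh}(s,\piz(x(t)-x(s)))+\int_s^t \eta_R(\tau)|\piz(x(\tau))-\piz(x(t))|_Z\d\tau.
\]

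In the time-independent case, subadditivity of $\mc R_\mathrm{sh}$ together with the variation identity yields $\mc R_\mathrm{sh}(\piz(x(t)-x(s)))\leq V_{\mc R_\mathrm{sh}}(\piz\circ x;s,t)=V_\mc R(x;s,t)$, and the $V_\mc R$-terms cancel, leaving a clean quadratic inequality to which Lemma \ref{inequalityintegral} applies with $g=2\eta_R/\mu$, $h=0$, giving absolute continuity of $\piz\circ x$. Under \ref{hyp:R5}, Lemma \ref{lemma:Zbound} furnishes $\mc R_\mathrm{sh}(s,\cdot)\leq \alpha^*C_K|\cdot|_Z$ on $\piz(K)$; inserting this in the key estimate and handling the resulting linear term against the quadratic one via a Young-type manipulation again delivers absolute continuity of $\piz\circ x$ through Lemma \ref{inequalityintegral}. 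In both scenarios the variation identity, combined with the sublinear bound on $\mc R_\mathrm{sh}$, yields $V_\mc R(x;s,t)\leq C\int_s^t m(\tau)\d\tau$ for some $m\in L^1(0,T)$, whence Proposition \ref{abscontvar} gives $x\in AC_\mc R([0,T];X)$ and Proposition \ref{propdiffenerg} concludes that $x$ is a differential solution. The main difficulty is closing the loop under \ref{hyp:R5}, since the linear $\alpha^*C_K|\piz(x(t))-\piz(x(s))|_Z$-term in the key estimate does not fit the format of Lemma \ref{inequalityintegral} directly and must be absorbed by a completing-the-square or Young-type argument that exploits the uniform convexity of $\mc E_\mathrm{sh}$; ensuring that the transfer from absolute continuity of $\piz\circ x$ to that of $x$ itself (via the variation identity) is genuinely provided by \ref{hyp:R5} is the most delicate point.
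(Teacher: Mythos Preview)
Your continuity argument is correct and genuinely different from the paper's route. The paper derives the same ``key estimate'' you write down, but then uses \ref{hyp:R3} to compare $V_\mc R(x;s,t)$ with $V_{\mc R(s)}(x;s,t)$, obtaining first a H\"older-$\tfrac12$ modulus for $\piz\circ x$, and only then deduces continuity of $x$ via \ref{WEB}. Your jump-based argument (jump identity from \ref{WEB}, stability of $x^-(t)$, improved stability to kill the $Z$-component, then coercivity) is shorter and avoids the quantitative detour; the paper's approach, on the other hand, already produces the inequality \eqref{est2} needed for the second part.

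The gap is in the \ref{hyp:R5} case of the absolute-continuity step. After inserting Lemma~\ref{lemma:Zbound} into your key estimate and dropping the nonnegative $V_\mc R(x;s,t)$, you are left with
\[
\frac{\mu}{2}|\piz(x(t))-\piz(x(s))|_Z^2\le \alpha^*C_K\,|\piz(x(t))-\piz(x(s))|_Z+\int_s^t\eta_R(\tau)|\piz(x(\tau))-\piz(x(t))|_Z\d\tau.
\]
The linear term on the right has a \emph{fixed} coefficient $\alpha^*C_K$, not one that vanishes with $t-s$; no Young or completing-the-square manipulation can convert this into the form required by Lemma~\ref{inequalityintegral}, since the constant $(\alpha^*C_K)^2/\mu$ you would produce does not tend to zero. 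The paper's fix is precisely \emph{not} to drop $V_\mc R(x;s,t)$: using \ref{hyp:R3} one shows
\[
V_\mc R(x;s,t)\ge\Bigl(1-\tfrac{1}{\alpha_*}\int_s^t\!\rho\Bigr)\,\mc R_\mathrm{sh}(s,\piz(x(t))-\piz(x(s))),
\]
so that $\mc R_\mathrm{sh}(s,\cdot)-V_\mc R(x;s,t)\le \tfrac{1}{\alpha_*}\bigl(\int_s^t\rho\bigr)\,\mc R_\mathrm{sh}(s,\cdot)$, and only then applies \ref{hyp:R5} to obtain a term of the shape $|\piz(x(t))-\piz(x(s))|_Z\int_s^t h$, which fits Lemma~\ref{inequalityintegral} with $h=\tfrac{2\alpha^*C_K}{\mu\alpha_*}\rho$. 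A second, smaller issue: your final transfer from AC of $\piz\circ x$ to AC of $x$ via the variation identity plus a sublinear bound on $\mc R_\mathrm{sh}$ is only available under \ref{hyp:R5}; in the autonomous case without \ref{hyp:R5} the sublinear bound may fail. The paper bypasses the variation identity entirely here: once $\piz\circ x$ is AC, so is $t\mapsto\mc E(t,x(t))$, and then \ref{WEB} directly gives absolute continuity of $t\mapsto V_\mc R(x;0,t)$, whence Proposition~\ref{abscontvar} applies.
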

\begin{proof}
	We fix $0\le s\le t\le T$; since $x$ satisfies \ref{GS} we can pick $x(t)$ as a competitor for $x(s)$ in \eqref{improvedgs}, getting:
	\begin{align*}
	&\quad\,\frac{\mu}{2}|\piz(x(t))-\piz(x(s))|_Z^2\\&\le \mc E(s,x(t))+\mc R_\mathrm{sh}(s,\piz(x(t))-\piz(x(s)))-\mc E(s,x(s))\\
	&=\mc E(s,x(t))-\mc E(t,x(t))+\mc E(t,x(t))-\mc E(s,x(s))+\mc R_\mathrm{sh}(s,\piz(x(t))-\piz(x(s)))\\
	&=\!\int_{s}^{t}\!\!\!\Big(\frac{\partial}{\partial t}\mc E(\tau,x(\tau)){-}\frac{\partial}{\partial t}\mc E(\tau,x(t))\Big)\!\d\tau+\mc R_\mathrm{sh}(s,\piz(x(t)){-}\piz(x(s)))-V_\mc R(x;s,t),
	\end{align*}
	where for the last equality we exploited \ref{WEB}.\par 
	We recall that $x$ is bounded since it belongs to $BV_{\mc R}([0,T];X)$; thus there exists $R>0$ such that $|x(t)|\le R$ for every $t\in[0,T]$. Hence we can use \ref{hyp:E5} and continue the above inequality:
	\begin{equation}\label{estimate1}
	\begin{aligned}
	&\quad\,\frac{\mu}{2}|\piz(x(t)){-}\piz(x(s))|_Z^2\\&\le\int_{s}^{t}|\piz(x(t)){-}\piz(x(\tau))|_Z\eta_R(\tau)\d\tau+\mc R_\mathrm{sh}(s,\piz(x(t)){-}\piz(x(s)))-V_\mc R(x;s,t).
	\end{aligned}
	\end{equation}
	To estimate the term outside the integral we exploit \ref{hyp:R2} and \ref{hyp:R3}, getting:
	\begin{align*}
	V_\mc R(x;s,t)&\ge V_{\mc R(s)}(x;s,t)-V(x;s,t)\int_{s}^{t}\rho(\tau)\d\tau\\
	&\ge \left(1-\frac{1}{\alpha_*}\int_{s}^{t}\rho(\tau)\d\tau\right)V_{\mc R(s)}(x;s,t).
	\end{align*}
	The above inequality finally implies:
	\begin{equation}\label{importantestimate}
	V_\mc R(x;s,t)\ge \left(1-\frac{1}{\alpha_*}\int_{s}^{t}\rho(\tau)\d\tau\right)\RR_\mathrm{sh}(s,\piz(x(t))-\piz(x(s))).
	\end{equation}
	Indeed, if the term within parentheses is negative the inequality is trivial; otherwise we observe that $V_{\mc R(s)}(x;s,t)\ge \mc R(s,x(t)-x(s))\ge \RR_\mathrm{sh}(s,\piz(x(t))-\piz(x(s)))$.\par
	By plugging \eqref{importantestimate} into \eqref{estimate1} we thus obtain
	\begin{equation}\label{est2}
	\begin{aligned}
	&\quad\,\frac{\mu}{2}|\piz(x(t)){-}\piz(x(s))|_Z^2\\&\le\int_{s}^{t}|\piz(x(t)){-}\piz(x(\tau))|_Z\eta_R(\tau)\d\tau+\frac{1}{\alpha_*}\left(\int_{s}^{t}\rho(\tau)\d\tau \right) \RR_\mathrm{sh}(s,\piz(x(t)){-}\piz(x(s))).
	\end{aligned} 
	\end{equation}
	Since $x$ is bounded, we deduce that $|\piz(x(t))-\piz(x(\tau))|_Z$ is bounded by a constant independent of $t$ and $\tau$. Moreover, by \ref{prop:regR2} in Corollary~\ref{propertiesR}, we have
	\begin{align*}
\RR_\mathrm{sh}(s,\piz(x(t))-\piz(x(s)))&\leq \RR(s,x(t)-x(s)) \leq  V_{\RR (s)}(x;s,t)\\
&\leq \frac{\alpha^*}{\alpha_*}V_{\RR }(x;s,t)\leq \frac{\alpha^*}{\alpha_*} V_{\RR}(x;0,T).
	\end{align*}
Hence, from estimate \eqref{est2} we infer:
	\begin{equation*}
	|\piz(x(t))-\piz(x(s))|_Z\le C\left(\int_{s}^{t}\big(\eta_R(\tau)+\rho(\tau))\d\tau\right)^\frac 12,
	\end{equation*}
	for some constant $C>0$, and thus $\piz\circ x$ is continuous from $[0,T]$ to $Z$. Since $\mc E(t,x(t))=\mc E_\mathrm{sh}(t,\piz(x(t)))$ and $\mc E_\mathrm{sh}$ is continuous in $[0,T]\times Z$ by \ref{hyp:E1} and \ref{hyp:E3}, we easily deduce that $t\mapsto \mc E(t,x(t))$ is continuous too. Thus by \ref{WEB} we obtain that the $\mc R$-variation of $x$ is continuous as a function of $t\in [0,T]$; by employing \ref{var:c} in Proposition~\ref{propertiesvariationtime} together with \ref{hyp:R2}, we finally obtain that $x$ itself is continuous too.
	
	\medskip
	Let us now prove the $\RR$-absolute continuity of $x$ under the stronger assumptions \ref{hyp:R5} or $\RR$ autonomous. The first step is to show that both the alternative assumptions imply 
		\begin{equation}\label{mainest}
	|\piz(x(t))-\piz(x(s))|_Z\le C\int_{s}^{t}\big(\eta_R(\tau)+\rho(\tau)\big)\d\tau,\quad\text{ for every }0\le s\le t\le T,
	\end{equation}
	for some constant $C>0$. With this aim we notice that, in the case where $\mc R$ does not depend on time, the term outside the integral in \eqref{estimate1} is less or equal than zero, since in this case trivially it holds
	\begin{equation*}
	\mc R_\mathrm{sh}(\piz(x(t))-\piz(x(s)))\le \mc R(x(t)-x(s))\le V_\mc R(x;s,t).
	\end{equation*}
	Thus \eqref{mainest} follows, actually with only $\eta_R$ inside the integral, from Lemma~\ref{inequalityintegral} applied to this improved version of \eqref{estimate1}.
	
If instead $\RR$ depends on time, but satisfies \ref{hyp:R5}, we can apply Lemma~\ref{lemma:Zbound} to the rightmost term of \eqref{est2} and then apply directly Lemma~\ref{inequalityintegral} to obtain \eqref{mainest}.
	
Now that we have obtained \eqref{mainest} in both the alternative cases, the second step is to deduce $\RR$-absolute continuity. Firstly, we deduce from \eqref{mainest} that the function $\piz\circ x$ is absolutely continuous from $[0,T]$ into $Z$. We now prove that $t\mapsto \mc E(t,x(t))$ is an absolutely continuous function. With this aim we fix $0\le s\le t\le T$ and we estimate:
	\begin{align*}
	|\mc E(t,x(t))-\mc E(s,x(s))|&\le |\mc E(t,x(t))-\mc E(t,x(s))|+|\mc E(t,x(s))-\mc E(s,x(s))|\\
	&\le C_R|\piz(x(t))-\piz(x(s))|_Z+\int_{s}^{t}\left|\frac{\partial}{\partial t}\mc E(\tau,x(s))\right|\d\tau\\
	&\le C_R|\piz(x(t))-\piz(x(s))|_Z+\int_{s}^{t}\omega(\mc E(\tau,x(s)))\gamma(\tau)\d\tau.
	\end{align*}
	The second term on the right-hand side have been estimated using \ref{hyp:E4}; instead for the first term we have used the fact that  $x$ is bounded by some $R>0$ and, by \ref{hyp:E3} and compactness, $\EE_\mathrm{sh}(t,\cdot)$ is Lipschitz continuous on $\overline{\BB^Z_R}$ with some constant $C_R$, which can be taken uniformly in $t\in[0,T]$.
	Moreover, since $\mc E$ is bounded on $[0,T]\times \overline{\BB^X_R}$ by continuity, from the above inequality we deduce that:
	\begin{align*}
	|\mc E(t,x(t)){-}\mc E(s,x(s))|&\le C_R|\piz(x(t)){-}\piz(x(s))|_Z+\omega(M_R)\int_{s}^{t}\!\!\!\gamma(\tau)\d\tau,\text{ for every }0\le s\le t\le T.
	\end{align*}
	Thus we proved that $t\mapsto \mc E(t,x(t))$ is absolutely continuous. We now conclude since by using \ref{WEB} we have:
	\begin{align*}
	V_{\mc R}(x;s,t)=\mc E(s,x(s))-\mc E(t,x(t))+\int_{s}^{t}\frac{\partial}{\partial t} \mc E(\tau,x(\tau))\d\tau,\quad \text{ for every }0\le s\le t\le T,
	\end{align*}
	and thus, by using Proposition~\ref{abscontvar}, $x$ is $\mc R$-absolutely continuous since $\frac{\partial}{\partial t} \mc E(\cdot,x(\cdot))\in L^1(0,T)$ thanks to \ref{hyp:E4}.
\end{proof}

We conclude this section by listing some of the known important cases in which the quasistatic problem \eqref{quasprob} admits at most one solution. In the general framework the issue of uniqueness is not completely clear yet.
We first discuss the case  $\dim Z=\dim X$, corresponding to a coercive energy $\EE$.
\begin{lemma}\label{lemma:uniqquas}
	Assume that $\dim Z=\dim X$, $\RR$ satisfies \ref{hyp:RK} and $\mc E(t,x)=\mc E_\mathrm{sh}(t,\piz (x))$ satisfies \ref{hyp:E1}--\ref{hyp:E5}. 
	Then each of the following additional assumptions is a sufficient condition for uniqueness of energetic solutions to \eqref{quasprob}:
	\begin{enumerate}[label=\textup{(U\arabic*)}]
		\item \label{hyp:U1}  $\mc R$ does not depend on time and $\EE_\mathrm{sh}$ belongs to $\CC^3([0,T]\times Z)$;
		\item \label{hyp:U2} $\mc R$ does not depend on time, $\EE_\mathrm{sh}(t,z)=\mc V(z)-\scal{g(t)}{z}$ with $\mc V$ strictly convex, $g\in AC([0,T];Z^*)$, and the stable sets $$\mc S(t)=\{z\in Z\mid \EE_\mathrm{sh}(t,z)\le \EE_\mathrm{sh}(t,w)+\mc R(w-z)\text{ for every }w\in Z \},$$ are convex for every $t\in [0,T]$;
		\item\label{hyp:U3} $K=X$ and $\EE_\mathrm{sh}$ satisfies \ref{hyp:QE} with $\ell_\mathrm{sh}\in W^{1,\infty}(0,T;Z)$.
	\end{enumerate}
\end{lemma}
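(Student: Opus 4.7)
The starting point is the following reduction. Since $\piz\colon X\to Z$ is linear, surjective and $\dim X=\dim Z$, it is a bijection, so we may identify $X$ with $Z$ via $\piz$. Under this identification $\EE$ coincides with $\EE_\mathrm{sh}$ and is $\mu$-uniformly convex on the whole $X$. Moreover, condition \ref{hyp:R5} is automatically satisfied thanks to Proposition~\ref{prop:R5char}(ii). Since each of (U1)--(U3) fulfils either the autonomy of $\RR$ or \ref{hyp:R5}, Proposition~\ref{regularity} ensures that every energetic solution is $\RR$-absolutely continuous, and Proposition~\ref{propdiffenerg} then shows that energetic and differential solutions coincide. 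It is therefore enough to establish uniqueness among differential solutions of \eqref{quasprob}.

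Case (U3) is the most direct. The inclusion reads $\mathbb{A}_\mathrm{sh}(x(t)-\ell_\mathrm{sh}(t)) + \partial_v \RR(t,\dot x(t))\ni 0$. Given two solutions $x_1,x_2$ starting from the same initial point, convexity of $\RR(t,\cdot)$ (Corollary~\ref{propertiesR}\ref{prop:regR1}) makes $\partial_v\RR(t,\cdot)$ monotone, yielding
\begin{equation*}
\langle \mathbb{A}_\mathrm{sh}(x_1(t)-x_2(t)),\,\dot x_1(t)-\dot x_2(t)\rangle \le 0 \qquad \text{for a.e.\ }t\in[0,T].
\end{equation*}
By the symmetry of $\mathbb{A}_\mathrm{sh}$, this rewrites as $\tfrac{d}{dt}\tfrac12\langle \mathbb{A}_\mathrm{sh}(x_1{-}x_2),x_1{-}x_2\rangle\le 0$; the latter quantity is non-negative by positive-definiteness and vanishes at $t=0$, hence it is identically zero, forcing $x_1\equiv x_2$. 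In this step the hypothesis $\ell_\mathrm{sh}\in W^{1,\infty}$ enters only to guarantee the existence of differential solutions, not the uniqueness calculation.

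For case (U1) the plan is to follow the classical scheme of Mielke and Theil. Autonomy of $\RR$ turns \eqref{quasprob} into $-D_x\EE(t,x(t))\in\partial\RR(\dot x(t))$; monotonicity of $\partial\RR$ again gives $\langle D_x\EE(t,x_1)-D_x\EE(t,x_2),\dot x_1-\dot x_2\rangle\le 0$, but now the energy differential is no longer linear in $x$. The extra $\CC^3$ regularity allows one to invert the strongly monotone map $D_x\EE(t,\cdot)$ through the implicit function theorem, producing a Lipschitz-in-time reformulation of the dynamics, and, through a Taylor expansion around a reference solution and the uniform convexity bound $\langle D_x\EE(t,x_1)-D_x\EE(t,x_2),x_1-x_2\rangle\ge\mu|x_1-x_2|^2$, one then derives a differential inequality $\dot\Phi(t)\le C(t)\Phi(t)$ for $\Phi(t):=|x_1(t)-x_2(t)|^2$, from which Gr\"onwall's lemma yields $x_1\equiv x_2$. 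Case (U2) is instead treated through the Mielke--Theil--Levitas convex-stable-sets argument (see \cite{MielkRoubbook}): affine time-dependence of the loading, strict convexity of $\mc V$ and the assumed convexity of each $\mc S(t)$ force the solution of the incremental minimisation problem to be unique at every step of a time-discretisation, and passage to the limit transfers uniqueness to the continuous-time energetic solution.

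The main obstacle I expect is (U2): whereas (U1) and (U3) are essentially direct monotonicity / Gr\"onwall arguments, (U2) genuinely needs the geometric assumption on the stable sets to compensate for the absence of \emph{uniform} convexity of $\mc V$, and the argument has to be organised at the level of the incremental scheme rather than of the continuous-time inclusion.
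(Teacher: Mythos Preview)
The paper's own ``proof'' of this lemma is simply a list of citations: cases (U1) and (U2) are referred to \cite[Theorems~4.1 and~4.2]{Mielk}, \cite[Section~3.4.4]{MielkRoubbook}, \cite[Theorems~6.5 and~7.4]{MielkTheil}, and case (U3) to \cite[Theorem~4.7]{HeiMielk}. So there is essentially nothing to compare at the level of argument, and your proposal already goes further than the paper by sketching the underlying ideas.

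Your reduction step (identifying $X$ with $Z$ via $\piz$, invoking Proposition~\ref{prop:R5char}(ii) to get \ref{hyp:R5}, and then Proposition~\ref{regularity} together with Proposition~\ref{propdiffenerg} to reduce to differential solutions) is correct. Your argument for (U3) is clean and complete: monotonicity of $\partial_v\RR(t,\cdot)$ for each fixed $t$ is all that is needed, and indeed your computation shows that neither $K=X$ nor $\ell_\mathrm{sh}\in W^{1,\infty}$ is actually used in the uniqueness step itself (these hypotheses matter in \cite{HeiMielk} for other reasons). Your outlines for (U1) and (U2) are in the right spirit but are sketches rather than proofs; in particular, the (U1) argument in Mielke--Theil proceeds by time-differentiating the stability condition and requires more care than a single Taylor expansion plus Gr\"onwall, so if you want a self-contained proof you would need to fill in substantially more detail there.

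One small slip: in your discussion of (U2) you write that the argument ``compensates for the absence of \emph{uniform} convexity of $\mc V$''. But the standing assumption \ref{hyp:E2} already forces $\EE_\mathrm{sh}(t,\cdot)$, and hence $\mc V$, to be $\mu$-uniformly convex; the role of the convex-stable-sets hypothesis in (U2) is different (it rules out nonuniqueness phenomena that can occur even with uniformly convex $\mc V$ when the dissipation is non-smooth).
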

\begin{proof}
	The case when $\mc R$ does not depend on time is well studied; the proof of uniqueness under \ref{hyp:U1} or \ref{hyp:U2}, and several discussions on their applicability, can be found for instance in \cite[Theorems~4.1 and 4.2]{Mielk}, or \cite[Section~3.4.4]{MielkRoubbook}, or \cite[Theorems~6.5 and 7.4]{MielkTheil}. Case \ref{hyp:U3} has been proved in \cite[Theorem~4.7]{HeiMielk}. 
\end{proof}

The locomotion case  $\dim Z<\dim X$ has been deeply analysed in \cite{Gid18} in the case of quadratic energies; in particular we mention Theorem~4.3 for the uniqueness result, and Example~3.2 to illustrate the necessity of condition \ref{hyp:star} below. We present here a generalized result applying the very same argument.

\begin{lemma}\label{lemma:uniqquas2}
Assume that  $\RR$ satisfies \ref{hyp:RK} and $\mc E(t,x)=\mc E_\mathrm{sh}(t,\piz (x))$ satisfies \ref{hyp:E1}--\ref{hyp:E5}. Suppose in addition that at least one of \ref{hyp:U1}, \ref{hyp:U2} or \ref{hyp:U3} holds, and that for almost every $t\in[0,T]$ we have
\begin{enumerate}[label=\textup{(\textasteriskcentered)}]
\item \label{hyp:star} for every $z\in Z$ with $\RR_\mathrm{sh}(t,z)<+\infty$, there exists a unique $x\in X$ such that $\piz(x)=z$ and 
\begin{equation*}
	\RR_\mathrm{sh}(t,z)=\RR(t,x)<\RR(t,v), \qquad\text{for every $v\neq x$ such that $\piz(v)=z$}.
\end{equation*}
\end{enumerate}
Then the differential solution to \eqref{quasprob} is unique. In particular, since in each case we can apply Proposition~\ref{regularity}, uniqueness holds true also for energetic solutions.
\end{lemma}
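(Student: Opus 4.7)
The plan is to reduce the uniqueness question from the full state space $X$ to the shape space $Z$, where Lemma~\ref{lemma:uniqquas} already provides uniqueness, and then to exploit \ref{hyp:star} to lift this back to $X$. Let $x^1, x^2 \in AC([0,T];X)$ be two differential solutions of \eqref{quasprob} sharing the same initial datum $x_0$, and set $z^i := \piz \circ x^i \in AC([0,T];Z)$ for $i=1,2$.

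The first step is to recognise each $z^i$ as an energetic solution of the reduced quasistatic problem on $Z$ driven by the energy $\EE_\mathrm{sh}$ and the restricted dissipation $\RR_\mathrm{sh}$. By Proposition~\ref{propdiffenerg}, $x^i$ satisfies \ref{GS} and \ref{EB}. Taking the infimum in \ref{GS} over the preimages under $\piz$ of a given $w \in Z$ immediately yields
\begin{equation*}
\EE_\mathrm{sh}(t,z^i(t)) \le \EE_\mathrm{sh}(t,w) + \RR_\mathrm{sh}(t,w-z^i(t)), \qquad \text{for every } w\in Z,\ t\in[0,T],
\end{equation*}
which is global stability for $z^i$ on $Z$. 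From \ref{EB} for $x^i$ together with the pointwise inequality $\RR(\tau,\dot x^i(\tau)) \ge \RR_\mathrm{sh}(\tau,\dot z^i(\tau))$ I obtain one direction of the reduced energy balance, while the reverse direction is the standard one-sided inequality automatic for any absolutely continuous curve satisfying the reduced global stability. Since the pair $(\EE_\mathrm{sh}, \RR_\mathrm{sh})$ on $Z$ falls into the coercive case $\dim Z = \dim Z$ of Lemma~\ref{lemma:uniqquas}, and each of \ref{hyp:U1}, \ref{hyp:U2}, \ref{hyp:U3} is inherited by the reduced problem (in particular, under \ref{hyp:U3} we have $K=X$, so $\piz(K)=Z$ and $\RR_\mathrm{sh}$ is finite-valued on $Z$), the reduced uniqueness forces $z^1 = z^2 =: z$.

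To lift uniqueness to $X$, differentiate the identity $\piz(x^i(t))=z(t)$ to get $\piz(\dot x^i(t)) = \dot z(t)$ for almost every $t\in[0,T]$. Using the factorisation $D_x\EE(t,x)=\piz^{*} D_z\EE_\mathrm{sh}(t,\piz(x))$ and \ref{LEB} for $x^i$, one obtains
\begin{equation*}
\RR(t,\dot x^i(t)) \;=\; -\langle D_z\EE_\mathrm{sh}(t,z(t)),\dot z(t)\rangle \;=\; \RR_\mathrm{sh}(t,\dot z(t)),
\end{equation*}
the last equality coming from the analogous \ref{LEB} for the reduced problem. Hence both $\dot x^1(t)$ and $\dot x^2(t)$ are preimages of $\dot z(t)$ under $\piz$ attaining the minimum value $\RR_\mathrm{sh}(t,\dot z(t))$, and \ref{hyp:star} forces $\dot x^1(t) = \dot x^2(t)$ almost everywhere; integration from $x_0$ gives $x^1 \equiv x^2$. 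I expect the main obstacle to lie in the first step: upgrading the reduced one-sided energy inequality to a genuine equality, and verifying carefully that the reduced pair $(\EE_\mathrm{sh}, \RR_\mathrm{sh})$ inherits all the hypotheses required by Lemma~\ref{lemma:uniqquas} — in particular that $\RR_\mathrm{sh}$ retains the form \ref{hyp:RK} with a closed convex cone $\piz(K)$, which is automatic when $K=X$ but needs some bookkeeping in the other cases.
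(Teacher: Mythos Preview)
Your approach is correct and follows the same core strategy as the paper: reduce to the shape space $Z$, invoke uniqueness there, then lift back to $X$ via \ref{hyp:star}. The paper, however, packages the reduction more efficiently by working directly with the variational inequality
\[
\langle D_x\EE(t,x(t)),v-\dot x(t)\rangle+\RR(t,v)-\RR(t,\dot x(t))\ge 0,\qquad v\in X,
\]
and observing that it splits \emph{equivalently} into the minimality condition $\RR(t,\dot x(t))=\RR_\mathrm{sh}(t,\dot z(t))$ (your final step) together with the reduced variational inequality on $Z$. This bypasses your detour through \ref{GS}--\ref{EB} and the one-sided energy inequality, and delivers the key identity $\RR(t,\dot x^i(t))=\RR_\mathrm{sh}(t,\dot z(t))$ for free rather than via \ref{LEB}.

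Your closing worry is also the right one, and the paper handles it slightly differently than you anticipate. The reduced potential $\RR_\mathrm{sh}$ does \emph{not} in general satisfy \ref{hyp:RK} on $Z$: the cone $\piz(K)$ need not be closed, and the lower bound in \ref{hyp:R2} can fail (this is exactly why \ref{hyp:R5} is introduced elsewhere). So you cannot literally invoke Lemma~\ref{lemma:uniqquas} on the reduced pair. The paper instead cites \cite{Gid18} to check that $\RR_\mathrm{sh}$ inherits properties \ref{prop:regR1} and \ref{prop:regR3} of Corollary~\ref{propertiesR} (and \ref{prop:regR2} only when $K=X$, i.e.\ under \ref{hyp:U3}), which is precisely what the underlying uniqueness theorems from \cite{Mielk,MielkTheil,HeiMielk} actually require. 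So the fix is not ``bookkeeping on \ref{hyp:RK}'' but rather to skip Lemma~\ref{lemma:uniqquas} and appeal directly to those references.
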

\begin{proof}
It is well known that $x(t)$ is a differential solution of 	\eqref{quasprob} if and only if it satisfies the initial condition and the \emph{variational inequality}
\begin{equation}
\scal{D_x\EE(t,x(t))}{v-\dot x(t)}+\RR(t,v)-\RR(t,\dot x(t))\geq 0, \quad \text{for every $v\in X$ and a.e.~$t\in[0,T]$}. 
\label{eq:VI}
\end{equation}
Writing $z(t):=\piz(x(t))$, inequality \eqref{eq:VI} can be equivalently split in the two conditions
\begin{gather}
\RR_\mathrm{sh}(t,\dot z(t))=\RR(t,\dot x(t))\leq\RR(t,v), \text{ for every $v\in X$ such that $\piz(v)=\dot z(t)$ and a.e.~$t\in[0,T]$;} \label{eq:Rmin}\\
\scal{D_z\EE_\mathrm{sh}(t,z(t))}{w-\dot z(t)}_Z+\RR_\mathrm{sh}(t,w)-\RR_\mathrm{sh}(t,\dot z(t))\geq 0,\text{ for every $w\in Z$ and a.e.~$t\in[0,T]$.} 
	\label{eq:VIsh}
\end{gather}
Following the same argument of \cite[Lemmata 2.1 and 4.1]{Gid18}, it can be observed that the functional $\RR_\mathrm{sh}$, defined according to \eqref{restricted}, inherits the regularity properties \ref{prop:regR1} and \ref{prop:regR3} of Corollary~\ref{propertiesR}, with also \ref{prop:regR2} if $K=X$. These, combined with the one of  \ref{hyp:U1}, \ref{hyp:U2} or \ref{hyp:U3} which is holding, allows to apply the results mentioned in the proof of the previous lemma, to obtain the uniqueness of a solution $z(t)$ of \eqref{eq:VIsh}. Hence, if two differential solutions $x_1,x_2$ of \eqref{quasprob} exist, they must satisfy $\piz(\dot x_1(t))=\piz(\dot x_2(t))=\dot z(t)$ almost everwhere.  This, combined with \eqref{eq:Rmin}, implies that $\RR(t,\dot x_1(t))=\RR(t,\dot x_2(t))$ a.e., in contradiction with \ref{hyp:star}, since $\RR(t,\dot x(t))<+\infty$ a.e. along solutions. Therefore the differential solution of \eqref{quasprob} is unique.
\end{proof}



\section{Quasistatic limit}\label{seclimit}
This section is devoted to the proof of the main result of the paper, namely we discuss the convergence as $\eps$ goes to $0$ of a differential solutions $x^\eps$ of the dynamic problems \eqref{dynprob}, given by Theorem~\ref{existencedyn}, to a (energetic or differential) solution of the quasistatic problem \eqref{quasprob}.

Hence in this section we are assuming all the basic hypotheses of the dynamic and quasistatic problems: $X$ is a finite dimensional normed space, $\mathbb{M}$ and $\mathbb{V}$ are as in Section~\ref{sec:Setting}, $\mc E(t,x)=\mc E_\mathrm{sh}(t,\piz (x))$ satisfies \ref{hyp:E1}--\ref{hyp:E5} and $\RR$ satisfies \ref{hyp:RK}. We however point out that \ref{hyp:E2}, i.e. convexity, will not be necessary for the first part of the vanishing inertia analysis, as stressed in Remark~\ref{rmk:convexity}. Moreover we assume that the initial velocity $ x^\eps_1$ satisfy the admissibility condition \eqref{eq:dyn_admiss}.

We proceed as follows. Firstly, we use the uniform bound on the energy of $\xeps$, obtained in Proposition~\ref{energybound}, to deduce the existence of a convergent subsequence by means of a compactness argument involving Helly's Selection Theorem. Then, we prove that the limit obtained from the subsequence is actually an energetic (and thus, from Proposition~\ref{regularity}, a differential) solution of the quasistatic problem \eqref{quasprob}. The main results are collected in Theorems~\ref{almostfinalthm} and \ref{finalthm}.\par

\begin{thm}\label{convsubseq}
	Assume that $x^\eps_0$ and $\eps x^\eps_1$ are uniformly bounded, namely \eqref{unifdata} is satisfied. Then there exists a subsequence $\epsj\searrow 0$ and a function $x\in BV_{\mc R}([0,T];X)$ such that:
	\begin{enumerate}[label=\textup{(\alph*)}]
		\item \label{conv:a}$\lim\limits_{j\to +\infty}\xepsj(t)=x(t)$, for every $t\in[0,T]$;
		\item\label{conv:b} $\displaystyle V_{\mc R}(x;s,t)\le \liminf\limits_{j\to +\infty}\int_{s}^{t}\mc R(\tau,\xepsjd(\tau))\d\tau$, for every $0\le s\le t\le T$;
		\item\label{conv:c}$\lim\limits_{j\to +\infty}\epsj |\xepsjd(t)|_{\mathbb{M}}=0$, for every $t\in(0,T]\setminus J_x$, where $J_x$ is the jump set of the limit function $x$.
	\end{enumerate}
\end{thm}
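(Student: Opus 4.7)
The plan combines three standard ingredients: Helly's selection theorem to extract a pointwise limit $x$, the weak lower semicontinuity of Lemma~\ref{weaklscvar} to obtain (b), and the dynamic energy balance \ref{EBeps} to handle the kinetic energy in (c). First I would establish (a): by the coercivity \ref{hyp:R2} we have $|\dot x^\eps|\leq\alpha_*^{-1}\RR(\cdot,\dot x^\eps)$ almost everywhere, so Corollary~\ref{unifbound}\ref{prop:stima_ii} gives a uniform bound on the classical total variation of $x^\eps$, while \ref{prop:stima_i} supplies a uniform $L^\infty$ bound. Helly's selection theorem applied componentwise in the finite-dimensional space $X$ then yields a subsequence $\eps_j\to 0$ and a function $x\in BV([0,T];X)$ with $x^{\eps_j}(t)\to x(t)$ for every $t\in[0,T]$.

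For (b) and the $BV_\RR$ membership of $x$, a second Helly extraction on the nondecreasing, uniformly bounded functions $V^\eps(t):=\int_0^t\RR(\tau,\dot x^\eps(\tau))\,\d\tau$ produces pointwise convergence $V^{\eps_j}(t)\to V(t)$ for every $t\in[0,T]$. Since each $x^{\eps_j}$ is $\RR$-absolutely continuous, Proposition~\ref{abscontvar} gives $V_\RR(x^{\eps_j};s,t)=\int_s^t\RR(\tau,\dot x^{\eps_j})\,\d\tau$, and because strong pointwise convergence in finite dimension is in particular weak, Lemma~\ref{weaklscvar} yields
\begin{equation*}
V_\RR(x;s,t)\leq\liminf_{j\to+\infty}V_\RR(x^{\eps_j};s,t)=\liminf_{j\to+\infty}\int_s^t\RR(\tau,\dot x^{\eps_j}(\tau))\,\d\tau,
\end{equation*}
which is (b); specialising to $s=0$ and $t=T$ also gives $x\in BV_\RR([0,T];X)$.

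For (c) I would rewrite the balance \ref{EBeps} from Proposition~\ref{equivdyn} as
\begin{equation*}
K^\eps(t):=\tfrac{\eps^2}{2}|\dot x^\eps(t)|_\M^2=\tfrac{\eps^2}{2}|x^\eps_1|_\M^2+\mc E(0,x^\eps_0)-\mc E(t,x^\eps(t))-V^\eps(t)-\eps\int_0^t|\dot x^\eps|_\V^2\,\d\tau+\int_0^t\tfrac{\partial}{\partial t}\mc E(\tau,x^\eps(\tau))\,\d\tau,
\end{equation*}
and perform a further diagonal Helly extraction covering the nondecreasing map $\eps_j\int_0^\cdot|\dot x^{\eps_j}|_\V^2$ together with the bounded numerical sequences $\eps_j x^{\eps_j}_1$ and $x^{\eps_j}_0$. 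Continuity of $\mc E$ in $x$, the pointwise convergence from (a), and dominated convergence (with integrable majorant $\omega(M_{C_\Lambda})\gamma$ furnished by \ref{hyp:E4} and the $L^\infty$ bound on $x^{\eps_j}$) identify a pointwise limit $K^{\eps_j}(t)\to K(t)\geq 0$ at every $t\in[0,T]$.

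The hard part will be showing $K(t_0)=0$ at every $t_0\in(0,T]\setminus J_x$. Comparing the balance between $s<t_0$ gives
$K^\eps(t_0)\leq K^\eps(s)+[\mc E(s,x^\eps(s))-\mc E(t_0,x^\eps(t_0))]+\int_s^{t_0}\partial_t\mc E\,\d\tau$;
passing to the limit along $\eps_j\to 0$ and then letting $s\nearrow t_0$ along points where $x^{\eps_j}(s)\to x(s)$ and $K^{\eps_j}(s)\to K(s)$, the continuity of $\mc E(\cdot,x(\cdot))$ at $t_0$ (inherited from the continuity of $x$ there) reduces the problem to establishing $\liminf_{s\nearrow t_0}K(s)=0$. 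The delicate point is that the nondecreasing limits of $V^\eps$ and of $\eps\int_0^\cdot|\dot x^\eps|_\V^2$ may a priori carry extra jumps at $t_0$ even when $x$ is continuous there; ruling out persistent residual kinetic energy requires exploiting the coercivity $\RR\geq\alpha_*|\cdot|$ from \ref{hyp:R2} together with the $W^{2,\infty}$-regularity of $x^{\eps_j}$ furnished by Theorem~\ref{existencedyn} to argue that a velocity of order $1/\eps_j$ cannot persist in a shrinking neighbourhood of $t_0$ without either forcing a jump of the limit $x$ at $t_0$ or violating the uniform bound on $V^\eps$.
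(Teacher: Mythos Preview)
Your treatment of (a) and (b) is essentially the same as the paper's and is correct. The gap lies in (c), where your proposed route via $W^{2,\infty}$-regularity does not close.

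The regularity provided by Theorem~\ref{existencedyn} is \emph{not} uniform in $\eps$: recall that the dynamic problem is rewritten as a second order sweeping process for $\eta^\eps=\eps^2\M\xeps$ in the Euclidean space $E$ whose norm scales like $\eps^{-1}$, so the bound on $\ddot\eta^\eps$ in $E$ translates, at best, into $|\xepsdd|=O(\eps^{-2})$. Hence a velocity of order $1/\eps_j$ can in principle be destroyed over a time interval of length $O(\eps_j)$, and you cannot conclude persistence of large kinetic energy on a fixed neighbourhood of $t_0$. Your own reduction to ``$\liminf_{s\nearrow t_0}K(s)=0$'' therefore remains unproved.

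The missing idea is much simpler and avoids regularity altogether. From Corollary~\ref{unifbound}\ref{prop:stima_ii} and the coercivity in \ref{hyp:R2} you get $\int_0^T|\xepsd(\tau)|\,\d\tau\leq C_\Lambda/\alpha_*$, hence $\eps\int_0^T|\xepsd(\tau)|\,\d\tau\to 0$. Passing to a further subsequence you obtain $\eps_j\xepsjd(t)\to 0$ for \emph{almost every} $t\in[0,T]$, which already gives (c) on a set of full measure. Now fix $t_0\in(0,T]\setminus J_x$ and choose sequences $s_k\nearrow t_0$, $t_k\searrow t_0$ inside this full-measure set. Applying the energy balance \ref{EBeps} on $[t_0,t_k]$ and on $[s_k,t_0]$ and dropping the nonnegative dissipative terms yields the sandwich
\begin{align*}
&\mc E(t_k,\xepsj(t_k))-\mc E(t_0,\xepsj(t_0))-\int_{t_0}^{t_k}\tfrac{\partial}{\partial t}\mc E(\tau,\xepsj(\tau))\,\d\tau+\tfrac{\eps_j^2}{2}|\xepsjd(t_k)|_\M^2\\
&\qquad\le \tfrac{\eps_j^2}{2}|\xepsjd(t_0)|_\M^2\le \tfrac{\eps_j^2}{2}|\xepsjd(s_k)|_\M^2+\mc E(s_k,\xepsj(s_k))-\mc E(t_0,\xepsj(t_0))+\int_{s_k}^{t_0}\tfrac{\partial}{\partial t}\mc E(\tau,\xepsj(\tau))\,\d\tau.
\end{align*}
Letting first $j\to+\infty$ (using the a.e.\ vanishing at $s_k,t_k$, the pointwise convergence (a), continuity of $\mc E$, and dominated convergence via \ref{hyp:E5}) and then $k\to+\infty$ (using continuity of $x$ at $t_0$) squeezes $\limsup_j\tfrac{\eps_j^2}{2}|\xepsjd(t_0)|_\M^2$ to zero. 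This is exactly the paper's argument; once you have the full-measure set, no second-derivative control is needed.
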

\begin{proof}
	By the uniform bounds \ref{prop:stima_i} and \ref{prop:stima_ii} of Corollary~\ref{unifbound} together with \ref{hyp:R2}, the family $\{\xeps\}_{\eps>0}$ is uniformly equibounded with uniformly equibounded variation. By means of the classical Helly's Selection Theorem we get the existence of a subsequence $\epsj\searrow 0$ and a function $x\in BV([0,T];X)$ for which \ref{conv:a} holds true. Thanks to Proposition~\ref{abscontvar} and Lemma~\ref{weaklscvar}, we also infer that actually $x$ belongs to $ BV_{\mc R}([0,T];X)$  and that property \ref{conv:b} holds.\par 
	To get \ref{conv:c} we first notice that, by \ref{prop:stima_ii} of Corollary~\ref{unifbound} and \ref{hyp:R2}, we deduce that 
	\begin{equation*}
		\lim\limits_{\eps\to 0}\eps\int_{0}^{T}|\xepsd(\tau)|\d\tau=0,
	\end{equation*}
	from which we can assume without loss of generality that
	\begin{equation}\label{aevanish}
		\lim\limits_{j\to +\infty}\epsj \xepsjd(t)=0,\quad\text{ for a.e. }t\in[0,T],
	\end{equation}
	which implies the validity of \ref{conv:c} almost everywhere thanks to \eqref{boundsassumption1}.\par 
	Let us now fix $t\in (0,T]\setminus J_x$ and consider two sequences $s_k\nearrow t$ and $t_k\searrow t$ at which \eqref{aevanish} holds true. By means of the energy balance (EB$^\epsj$) and exploiting the nonnegativity of $\mc R$ and $|\cdot|^2_{\mathbb{V}}$ we deduce:
	\begin{align*}
	&\quad\,\frac{\eps_j^2}{2}|\xepsjd(t_k)|^2_{\mathbb M}+\mc E(t_k,\xepsj(t_k))-\mc E(t,\xepsj(t))-\int_{t}^{t_k}\frac{\partial}{\partial t}\mc E(\tau,\xepsj(\tau))\d\tau\\
	&\qquad\le\frac{\eps_j^2}{2}|\xepsjd(t)|^2_{\mathbb M}\\
	&\qquad\le \frac{\eps_j^2}{2}|\xepsjd(s_k)|^2_{\mathbb M}+\mc E(s_k,\xepsj(s_k))-\mc E(t,\xepsj(t))+\int_{s_k}^{t}\frac{\partial}{\partial t}\mc E(\tau,\xepsj(\tau))\d\tau.
	\end{align*}
	Letting first $j\to +\infty$ we obtain:
	\begin{multline*}
	\quad\,\mc E(t_k,x(t_k))-\mc E(t,x(t))-\int_{t}^{t_k}\frac{\partial}{\partial t}\mc E(\tau,x(\tau))\d\tau\\
	\le\liminf\limits_{j\to +\infty}\frac{\eps_j^2}{2}|\xepsjd(t)|^2_{\mathbb M}\le\limsup\limits_{j\to +\infty}\frac{\eps_j^2}{2}|\xepsjd(t)|^2_{\mathbb M}\\
	\le \mc E(s_k,x(s_k))-\mc E(t,x(t))+\int_{s_k}^{t}\frac{\partial}{\partial t}\mc E(\tau,x(\tau))\d\tau.
	\end{multline*}
	Here we used the continuity of $\mc E$ and the dominated convergence theorem on the integral terms, exploiting assumption \ref{hyp:E5}.\par 
	Since $t\notin J_x$, letting now $k\to +\infty$ we prove \ref{conv:c}. 
\end{proof}
Our aim now is to prove that such a limit function $x$ is an energetic solution of problem \eqref{quasprob}; we thus need to show the validity of the global stability condition \ref{GS} and the weak energy balance \ref{WEB}. The strategy consists in passing to the limit the dynamic local stability condition \ref{LSeps} and the dynamic energy balance \ref{EBeps}. This first proposition deals with stability conditions:
\begin{prop}
	Assume that $x^\eps_0$ and $\eps x^\eps_1$ are uniformly bounded. Then the limit function $x$ obtained in Theorem~\ref{convsubseq} fulfils the following inequality:
	\begin{equation}\label{weakls}
		\int_{s}^{t}\Big(\mc R(\tau,v)+\langle D_x \mc E(\tau,x(\tau)),v\rangle\Big)\d\tau\ge 0,\quad \text{for every }v\in X\text{ and for every }0\le s\le t\le T.
	\end{equation}
	In particular the right and the left limit of $x$ are locally stable, meaning that:
	\begin{enumerate}[label=\textup{(LS$^+$)}]
		\item\label{LSplus} $\mc R(t,v)+\langle D_x \mc E(t,x^+(t)),v\rangle\ge 0,\quad \text{for every }v\in X\text{ and for every }t\in[0,T]$;
	\end{enumerate}
	\begin{enumerate}[label=\textup{(LS$^-$)}]
		\item\label{LSminus} $\mc R(t,v)+\langle D_x \mc E(t,x^-(t)),v\rangle\ge 0,\quad \text{for every }v\in X\text{ and for every }t\in(0,T]$.
	\end{enumerate}
\end{prop}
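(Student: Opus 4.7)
The strategy is to integrate the dynamic local stability condition \ref{LSeps} from Proposition~\ref{equivdyn} over $[s,t]$ and pass to the limit along the subsequence $\eps_j\to 0$ from Theorem~\ref{convsubseq}, obtaining \eqref{weakls}. Once \eqref{weakls} is in hand, the pointwise stabilities \ref{LSplus} and \ref{LSminus} will follow by dividing by the length of the interval and letting one endpoint collapse onto the point of interest, exploiting that $x\in BV_\mc R([0,T];X)$ admits one-sided limits everywhere.

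For the limit of the integrated stability, testing \ref{LSeps} with a fixed $v\in X$ and integrating on $[s,t]$, I would control four contributions. The term $\int_s^t\mc R(\tau,v)\d\tau$ does not depend on $j$. For $\int_s^t\langle D_x\mc E(\tau,\xepsj(\tau)),v\rangle\d\tau$, the pointwise convergence $\xepsj(\tau)\to x(\tau)$ from Theorem~\ref{convsubseq}\ref{conv:a}, the uniform bound $|\xepsj|\le C_\Lambda$ from Corollary~\ref{unifbound}\ref{prop:stima_i}, and the continuity of $D_x\mc E$ granted by \ref{hyp:E3} allow dominated convergence to pass the integral to $\int_s^t\langle D_x\mc E(\tau,x(\tau)),v\rangle\d\tau$. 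The inertial contribution is handled by integration by parts,
\begin{equation*}
\eps_j^2\int_s^t\langle\mathbb{M}\xepsjdd(\tau),v\rangle\d\tau = \eps_j^2\langle\mathbb{M}\xepsjd(t),v\rangle - \eps_j^2\langle\mathbb{M}\xepsjd(s),v\rangle,
\end{equation*}
and each boundary term is bounded via Cauchy--Schwarz in the $\mathbb M$-inner product together with Corollary~\ref{unifbound}\ref{prop:stima_iii}, giving $|\eps_j^2\langle\mathbb{M}\xepsjd(\tau),v\rangle|\le\eps_j|v|_{\mathbb M}\cdot\eps_j|\xepsjd(\tau)|_{\mathbb M}\le\eps_jC_\Lambda|v|_{\mathbb M}\to 0$. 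The viscous term is estimated from \eqref{boundsassumption2} together with Corollary~\ref{unifbound}\ref{prop:stima_ii} and the coercivity of $\RR$ in \ref{hyp:R2}:
\begin{equation*}
\Bigl|\eps_j\int_s^t\langle\mathbb{V}\xepsjd(\tau),v\rangle\d\tau\Bigr| \le \eps_jV|v|\int_0^T|\xepsjd(\tau)|\d\tau \le \frac{\eps_jV|v|C_\Lambda}{\alpha_*} \to 0.
\end{equation*}
Combining these four convergences gives \eqref{weakls}.

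To extract \ref{LSplus} at a point $t_0\in[0,T]$, set $g(\tau):=\mc R(\tau,v)+\langle D_x\mc E(\tau,x(\tau)),v\rangle$. Since $x$ is bounded and $\mc R$, $D_x\mc E$ are continuous on compacta, $g$ is bounded on $[0,T]$. Because $x^+(t_0)$ exists (cf.~Section~\ref{secACBV}), $\mc R(\cdot,v)$ is continuous in time by \ref{hyp:R3}, and $D_x\mc E$ is jointly continuous by \ref{hyp:E3}, one has $g(\tau)\to g^+(t_0):=\mc R(t_0,v)+\langle D_x\mc E(t_0,x^+(t_0)),v\rangle$ along any sequence $\tau\searrow t_0$ avoiding the (at most countable) jump set of $x$. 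A standard averaging argument, using boundedness of $g$ and that the jump set has Lebesgue measure zero, then yields $h^{-1}\int_{t_0}^{t_0+h}g(\tau)\d\tau\to g^+(t_0)$ as $h\searrow 0$, and applying \eqref{weakls} on $[t_0,t_0+h]$ and dividing by $h$ produces \ref{LSplus}. The derivation of \ref{LSminus} is symmetric, using the interval $[t_0-h,t_0]$ and the left limit $x^-(t_0)$.

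The main obstacle is the inertial and viscous terms: only $\eps_j\xepsjd$, not $\xepsjd$ itself, is uniformly bounded, so a naive estimate of either term would not vanish. Integration by parts converts the inertial integral into boundary values, where \ref{prop:stima_iii} applies directly, while the rate-independent bound $\int_0^T|\xepsjd|\d\tau\le C_\Lambda/\alpha_*$ absorbs the surviving factor $\eps_j$ in the viscous term. The minor subtlety that $x$ may have countably many jumps is harmless because this set has Lebesgue measure zero and the relevant integrand is bounded.
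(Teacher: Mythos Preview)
Your proof is correct and follows essentially the same approach as the paper: integrate \ref{LSeps} over $[s,t]$, pass to the limit using dominated convergence for the $D_x\mc E$ term, integration by parts plus Corollary~\ref{unifbound}\ref{prop:stima_iii} for the inertial term, and the $L^1$-bound on $\xepsjd$ from \ref{hyp:R2} and Corollary~\ref{unifbound}\ref{prop:stima_ii} for the viscous term. For \ref{LSplus} and \ref{LSminus} the paper argues slightly more directly---observing that $\tau\mapsto\mc R(\tau,v)+\langle D_x\mc E(\tau,x^\pm(\tau)),v\rangle$ is one-sidedly continuous (via \ref{hyp:E3} and \ref{hyp:R3}) and coincides a.e.\ with your $g$, so the averaged integral converges to the one-sided value---but this is exactly your averaging argument rephrased.
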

\begin{proof}
	Let $\epsj$ be the subsequence obtained in Theorem~\ref{convsubseq}. We now fix $v\in K$, being \eqref{weakls} trivial if $v\notin K$, and by integrating the local stability condition (LS$^\epsj$) between arbitrary $0\le s\le t\le T$ we deduce:
	\begin{align*}
		0&\le \int_{s}^{t}\Big(\mc R(\tau,v)+\langle D_x\mc E(\tau,\xepsj(\tau))+\eps_j^2 \mathbb{M}\xepsjdd(\tau)+\epsj\mathbb{V}\xepsjd(\tau),v\rangle\Big)\d\tau\\
		&=\int_{s}^{t}\Big(\mc R(\tau,v)+\langle D_x\mc E(\tau,\xepsj(\tau)),v\rangle\Big)\d\tau+\eps_j^2\langle\mathbb{M}(\xepsjd(t)-\xepsjd(s)),v \rangle+\eps_j\int_{s}^{t}\langle\mathbb{V}\xepsjd(\tau),v \rangle\d\tau.
	\end{align*}
	Letting $j\to +\infty$ we obtain \eqref{weakls} by dominated convergence on the first term (using \ref{hyp:E3}), while the second and the third term vanish by means of \ref{prop:stima_ii} and \ref{prop:stima_iii} of Corollary~\ref{unifbound} together with \eqref{boundsassumption1}, \eqref{boundsassumption2}, and \ref{hyp:R2}.\par 
	The validity of (LS$^\pm$) easily follows from \eqref{weakls} since by \ref{hyp:E3} and \ref{hyp:R3} the map $t\mapsto \mc R(t,v)+\langle D_x \mc E(t,x^\pm(t)),v\rangle$ is right continuous with $x^+$ and left continuous with $x^-$.
\end{proof}
Next proposition exploits the lower semicontinuity of the $\mc R$-variation (Lemma~\ref{weaklscvar}) to obtain an estimate from above of the quasistatic energy:
\begin{prop}[\textbf{Lower Energy Estimates}]
	Assume that $x^\eps_0$ and $\eps x^\eps_1$ are uniformly bounded. Then the limit function $x$ obtained in Theorem~\ref{convsubseq} fulfils the following energy inequalities:
	\begin{subequations}
		\begin{equation}\label{LEEa}
		\mc E(t,x^+(t))+V_{\mc R}(x;s-,t+)\le \mc E(s,x^-(s))+\int_{s}^{t}\frac{\partial}{\partial t}\mc E(\tau,x(\tau))\d\tau,\quad\text{ for every }0<s\le t\le T.
		\end{equation}
		\begin{equation}\label{LEEb}
		\mc E(t,x^+(t))+V_{\mc R}(x;s+,t+)\le \mc E(s,x^+(s))+\int_{s}^{t}\frac{\partial}{\partial t}\mc E(\tau,x(\tau))\d\tau,\quad\text{ for every }0\le s\le t\le T.
		\end{equation}
				\begin{equation}\label{LEEc}
				\mc E(t,x^-(t))+V_{\mc R}(x;s-,t-)\le \mc E(s,x^-(s))+\int_{s}^{t}\frac{\partial}{\partial t}\mc E(\tau,x(\tau))\d\tau,\quad\text{ for every }0< s\le t\le T.
				\end{equation}
	\end{subequations}	
	If in addition $\lim\limits_{\eps\to 0}\eps x^\eps_1=0$, then \eqref{LEEa} and \eqref{LEEc} hold true also for $s=0$.
\end{prop}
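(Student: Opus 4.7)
The plan is to derive all three inequalities from the dynamic energy balance \ref{EBeps}, applied on subintervals whose endpoints both lie outside the (at most countable) jump set $J_x$ of $x$, so that the kinetic energy terms vanish along the subsequence $\epsj$ thanks to Theorem~\ref{convsubseq}\ref{conv:c}. For \eqref{LEEa} I fix $0<s\le t\le T$, select $s_k\nearrow s$ and $t_k\searrow t$ with $s_k,t_k\in(0,T]\setminus J_x$ (possible since $J_x$ is countable), and write \ref{EBeps} between $s_k$ and $t_k$. Discarding the nonnegative viscous term $\epsj\int|\xepsjd|^2_{\mathbb V}$ produces
$$\frac{\epsj^2}{2}|\xepsjd(t_k)|^2_{\mathbb{M}}+\mc E(t_k,\xepsj(t_k))+\int_{s_k}^{t_k}\mc R(\tau,\xepsjd(\tau))\d\tau \le \frac{\epsj^2}{2}|\xepsjd(s_k)|^2_{\mathbb{M}}+\mc E(s_k,\xepsj(s_k))+\int_{s_k}^{t_k}\frac{\partial}{\partial t}\mc E(\tau,\xepsj(\tau))\d\tau.$$

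Passing to the $\liminf$ as $j\to\infty$: the two boundary kinetic terms vanish by Theorem~\ref{convsubseq}\ref{conv:c}; the elastic energies converge pointwise by Theorem~\ref{convsubseq}\ref{conv:a} combined with the continuity of $\mc E$ (granted by \ref{hyp:E1} and \ref{hyp:E3}); the $\mc R$-integral is bounded below by $V_\mc R(x;s_k,t_k)$ thanks to Theorem~\ref{convsubseq}\ref{conv:b}; and the $\partial_t\mc E$ integral converges by dominated convergence, using \ref{hyp:E4} together with the uniform bound on $\xepsj$ from Corollary~\ref{unifbound}. Letting then $k\to\infty$, continuity of $\mc E$ and the definition of the one-sided limits give $\mc E(t_k,x(t_k))\to\mc E(t,x^+(t))$ and $\mc E(s_k,x(s_k))\to\mc E(s,x^-(s))$, while $V_\mc R(x;s_k,t_k)\to V_\mc R(x;s-,t+)$ directly from the definition introduced in Section~\ref{secACBV}; the residual integral converges once more by dominated convergence, yielding \eqref{LEEa}.

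The proofs of \eqref{LEEb} and \eqref{LEEc} are identical up to the one-sided choice of approximating sequences: for \eqref{LEEb} take $s_k\searrow s$ and $t_k\searrow t$ in $(0,T]\setminus J_x$ (so $s=0$ is admissible, since $s_k>0$ suffices and $V_\mc R(x;s_k,t_k)\to V_\mc R(x;s+,t+)$ by definition); for \eqref{LEEc}, with $0<s$, take $s_k\nearrow s$ and $t_k\nearrow t$ in $(0,T]\setminus J_x$. For the boundary case $s=0$ in \eqref{LEEa} and \eqref{LEEc}, under the extra hypothesis $\eps x^\eps_1\to 0$, I apply \ref{EBeps} directly on $[0,t_k]$, with $t_k\searrow t$ or $t_k\nearrow t$ respectively; the initial kinetic term now reads $\frac{\epsj^2}{2}|x^{\epsj}_1|^2_{\mathbb M}$, which vanishes by \eqref{boundsassumption1} and the hypothesis, while $\mc E(0,x_0^{\epsj})\to\mc E(0,x(0))=\mc E(0,x^-(0))$ by Theorem~\ref{convsubseq}\ref{conv:a}, continuity of $\mc E$ and the constant-extension convention for $x$ at the endpoints. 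The main sources of delicacy are precisely the selection of $s_k,t_k$ outside the jump set and the correct bookkeeping of the one-sided limits in the $V_\mc R$ notation; everything else is a routine combination of lower-semicontinuity, continuity of $\mc E$, and dominated convergence.
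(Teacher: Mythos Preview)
Your proof is correct and essentially identical to the paper's: approximate $s,t$ by points outside $J_x$, pass \ref{EBeps} to the $\liminf$ via Theorem~\ref{convsubseq}, then let $k\to\infty$; the $s=0$ case is handled by taking $s_k\equiv 0$ and using $\eps x^\eps_1\to 0$. One minor remark: the paper invokes \ref{hyp:E5} rather than \ref{hyp:E4} for the dominated-convergence step on the $\partial_t\mc E$ integral, since \ref{hyp:E5} is what makes $\partial_t\mc E$ Caratheodory (and hence gives pointwise convergence of the integrand), while \ref{hyp:E4} supplies the dominating function.
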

\begin{proof}
	We prove only \eqref{LEEa}, being the other inequalities analogous. We fix $0<s\le t\le T$ and we consider two sequences $s_k\nearrow s$ and $t_k\searrow t$ such that $s_k,t_k\notin J_x$. By means of Theorem~\ref{convsubseq} and by using the nonnegativity of $|\cdot|_\mathbb{V}^2$ together with the energy balance (EB$^\epsj$) we get:
	\begin{align*}
	&\quad\, \mc E(t_k,x(t_k))+V_{\mc R}(x;s_k,t_k)\\
	&\le\liminf\limits_{j\to +\infty}\left(\frac{\eps_j^2}{2}|\xepsjd(t_k)|^2_{\mathbb M}+\mc E(t_k,\xepsj(t_k))+\int_{s_k}^{t_k}\mc R(\tau,\xepsjd(\tau))\d\tau+\epsj\int_{s_k}^{t_k}|\xepsjd(\tau)|^2_{\mathbb{V}}\d\tau\right)\\
	&=\liminf\limits_{j\to +\infty}\left(\frac{\eps_j^2}{2}|\xepsjd(s_k)|^2_{\mathbb M}+\mc E(s_k,\xepsj(s_k))+\int_{s_k}^{t_k}\frac{\partial}{\partial t}\mc E(\tau,\xepsj(\tau))\d\tau \right)\\
	&=\mc E(s_k,x(s_k))+\int_{s_k}^{t_k}\frac{\partial}{\partial t} \mc E(\tau,x(\tau))\d\tau,
	\end{align*}
	where in the last equality we employed once again the continuity of $\mc E$ and \ref{hyp:E5}. Letting now $k\to +\infty$ we obtain \eqref{LEEa}.\par
	If in addition $\lim\limits_{\eps\to 0}\eps x^\eps_1=0$, the same argument works choosing $s_k\equiv0$; thus we conclude.
\end{proof}
\begin{rmk}\label{rmk:convexity}
	We want to highlight that up to this point the convexity assumption \ref{hyp:E2} was not needed. Thus even without convexity the limit function $x$ satisfies the right and left local stability conditions (LS$^\pm$) plus the energy inequality \eqref{LEEa}. Usually a function satisfying these properties is called local solution to the quasistatic problem \eqref{quasprob}, see \cite[Chapter~3]{MielkRoubbook}. Inequality \eqref{LEEa} can be also reformulated as an energy equality in a very implicit way by introducing a so called defect measure $\mu_D$ such that:
	\begin{equation*}
		\mc E(t,x^+(t))+V_{\mc R}(x;s-,t+)+\mu_D([s,t])= \mc E(s,x^-(s))+\!\int_{s}^{t}\!\!\frac{\partial}{\partial t}\mc E(\tau,x(\tau))\d\tau,\text{ for every }0\le\! s\le\! t\le\! T.
	\end{equation*}
	The positive measure $\mu_D$ is no other than the opposite of the distributional derivative of the function $\displaystyle t\mapsto \mc E(t,x(t))+V_\mc R(x;0,t)-\int_{0}^{t}\frac{\partial}{\partial t}\mc E(\tau,x(\tau))\d\tau$. The presence of such a defect measure, which somehow takes into account the possible losses of energy in the system, appears in many asymptotical studies of mechanical models: we refer for instance to \cite{AgoRos, EfMielk, MielkRosSav09, MielkRosSav12, MielkRosSav16, Roub} for a vanishing viscosity analysis and the notion of Balanced Viscosity solutions in both finite and infinite dimension, or to \cite{ScilSol} for a vanishing inertia and viscosity analysis (without a rate-independent dissipation) in finite dimension.\par 
	The fine properties of $\mu_D$ in our context where a rate-independent dissipation is also present are beyond the scopes of the present work, thus we leave this analysis open for future research. We simply notice that, as we will see in Theorem~\ref{almostfinalthm}, the (uniform) convexity assumption \ref{hyp:E2} will ensure that $\mu_D$ is the null measure.
\end{rmk}

From now on we will exploit the convexity assumption \ref{hyp:E2}. This allows us to deduce that the local conditions \ref{LSplus} and \ref{LSminus} are equivalent to their global counterpart:
\begin{enumerate}[label=\textup{(GS$^+$)}]
		\item \label{GSplus} $\mc E(t,x^+(t))\le\mc E(t,v)+\mc R(t,v-x^+(t)),\quad \text{for every }v\in X\text{ and for every }t\in[0,T]$;
		\end{enumerate}
\begin{enumerate}[label=\textup{(GS$^-$)}]
		\item \label{GSminus}$\mc E(t,x^-(t))\le\mc E(t,v)+\mc R(t,v-x^-(t)),\quad \text{for every }v\in X\text{ and for every }t\in(0,T]$.
	\end{enumerate}
These global conditions permit to get also a bound from below of the energy, see Lemma~\ref{lemmaUEE} and Proposition~\ref{propUEE}. We warn the reader that for the proof of next lemma in the case of a general elastic energy $\mc E$ we need to add the assumption \ref{hyp:E6}.

\begin{lemma}\label{lemmaUEE}
	Assume \ref{hyp:E6}. Assume that $x^\eps_0$ and $\eps x^\eps_1$ are uniformly bounded. Then the right and left limit of the function $x$ obtained in Theorem~\ref{convsubseq} fulfil the following inequalities:
	\begin{subequations}
		\begin{equation}\label{UEE+}
			\mc E(t,x^+(t))+V_{\mc R}(x^+;s,t)\ge \mc E(s,x^+(s))+\int_{s}^{t}\frac{\partial}{\partial t}\mc E(\tau,x(\tau))\d\tau,\quad\text{ for every }0\le s\le t\le T;
		\end{equation}
		\begin{equation}\label{UEE-}
			\mc E(t,x^-(t))+V_{\mc R}(x^-;s,t)\ge \mc E(s,x^-(s))+\int_{s}^{t}\frac{\partial}{\partial t}\mc E(\tau,x(\tau))\d\tau,\quad\text{ for every }0< s\le t\le T.
		\end{equation}
	\end{subequations}
If in addition $x_0:=x(0)$ satisfies \eqref{eq:adm_quasistat}, namely $\mc E(0,x_0)\le\mc E(0,v)+\mc R(0,v-x_0)$ for every $v\in X$, then \eqref{UEE-} holds true also for $s=0$.
\end{lemma}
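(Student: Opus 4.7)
The plan is to adapt the classical Riemann-sum derivation of the upper energy estimate in rate-independent systems (see e.g.~\cite{MielkRoubbook}), exploiting the global stability conditions \ref{GSplus} and \ref{GSminus}: these follow from the already-established local versions \ref{LSplus} and \ref{LSminus} via the convexity assumption \ref{hyp:E2}. I focus on \eqref{UEE+}; the proof of \eqref{UEE-} is entirely symmetric, with $x^-$ and \ref{GSminus} replacing $x^+$ and \ref{GSplus}.

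Fix $0 \le s < t \le T$ and take a fine sequence of partitions $s = t_0^n < \dots < t_{N_n}^n = t$ in the sense of \eqref{finezza}. Testing \ref{GSplus} at $t_{k-1}^n$ against the competitor $v = x^+(t_k^n)$ for each $k$, and telescoping the resulting inequalities, one obtains
\[
\mc E(s, x^+(s)) - \mc E(t, x^+(t)) \le \sum_{k=1}^{N_n} \mc R(t_{k-1}^n, x^+(t_k^n) - x^+(t_{k-1}^n)) - \sum_{k=1}^{N_n} \int_{t_{k-1}^n}^{t_k^n} \frac{\partial}{\partial t}\mc E(\tau, x^+(t_k^n))\d\tau.
\]
Since $x^+\in BV_\mc R([0,T];X)$ by property~\ref{var:d} of Proposition~\ref{propertiesvariationtime}, the first sum converges to $V_\mc R(x^+; s, t)$ as the mesh vanishes, directly from Definition~\ref{def:RBV}.

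The main step is the passage to the limit in the second sum. Setting $\tilde{x}_n(\tau) := x^+(t_k^n)$ for $\tau \in [t_{k-1}^n, t_k^n)$, right-continuity of $x^+$ together with the shrinking of each partition interval yields $\tilde{x}_n(\tau) \to x^+(\tau)$ pointwise on $[s, t)$. Exploiting \ref{hyp:E5} to compare the integrands and \ref{hyp:E4} to supply the integrable majorant $\omega(M_R)\gamma(\tau)$, dominated convergence gives
\[
\sum_{k=1}^{N_n} \int_{t_{k-1}^n}^{t_k^n} \frac{\partial}{\partial t}\mc E(\tau, x^+(t_k^n))\d\tau \longrightarrow \int_s^t \frac{\partial}{\partial t}\mc E(\tau, x(\tau))\d\tau,
\]
where I also use $x^+ = x$ outside the at most countable jump set. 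Hypothesis \ref{hyp:E6} enters as the uniform-in-$z$ equicontinuity in $t$ needed to control the oscillation of $\frac{\partial}{\partial t}\mc E(\tau, x^+(t_k^n))$ within each partition interval, compensating for the fact that $x^+$ can jump: this is the delicate point of the argument, since without such uniformity the discontinuities of $x$ could potentially spoil the Riemann-sum approximation.

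For the case $s = 0$: in \eqref{UEE+}, \ref{LSplus} at $t = 0$ is already available from \eqref{weakls}, so \ref{GSplus} at $t = 0$ follows from \ref{hyp:E2} and the previous argument applies unchanged. For \eqref{UEE-} at $s = 0$, the extension convention gives $x^-(0) = x_0$, and the admissibility condition \eqref{eq:adm_quasistat} combined with \ref{hyp:E2} provides \ref{GSminus} at $t = 0$; the telescoping argument then closes this case as well.
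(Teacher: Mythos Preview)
Your telescoping step via \ref{GSplus}/\ref{GSminus} is identical to the paper's.  The genuine difference is in the treatment of the integral term $I_n:=\sum_k\int_{t_{k-1}}^{t_k}\frac{\partial}{\partial t}\mc E(\tau,x^+(t_k))\d\tau$.  The paper first replaces $\frac{\partial}{\partial t}\mc E(\tau,x^+(t_k))$ by $\frac{\partial}{\partial t}\mc E(t_k,x^+(t_k))$ using \ref{hyp:E6}, and then invokes a carefully chosen fine partition (via \cite[Lemma~4.5]{FrancMielk}) so that the resulting Riemann sums converge to $\int_s^t\frac{\partial}{\partial t}\mc E(\tau,x(\tau))\d\tau$.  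You instead rewrite $I_n=\int_s^t\frac{\partial}{\partial t}\mc E(\tau,\tilde x_n(\tau))\d\tau$ with the step function $\tilde x_n$, use right-continuity of $x^+$ to obtain $\tilde x_n(\tau)\to x^+(\tau)$ pointwise, and pass to the limit by dominated convergence with the majorant $\omega(M_R)\gamma\in L^1$ from \ref{hyp:E4} and the spatial continuity of $\frac{\partial}{\partial t}\mc E(\tau,\cdot)$ granted by \ref{hyp:E5}.

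Your route is correct and more elementary; in fact it does \emph{not} use \ref{hyp:E6} at all.  The sentence in which you say ``Hypothesis \ref{hyp:E6} enters as the uniform-in-$z$ equicontinuity in $t$\ldots'' is vague and, on inspection, superfluous: once you have pointwise convergence of $\tilde x_n$ and the $L^1$ majorant, dominated convergence closes the argument without any control on temporal oscillations.  So your argument actually proves the lemma under \ref{hyp:E1}--\ref{hyp:E5} alone, which is strictly stronger than both Lemma~\ref{lemmaUEE} and the alternative Lemma~\ref{lemma: QE} in the paper.  What the paper's approach buys is an explicit illustration of how \ref{hyp:E6} (or the quadratic structure \ref{hyp:QE}) can be exploited, at the cost of a longer and less general proof.
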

\begin{proof}
	Inequality \eqref{UEE+} is trivially satisfied for $s=t$, so let us fix $0\le s< t\le T$ and consider a fine sequence of partitions of $[s,t]$ such that:
	\begin{equation}\label{P1}
		\lim\limits_{n\to +\infty}\sum_{k=1}^{n}\left|(t_k-t_{k-1})\frac{\partial}{\partial t}\mc E(t_k,x^+(t_k))-\int_{t_{k-1}}^{t_k}\frac{\partial}{\partial t}\mc E(\tau,x(\tau)) d\tau\right|=0.
	\end{equation}
	Such a sequence of partitions exists since $\frac{\partial}{\partial t}\mc E(\cdot,x(\cdot))\in L^1(0,T)$, see for instance \cite[Lemma~4.5]{FrancMielk}.\par 
	So let us fix one of these partitions and by means of \ref{GSplus} we deduce that for every $k=1,\dots,n$ we have:
	\begin{align*}
	\mathcal{E}(t_{k-1}, x^+(t_{k-1}))\le \mathcal{E}(t_{k-1}, x^+(t_k))+\mathcal{R}(t_{k-1},x^+(t_k)-x^+(t_{k-1})),
	\end{align*}
	and thus we obtain:
	\begin{multline*}
	\quad\, \mathcal{E}(t_k, x^+(t_k))-\mathcal{E}(t_{k-1}, x^+(t_{k-1}))+\mathcal{R}(t_{k-1},x^+(t_k)-x^+(t_{k-1}))\\
	\ge\mathcal{E}(t_k, x^+(t_k))-\mathcal{E}(t_{k-1}, x^+(t_k))=\int_{t_{k-1}}^{t_k}\frac{\partial}{\partial t}\mathcal{E}(\tau, x^+(t_k))d\tau.
	\end{multline*}
	By summing the above inequality from $k=1$ to $k=n$ we get:
	\begin{align}\label{In}
	\quad\,\mathcal{E}(t, x^+(t)){-}\mathcal{E}(s, x^+(s))+\sum_{k=1}^{n}\mathcal{R}(t_{k-1},x^+(t_k){-}x^+(t_{k-1}))
	\ge \sum_{k=1}^{n}\int_{t_{k-1}}^{t_k}\frac{\partial}{\partial t}\mc E(\tau,x^+(t_k))d\tau=:I_n.
	\end{align}
	By letting $n\to +\infty$, we get \eqref{UEE+} if we show that $\displaystyle\lim\limits_{n\to +\infty} I_n=\int_{s}^{t}\frac{\partial}{\partial t}\mc E(\tau,x(\tau))\d\tau$. To prove it we argue as follows:
	\begin{align*}
		&\quad\,\left|I_n-\int_{s}^{t}\frac{\partial}{\partial t}\mc E(\tau,x(\tau))\d\tau\right|=\left|\sum_{k=1}^{n}\int_{t_{k-1}}^{t_k}\Big(\frac{\partial}{\partial t}\mc E(\tau,x^+(t_k))-\frac{\partial}{\partial t}\mc E(\tau,x(\tau))\Big)\d\tau\right|\\
		&\le\!\sum_{k=1}^{n}\!\int_{t_{k-1}}^{t_k}\!\left|\frac{\partial}{\partial t}\mc E(\tau,x^+(t_k)){-}\frac{\partial}{\partial t}\mc E(t_k,x^+(t_k))\right|\!\d\tau\\&+\sum_{k=1}^{n}\left|(t_k{-}t_{k-1})\frac{\partial}{\partial t}\mc E(t_k,x^+(t_k)){-}\!\int_{t_{k-1}}^{t_k}\!\frac{\partial}{\partial t}\mc E(\tau,x(\tau)) \d\tau\right|\!.
	\end{align*}
	The second term vanishes as $n\to +\infty$ thanks to \eqref{P1}, while to deal with the first one we use \ref{hyp:E6}: we first fix $\lambda>0$ and we pick $R=C_\Lambda |\piz|_*$, where $C_\Lambda$ is the constant provided by  Corollary~\ref{unifbound}. Then let $\delta$ be given accordingly by \ref{hyp:E6}. By means of \eqref{finezza} we know that $\max\limits_{k=1,\dots,n}\left|t_k-t_{k-1}\right|\le\delta$ for $n$ large enough, thus \ref{hyp:E6} implies:
	\begin{equation*}
		\sum_{k=1}^{n}\int_{t_{k-1}}^{t_k}\left|\frac{\partial}{\partial t}\mc E(\tau,x^+(t_k))-\frac{\partial}{\partial t}\mc E(t_k,x^+(t_k))\right|\d\tau\le\lambda(t-s),
	\end{equation*}
	and hence \eqref{UEE+} is proved.\par
	Inequality \eqref{UEE-} can be obtained arguing in the same way replacing $x^+$ with $x^-$, and recalling that \ref{GSminus} holds true only if $t>0$. If in addition $x_0$ satisfies \eqref{eq:adm_quasistat}, then \ref{GSminus} holds true also in $t=0$ and the whole argument can be performed also in $s=0$.
\end{proof}
We want to point out that condition \ref{hyp:E6} is not necessary for the validity of Lemma~\ref{lemmaUEE}, but it is useful to treat the case of a general elastic energy. Indeed, if we restrict for instance our attention to the concrete case of a quadratic energy $\mc E_\mathrm{sh}(t,z)=\frac 12\langle\mathbb{A}_\mathrm{sh}(z-\ell_\mathrm{sh}(t)),z-\ell_\mathrm{sh}(t)\rangle_Z$ as in \ref{hyp:QE}, it is easy to verify that conditions \ref{hyp:E1}--\ref{hyp:E5} are satisfied, but \ref{hyp:E6} does not hold true if $\dot{\ell}_\mathrm{sh}$ is not continuous. However, Lemma~\ref{lemmaUEE} is still valid.

\begin{lemma}\label{lemma: QE}
If in Lemma~\ref{lemmaUEE} assumption  \ref{hyp:E6} is replaced by \ref{hyp:QE}, the same conclusions hold.
\end{lemma}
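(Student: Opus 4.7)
My strategy is to follow the proof of Lemma~\ref{lemmaUEE} verbatim up to inequality \eqref{In}, and then replace the argument for the convergence $I_n \to \int_s^t \frac{\partial}{\partial t}\mc E(\tau, x(\tau))\, \d\tau$, which is the only step that relied on \ref{hyp:E6}. Under \ref{hyp:QE} the technical selection \eqref{P1} of the partition is no longer required, so an arbitrary fine sequence of partitions of $[s,t]$ can be used.

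Under \ref{hyp:QE} the time derivative of the energy has the explicit form
\[
\frac{\partial}{\partial t}\mc E(\tau, y) = -\langle\mathbb{A}_\mathrm{sh}(\piz(y) - \ell_\mathrm{sh}(\tau)), \dot\ell_\mathrm{sh}(\tau)\rangle_Z,
\]
which produces the crucial cancellation
\[
\frac{\partial}{\partial t}\mc E(\tau, y) - \frac{\partial}{\partial t}\mc E(\tau, x(\tau)) = -\langle\mathbb{A}_\mathrm{sh}(\piz(y) - \piz(x(\tau))), \dot\ell_\mathrm{sh}(\tau)\rangle_Z,
\]
where the term $\ell_\mathrm{sh}(\tau)$ disappears. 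Introducing the step function $F_n\colon (s, t] \to Z$ defined by $F_n(\tau) := \piz(x^+(t_k))$ for $\tau \in (t_{k-1}, t_k]$, one obtains the identity
\[
I_n - \int_s^t \frac{\partial}{\partial t}\mc E(\tau, x(\tau))\, \d\tau = -\int_s^t \langle\mathbb{A}_\mathrm{sh}(F_n(\tau) - \piz(x(\tau))), \dot\ell_\mathrm{sh}(\tau)\rangle_Z\, \d\tau,
\]
whose absolute value is controlled by a constant (depending only on $\mathbb{A}_\mathrm{sh}$) times $\int_s^t |F_n(\tau) - \piz(x(\tau))|_Z\, |\dot\ell_\mathrm{sh}(\tau)|_Z\, \d\tau$.

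The conclusion will follow from the dominated convergence theorem applied to the latter integral. Since the map $t \mapsto x^+(t)$ is right-continuous (a standard property of right limits of BV functions, recalled in Section~\ref{secACBV}), for each fixed $\tau \in (s, t)$ the index $k_n(\tau)$ satisfying $\tau \in (t^{(n)}_{k_n(\tau) - 1}, t^{(n)}_{k_n(\tau)}]$ fulfills $t^{(n)}_{k_n(\tau)} \searrow \tau$ as the mesh goes to zero, and hence $F_n(\tau) = \piz(x^+(t^{(n)}_{k_n(\tau)})) \to \piz(x^+(\tau))$. Since the jump set of $x \in BV_\mc R([0, T]; X)$ is at most countable and $x^+(\tau) = x(\tau)$ for every $\tau$ outside that set, we get $F_n(\tau) \to \piz(x(\tau))$ for a.e.\ $\tau \in (s, t)$. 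The uniform bound $\sup_n \sup_\tau |F_n(\tau)|_Z \le |\piz|_* \sup_t |x^+(t)| < +\infty$, together with $\dot\ell_\mathrm{sh} \in L^1(0, T; Z)$ granted by \ref{hyp:QE}, provides the needed integrable dominating function.

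This completes the proof of \eqref{UEE+}; the analogous argument (with $x^-$ in place of $x^+$ and left-continuity of $x^-$) yields \eqref{UEE-}, while the extension to $s = 0$ under the admissibility condition \eqref{eq:adm_quasistat} is unchanged from the original proof. The only point requiring some care is the pointwise convergence of $F_n$, which rests on the interplay between right-continuity of $x^+$ and the countability of the jump set of $x$; no genuinely new obstacle arises, as the quadratic structure of $\mc E$ supplies precisely the algebraic cancellation needed to bypass \ref{hyp:E6}.
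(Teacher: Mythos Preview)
Your argument is correct and takes a genuinely different, more streamlined route than the paper. The paper keeps the structure of the proof of Lemma~\ref{lemmaUEE}: it selects special fine partitions satisfying the two Riemann-sum conditions \eqref{Pa}--\eqref{Pb} (via \cite[Lemma~4.5]{FrancMielk}), writes out $I_n$ explicitly in terms of $\mathbb{A}_\mathrm{sh}$ and $\ell_\mathrm{sh}$, and then splits it as $I_n=J_n^1+J_n^2+J_n^3$, handling each piece separately. You instead exploit the bilinear cancellation in $\frac{\partial}{\partial t}\EE(\tau,y)-\frac{\partial}{\partial t}\EE(\tau,x(\tau))$ to reduce the whole question to a single dominated-convergence statement for $\int_s^t\langle\mathbb{A}_\mathrm{sh}(F_n(\tau)-\piz(x(\tau))),\dot\ell_\mathrm{sh}(\tau)\rangle_Z\,\d\tau$, which lets you work with an \emph{arbitrary} fine sequence of partitions and avoids the external lemma entirely. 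Both approaches rest on the same algebraic observation (the quadratic form makes the $\ell_\mathrm{sh}(\tau)$ term drop out of the difference), but yours packages it more economically.

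One small correction for the \eqref{UEE-} case: following the proof of Lemma~\ref{lemmaUEE} verbatim, the use of \ref{GSminus} at $t_{k-1}$ with competitor $x^-(t_k)$ produces $\int_{t_{k-1}}^{t_k}\frac{\partial}{\partial t}\EE(\tau,x^-(t_k))\,\d\tau$, so the step function is still $F_n(\tau)=\piz(x^-(t_k))$ with $t_k\searrow\tau$. The relevant fact is therefore not left-continuity of $x^-$ but rather $(x^-)^+(\tau)=x^+(\tau)$, a standard identity for $BV$ functions; combined with $x^+(\tau)=x(\tau)$ off the countable jump set, this still gives $F_n(\tau)\to\piz(x(\tau))$ a.e.\ and your dominated-convergence argument goes through unchanged.
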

\begin{proof}
The proof follows the same strategy used for Lemma~\ref{lemmaUEE}, with some adaptations. Firstly, we need to choose fine partitions satisfying instead:
\begin{subequations}
	\begin{equation}\label{Pa}
		\lim\limits_{n\to +\infty}\sum_{k=1}^{n}(t_k-t_{k-1})\langle\mathbb{A}_\mathrm{sh}(\piz (x^+(t_k)){-}\ell_\mathrm{sh}(t_k)),\dot{\ell}_\mathrm{sh}(t_k)\rangle_Z=\!\!\int_{s}^{t}\!\!\!\langle\mathbb{A}_\mathrm{sh}(\piz (x(\tau)){-}\ell_\mathrm{sh}(\tau)),\dot{\ell}_\mathrm{sh}(\tau)\rangle_Zd\tau;
	\end{equation}
	\begin{equation}\label{Pb}
		\lim\limits_{n\to +\infty}\sum_{k=1}^{n}\left|(t_k-t_{k-1})\dot{\ell}_\mathrm{sh}(t_k)-\int_{t_{k-1}}^{t_k}\dot{\ell}_\mathrm{sh}(\tau)\d\tau\right|_Z=0.
	\end{equation}
\end{subequations}
As before, the existence of such a sequence of partitions is ensured by \cite[Lemma~4.5]{FrancMielk}. In this case the integral term $I_n$ defined in \eqref{In} takes the form:
\begin{equation*}
	I_n=-\sum_{k=1}^{n}\int_{t_{k-1}}^{t_k}\langle\mathbb{A}_\mathrm{sh}(\piz (x^+(t_k))-\ell_\mathrm{sh}(\tau)),\dot{\ell}_\mathrm{sh}(\tau)\rangle_Z d\tau,
\end{equation*}
and we conclude if we prove that $\displaystyle\lim\limits_{n\to +\infty} I_n=-\int_{s}^{t}\langle\mathbb{A}_\mathrm{sh}(\piz (x(\tau))-\ell_\mathrm{sh}(\tau)),\dot{\ell}_\mathrm{sh}(\tau)\rangle_Zd\tau$. With this aim we rewrite $I_n$ as:
\begin{align*}
	I_n=&-\sum_{k=1}^{n}(t_k-t_{k-1})\langle\mathbb{A}_\mathrm{sh}(\piz (x^+(t_k))-\ell_\mathrm{sh}(t_k)),\dot{\ell}_\mathrm{sh}(t_k)\rangle_Z\\
	&+\sum_{k=1}^{n}\left\langle\mathbb{A}_\mathrm{sh}(\piz (x^+(t_k))-\ell_\mathrm{sh}(t_k))\,,\,(t_k-t_{k-1})\dot{\ell}_\mathrm{sh}(t_k)-\int_{t_{k-1}}^{t_k}\dot{\ell}_\mathrm{sh}(\tau)\d\tau\right\rangle_Z\\
	&+\sum_{k=1}^{n}\int_{t_{k-1}}^{t_k}\langle\mathbb{A}_\mathrm{sh}(\ell_\mathrm{sh}(t_k)-\ell_\mathrm{sh}(\tau)),\dot{\ell}_\mathrm{sh}(\tau)\rangle_Z\d\tau=:J^1_n+J^2_n+J^3_n.
\end{align*}
By means of \eqref{Pb} it is easy to see that $\lim\limits_{n\to +\infty}J^2_n=0$, while exploiting the absolute continuity of $\ell_\mathrm{sh}$ together with \eqref{finezza} we also deduce that $\lim\limits_{n\to +\infty}J^3_n=0$. By using \eqref{Pa} we conclude. 
\end{proof}

As a simple corollary we get:

\begin{prop}[\textbf{Upper Energy Estimate}]\label{propUEE}
	Assume \ref{hyp:E6} or \ref{hyp:QE}, and assume that $x^\eps_0$ and $\eps x^\eps_1$ are uniformly bounded. Then the limit function $x$ obtained in Theorem~\ref{convsubseq} fulfils the following inequality for every $0< s\le t\le T$:
	\begin{equation}\label{UEE}
		\mc E(t,x^+(t))+\min\big\{V_{\mc R}(x^+;s-,t),V_{\mc R}(x^-;s,t+)\big\}\ge \mc E(s,x^-(s))+\int_{s}^{t}\frac{\partial}{\partial t}\mc E(\tau,x(\tau))\d\tau.
	\end{equation}
	If in addition $x_0=x(0)$ satisfies \eqref{eq:adm_quasistat}, then it also holds:
	\begin{equation}\label{UEE0}
		\mc E(t,x^+(t))+V_{\mc R}(x^-;0,t+)\ge \mc E(0,x_0)+\int_{0}^{t}\frac{\partial}{\partial t}\mc E(\tau,x(\tau))\d\tau,\quad\text{ for every }t\in [0,T].
	\end{equation}
\end{prop}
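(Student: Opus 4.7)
My plan is to derive both \eqref{UEE} and \eqref{UEE0} from the two one-sided estimates \eqref{UEE+} and \eqref{UEE-} provided by Lemma~\ref{lemmaUEE} (or, under \ref{hyp:QE}, by Lemma~\ref{lemma: QE}), simply by passing to one-sided limits in the endpoints. Since $x\in BV_\mc R([0,T];X)$ has an at most countable jump set $J_x$, the one-sided limits of $x^\pm$ along monotone sequences satisfy $\lim_{s_k\nearrow s} x^+(s_k) = x^-(s)$ and $\lim_{t_k\searrow t} x^-(t_k) = x^+(t)$; this is checked by approximating with points in $[0,T]\setminus J_x$, where $x^+=x^-=x$.

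To obtain the $V_\mc R(x^+;s-,t)$ bound in \eqref{UEE}, I would apply \eqref{UEE+} with $s$ replaced by an increasing sequence $s_k\nearrow s$ (with $s_k\in(0,s]$, which is legitimate since $s>0$). The left-hand side converges to $\mc E(t,x^+(t))+V_\mc R(x^+;s-,t)$ by the very definition of $V_\mc R(x^+;s-,t)$, while the right-hand side tends to $\mc E(s,x^-(s))+\int_s^t\frac{\partial}{\partial t}\mc E(\tau,x(\tau))\d\tau$, thanks to joint continuity of $\mc E$ on $[0,T]\times X$, summability of $\frac{\partial}{\partial t}\mc E(\cdot,x(\cdot))$ granted by \ref{hyp:E4}, and the first pointwise identity above. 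A symmetric procedure applied to \eqref{UEE-} with a decreasing sequence $t_k\searrow t$ produces the companion bound with $V_\mc R(x^-;s,t+)$, and taking the minimum of the two yields \eqref{UEE}.

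For \eqref{UEE0}, under the admissibility \eqref{eq:adm_quasistat} Lemma~\ref{lemmaUEE} grants \eqref{UEE-} also at $s=0$, so for every $t>0$ the same passage to the limit $t_k\searrow t$ applies verbatim. The boundary case $t=0$ reduces, via part \ref{var:c} of Proposition~\ref{propertiesvariationtime}, to the identity $V_\mc R(x^-;0,0+)=\mc R(0,x^+(0)-x_0)$ (recall $x^-(0)=x_0$ by the extension convention), so that \eqref{UEE0} at $t=0$ reads $\mc E(0,x^+(0))+\mc R(0,x^+(0)-x_0)\ge \mc E(0,x_0)$. This is precisely the global stability at $t=0$ tested against $v=x^+(0)$, which follows from the local admissibility \eqref{eq:adm_quasistat} via convexity \ref{hyp:E2}. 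I do not anticipate any substantial obstacle in this proof; the only care needed is in establishing the one-sided pointwise identities for $x^\pm$ along monotone sequences, which rests on selecting continuity points of $x$ and on the countability of $J_x$.
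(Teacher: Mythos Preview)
Your proposal is correct and follows essentially the same route as the paper: apply \eqref{UEE+} with $s_k\nearrow s$ and \eqref{UEE-} with $t_k\searrow t$, then pass to the limit using continuity of $\mc E$ and the one-sided identities $\lim_{s_k\nearrow s}x^+(s_k)=x^-(s)$, $\lim_{t_k\searrow t}x^-(t_k)=x^+(t)$. The only minor difference is that you treat the boundary case $t=0$ in \eqref{UEE0} separately via global stability, whereas the paper simply includes $t=0$ in the same limiting argument (taking $t_k\searrow 0$ in \eqref{UEE-} at $s=0$); your route here is valid but slightly more elaborate than necessary.
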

\begin{proof}
	We fix $0<s\le t\le T$ and we consider two sequences $s_k\nearrow s$ and $t_k\searrow t$. By means of \eqref{UEE+} and \eqref{UEE-} we thus deduce:
	\begin{equation*}
		\mc E(t,x^+(t))+V_{\mc R}(x^+;s_k,t)\ge \mc E(s_k,x^+(s_k))+\int_{s_k}^{t}\frac{\partial}{\partial t}\mc E(\tau,x(\tau))\d\tau,
	\end{equation*}
	\begin{equation}\label{s0}
		\mc E(t_k,x^-(t_k))+V_{\mc R}(x^-;s,t_k)\ge \mc E(s,x^-(s))+\int_{s}^{t_k}\frac{\partial}{\partial t}\mc E(\tau,x(\tau))\d\tau.
	\end{equation}
	Letting $k\to +\infty$ and since $\mc E$ is continuous in $[0,T]\times X$ we obtain \eqref{UEE}.\par 
	If in addition $x_0$ satisfies \eqref{eq:adm_quasistat} we can set $s=0$ in \eqref{s0}, thus also \eqref{UEE0} follows by letting $k\to +\infty$.
\end{proof}
Combining all the results of this section we are finally able to prove that the limit function $x$ is actually an energetic solution of the quasistatic problem \eqref{quasprob}. The rigorous statement is the following: 
\begin{thm}\label{almostfinalthm}
	Assume \ref{hyp:E6} or \ref{hyp:QE}, and assume that $x^\eps_0$ and $\eps x^\eps_1$ are uniformly bounded. Then the limit function $x$ obtained in Theorem~\ref{convsubseq} is continuous in $(0,T]$ and its right limit $x^+$ is an energetic solution for \eqref{quasprob} with initial position $x^+(0)$ in the sense of Definition~\ref{defenergetic}.
	\par 
	If in addition $x_0=x(0)$ satisfies \eqref{eq:adm_quasistat} and $\lim\limits_{\eps\to 0}\eps x^\eps_1=0$, then $x$ is continuous also in $t=0$ and it is an energetic solution for \eqref{quasprob} with initial position $x_0$.
\end{thm}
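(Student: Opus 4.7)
The plan is to upgrade the inequalities in the lower energy estimates \eqref{LEEa}--\eqref{LEEc} and the upper energy estimates \eqref{UEE+}--\eqref{UEE-} into equalities along $x^+$, then to invoke Proposition~\ref{regularity} together with a careful analysis of one-sided $\mc R$-variations via Proposition~\ref{propertiesvariationtime} to remove every residual discontinuity of $x$.

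First I would establish the weak energy balance \ref{WEB} for $x^+$. By Proposition~\ref{propertiesvariationtime}\ref{var:d} together with \ref{var:c} (and the right-continuity of $x^+$), one has $V_{\mc R}(x;s+,t+)\ge V_{\mc R}(x^+;s,t+)=V_{\mc R}(x^+;s,t)$. Inserting this into \eqref{LEEb} and chaining with \eqref{UEE+} produces the double inequality
\begin{equation*}
\mc E(t,x^+(t))+V_{\mc R}(x^+;s,t)\le\mc E(s,x^+(s))+\int_s^t\frac{\partial}{\partial t}\mc E(\tau,x(\tau))\d\tau\le\mc E(t,x^+(t))+V_{\mc R}(x^+;s,t),
\end{equation*}
forcing equality throughout. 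Since $x=x^+$ almost everywhere (jumps of a $BV_{\mc R}$ function are at most countable), the integrand can be read with $x^+$ in place of $x$; combined with \ref{GSplus}---a consequence of the local stability \ref{LSplus} via convexity \ref{hyp:E2}---this shows that $x^+$ is an energetic solution of \eqref{quasprob} with initial value $x^+(0)$. Proposition~\ref{regularity} then yields the continuity of $x^+$ on $[0,T]$.

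Next I would upgrade the continuity of $x^+$ to continuity of $x$ on $(0,T]$. Since $x^+$ is continuous and the set of discontinuity points of $x$ is countable, approaching any $t\in(0,T]$ from the left through continuity points of $x$ gives $x^-(t)=x^+(t)$. Taking $s=t$ in \eqref{LEEa} and expanding $V_{\mc R}(x;t-,t+)$ via Proposition~\ref{propertiesvariationtime}\ref{var:b}--\ref{var:c} as $\mc R(t,x(t)-x^-(t))+\mc R(t,x^+(t)-x(t))$ yields
\begin{equation*}
\mc R(t,x(t)-x^-(t))+\mc R(t,x^+(t)-x(t))\le\mc E(t,x^-(t))-\mc E(t,x^+(t))=0.
\end{equation*}
Both $\mc R$-terms are nonnegative and must therefore vanish, and the coercivity in \ref{hyp:R2} then forces $x(t)=x^-(t)=x^+(t)$.

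Finally, under the additional assumptions that $x_0=x(0)$ satisfies \eqref{eq:adm_quasistat} and $\eps x^\eps_1\to 0$, estimate \eqref{LEEa} extends to $s=0$. Evaluating at $s=t=0$, using the constant extension convention to read $x^-(0)=x(0)=x_0$, one obtains
\begin{equation*}
\mc E(0,x^+(0))+\mc R(0,x^+(0)-x_0)\le\mc E(0,x_0).
\end{equation*}
Admissibility \eqref{eq:adm_quasistat} combined with convexity \ref{hyp:E2} supplies the reverse inequality, so equality holds. Applying the improved stability \eqref{improvedgs} at $t=0$ then forces $\piz(x_0)=\piz(x^+(0))$, hence $\mc E(0,x^+(0))=\mc E(0,x_0)$, $\mc R(0,x^+(0)-x_0)=0$, and \ref{prop:regR2} yields $x^+(0)=x_0$. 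Thus $x$ is continuous also at $t=0$ and coincides with the energetic solution starting from $x_0$. The main obstacle throughout is the delicate bookkeeping of one-sided $\mc R$-variations at jump points: the sandwich between (LEE) and (UEE) is sharp precisely because Proposition~\ref{propertiesvariationtime}\ref{var:d} matches exactly the gap they leave open.
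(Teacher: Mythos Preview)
Your proof is correct. Step~1 (the energy balance for $x^+$ via the sandwich between \eqref{LEEb} and \eqref{UEE+}) is essentially identical to the paper's argument. In Steps~2 and~3 you take a genuinely different route. For the continuity of $x$ on $(0,T]$, the paper sandwiches $V_{\mc R}(x;t-,t+)$ against $V_{\mc R}(x^+;t-,t)$ by combining \eqref{LEEa} with the upper estimate \eqref{UEE} of Proposition~\ref{propUEE}, and then invokes the continuity of the $\mc R$-variation of $x^+$ obtained from Proposition~\ref{regularity}; you instead first deduce $x^-(t)=x^+(t)$ directly from the continuity of $x^+$ (approximating from the left through continuity points of $x$), and then a single application of \eqref{LEEa} at $s=t$ suffices. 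For the behaviour at $t=0$, the paper again sandwiches \eqref{LEEa} (at $s=0$) against \eqref{UEE0} to recover the full energy balance for $x$, whereas you evaluate \eqref{LEEa} at $s=t=0$ and combine the resulting equality with the improved stability \eqref{improvedgs} to force $x^+(0)=x_0$ directly. The net effect is that your argument bypasses Proposition~\ref{propUEE} entirely (using only Lemma~\ref{lemmaUEE}), at the price of an explicit appeal to uniform convexity via \eqref{improvedgs} in the final step; the paper's route is more symmetric in its use of lower and upper estimates but carries slightly more bookkeeping of one-sided variations.
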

\begin{proof}
	We first prove that the right limit $x^+$ is an energetic solution for \eqref{quasprob} with initial position $x^+(0)$. We only need to prove the weak energy balance \ref{WEB}, since we already know $x^+$ is globally stable, see \ref{GSplus}. With this aim we first fix $t\in [0,T]$ and by combining \eqref{LEEb} and \eqref{UEE+} we get:
	\begin{align*}
	\mc E(t,x^+(t))+V_{\mc R}(x;0+,t+)&\le \mc E(0,x^+(0))+\int_{0}^{t}\frac{\partial}{\partial t}\mc E(\tau,x(\tau))\d\tau\le 	\mc E(t,x^+(t))+V_{\mc R}(x^+;0,t)\\
	&\le 	\mc E(t,x^+(t))+V_{\mc R}(x^+;0,t+).
	\end{align*}
	By means of \ref{var:d} in Proposition~\ref{propertiesvariationtime} we hence deduce that $V_{\mc R}(x;0+,t+)=V_{\mc R}(x^+;0,t+)=V_{\mc R}(x^+;0,t)$ and also the validity of \ref{WEB}:
	\begin{equation*}
	\mc E(t,x^+(t))+V_{\mc R}(x^+;0,t)= \mc E(0,x^+(0))+\int_{0}^{t}\frac{\partial}{\partial t}\mc E(\tau,x(\tau))\d\tau,\quad\text{ for every } t\in[0,T].
	\end{equation*}
	Thus $x^+$ is an energetic solution starting from $x^+(0)$ and in particular, by means of Proposition~\ref{regularity}, it is continuous in $[0,T]$ with continuous $\mc R$-variation $V_{\mc R}(x^+;0,\cdot)$.\par
	We now show that $x(t)=x^+(t)$ for every $t\in (0,T]$. By means of \eqref{LEEa} and \eqref{UEE} and reasoning as before we get:
	\begin{equation*}
	V_{\mc R}(x;t-,t+)=V_{\mc R}(x^+;t-,t),\quad\text{ for every }t\in (0,T].
	\end{equation*}
	Since $x^+$ has continuous $\mc R$-variation, we deduce that $V_{\mc R}(x;t-,t+)=V_{\mc R}(x^+;t-,t)=0$ if $t\in(0,T]$; this implies that the $\mc R$-variation of $x$ is continuous in $(0,T]$, and thus in particular $x$ itself is continuous in $(0,T]$ (see \ref{var:c} in Proposition~\ref{propertiesvariationtime}). This means in particular that $x(t)=x^+(t)$ for every $t\in(0,T]$.\par 
	If in addition $x_0$ satisfies \eqref{eq:adm_quasistat} and $\lim\limits_{\eps\to 0}\eps x^\eps_1=0$, then we can use \eqref{LEEa} in $s=0$ and \eqref{UEE0}; since we now know that both $x$ and $V_\mc R(x;0,\cdot)$ are continuous in $(0,T]$, arguing as before we obtain:
	\begin{equation*}
	\mc E(t,x(t))+V_{\mc R}(x;0,t)= \mc E(0,x_0)+\int_{0}^{t}\frac{\partial}{\partial t}\mc E(\tau,x(\tau))\d\tau,\quad\text{ for every }t\in(0,T].
	\end{equation*}
	Since the above equality is trivially satisfied in $t=0$, we deduce that $x$ satisfies \ref{WEB}; since \eqref{eq:adm_quasistat} holds, from \ref{GSminus} we also deduce that $x$ satisfies \ref{GS}, and thus it is an energetic solution for \eqref{quasprob} with initial position $x_0$. Thus we conclude.
\end{proof}
 We conclude this section by stating the main theorem of the paper, which gathers and summarises what we have proved up to now about the convergence of dynamic solutions of problem \eqref{dynprob} to quasistatic solutions of \eqref{quasprob} when inertia vanishes.
\begin{thm}\label{finalthm}
	Let $\M,\V$ be as in Section~\ref{sec:Setting}, and assume that $\mc R$ satisfies \ref{hyp:RK}, and that $\mc E(t,x)=\mc E_\mathrm{sh}(t,\piz (x))$ satisfies \ref{hyp:E1}--\ref{hyp:E6} or \ref{hyp:QE}. For every $\eps>0$, let $\xeps$ be a differential solution of the dynamic problem \eqref{dynprob} related to the initial position $x^\eps_0\in X$ and the initial velocity $ x^\eps_1\in K$, and assume that $x^\eps_0$ and $\eps x^\eps_1$ are uniformly bounded. Then there exist a subsequence $\epsj\searrow 0$ and a function $x\in BV_{\mc R}([0,T];X)\cap \CC^0((0,T];X)$ such that its right limit $x^+$ is an energetic solution for \eqref{quasprob} in the sense of Definition~\ref{defenergetic} with initial position $x^+(0)$ and:
	\begin{enumerate}[label=\textup{(\alph*')}]
		\item \label{thm:a}$\lim\limits_{j\to +\infty}\xepsj(t)=x(t)$ for every $t\in[0,T]$, and the convergence is uniform in any compact interval contained in $(0,T]$;
		\item\label{thm:b} $\displaystyle\lim\limits_{j\to +\infty}\int_{s}^{t}\mc R(\tau,\xepsjd(\tau))\d\tau=V_\mc R(x;s,t)$ for every $0< s\le t\le T$, and the convergence is uniform in $[s,T]$;
		\item\label{thm:c} $\lim\limits_{j\to +\infty}\epsj |\xepsjd(t)|_{\mathbb{M}}=0$ for every $t\in(0,T]$, and the convergence is uniform in any compact interval contained in $(0,T]$;
		\item \label{thm:d}$\displaystyle\lim\limits_{j\to +\infty}\epsj\int_{s}^{T}|\xepsjd(\tau)|^2_\mathbb{V}\d\tau=0$ for every $0< s\le T$.
	\end{enumerate}
If in addition $x_0:=x(0)$ satisfies \eqref{eq:adm_quasistat}, namely $\mc E(0,x_0)\le\mc E(0,v)+\mc R(0,v-x_0)$ for every $v\in X$, and $\lim\limits_{\eps\to 0}\eps x^\eps_1=0$, then the limit function $x$ is continuous in the whole $[0,T]$, and it is an energetic solution of \eqref{quasprob} with initial position $x_0$; moreover the convergence in \ref{thm:a} and \ref{thm:c} is uniform in the whole $[0,T]$, while \ref{thm:b} and \ref{thm:d} hold true also in $s=0$.\par
Finally, if also \ref{hyp:R5} holds or  if $\mc R$ does not depend on time, then $x$ is actually $\mc R$-absolutely continuous in $[0,T]$, and thus a differential solution of \eqref{quasprob}.
\end{thm}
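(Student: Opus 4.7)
The strategy is to assemble the results of the preceding sections and then perform one new energy-balance manipulation. First I would invoke Theorem~\ref{convsubseq} to extract a subsequence $\epsj\searrow 0$ and a limit $x\in BV_{\RR}([0,T];X)$ such that $\xepsj(t)\to x(t)$ pointwise, $V_\RR(x;s,t)\le\liminf_j\int_s^t\RR(\tau,\xepsjd)\d\tau$, and $\epsj|\xepsjd(t)|_\M\to 0$ for every $t\in(0,T]\setminus J_x$. Theorem~\ref{almostfinalthm} then yields continuity of $x$ on $(0,T]$ (so $x=x^+$ there) and the fact that $x^+$ is an energetic solution of \eqref{quasprob} starting at $x^+(0)$; if moreover $x_0$ satisfies \eqref{eq:adm_quasistat} and $\eps x^\eps_1\to 0$, the same theorem upgrades continuity and the energetic property to the whole $[0,T]$ with initial position $x_0$.

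The key new step is to promote the $\liminf$ inequality to the equality (b') and simultaneously obtain (d'). Fix $0<s\le t\le T$ chosen from the (dense) full-measure set on which $x$ is continuous and $\epsj|\xepsjd(s)|_\M,\,\epsj|\xepsjd(t)|_\M\to 0$. Rearranging the dynamic balance \ref{EBeps} on $[s,t]$ gives
\begin{equation*}
\int_s^t\!\RR(\tau,\xepsjd)\d\tau+\epsj\!\int_s^t|\xepsjd|^2_\V\d\tau=\frac{\epsj^2}{2}\bigl(|\xepsjd(s)|^2_\M-|\xepsjd(t)|^2_\M\bigr)+\EE(s,\xepsj(s))-\EE(t,\xepsj(t))+\int_s^t\frac{\partial}{\partial t}\EE(\tau,\xepsj(\tau))\d\tau.
\end{equation*}
Letting $j\to\infty$ the right-hand side tends to $\EE(s,x(s))-\EE(t,x(t))+\int_s^t\partial_t\EE(\tau,x(\tau))\d\tau$ thanks to continuity of $\EE$, dominated convergence via \ref{hyp:E4}–\ref{hyp:E5}, and the choice of $s,t$; by \ref{WEB} for $x^+$ this limit coincides with $V_\RR(x;s,t)$. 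Since both terms on the left are nonnegative and $\liminf_j\int_s^t\RR(\tau,\xepsjd)\d\tau\ge V_\RR(x;s,t)$ by Theorem~\ref{convsubseq}, both summands must converge to their sharp limits, yielding (b') and (d') on this dense set. Extension to every pair $0<s\le t\le T$ follows from the continuity of $V_\RR(x;s,\cdot)$ (granted by \ref{WEB} and continuity of $x$ and $\EE$) together with the monotonicity and additivity recorded in Proposition~\ref{propertiesvariationtime}.

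Uniform convergence in (a') and (b') on any $[s,T]\subset(0,T]$ is then handed by Lemma~\ref{uniformconv}: the maps $V_\RR(\xepsj;s,\cdot)=\int_s^\cdot\RR(\tau,\xepsjd)\d\tau$ are absolutely continuous by Proposition~\ref{abscontvar}, $V_\RR(x;s,\cdot)$ is continuous, and the previous step provides pointwise convergence of both $\xepsj$ and the variations. Statement (c') on any compact $[a,b]\subset(0,T]$ is then deduced from \ref{EBeps}: after selecting a good $s\in(0,a)$ and solving for $\frac{\epsj^2}{2}|\xepsjd(t)|^2_\M$, every remaining term on the right converges uniformly in $t\in[a,b]$ (the energy by uniform convergence of $\xepsj$ and continuity of $\EE$, the $\RR$- and $\V$-dissipation integrals by the preceding paragraph, and the power by dominated convergence).

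Under the additional hypotheses $x_0$ admissible and $\eps x^\eps_1\to 0$, the term $\tfrac{\epsj^2}{2}|x^\epsj_1|^2_\M$ vanishes so the same manipulations extend to $s=0$, producing the uniform convergences on $[0,T]$. Finally, if \ref{hyp:R5} holds or $\RR$ is autonomous, Proposition~\ref{regularity} upgrades $x$ from continuous to $\RR$-absolutely continuous, and Proposition~\ref{propdiffenerg} then identifies it as a differential solution of \eqref{quasprob}; uniqueness for the quasistatic problem, when available via Lemmata~\ref{lemma:uniqquas}–\ref{lemma:uniqquas2}, promotes subsequential convergence to convergence of the whole family $\xeps$. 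The principal technical obstacle is the argument in the second paragraph: one must tie the dynamic balance \ref{EBeps} to the quasistatic \ref{WEB} through the sign of the two dissipation terms in order to convert the lower-semicontinuity inequality into an equality on a dense set, and then propagate that equality to all times exploiting the continuity properties of the $\RR$-variation.
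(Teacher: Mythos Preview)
Your proof is correct and follows essentially the same strategy as the paper's: invoke Theorems~\ref{convsubseq} and~\ref{almostfinalthm}, then compare the dynamic balance \ref{EBeps} with the quasistatic \ref{WEB} to upgrade the $\liminf$ to a limit and obtain (b') and (d'), and finally use Lemma~\ref{uniformconv} and a rearrangement of the energy identity for the uniform statements. The only difference is that your dense-set-then-extend step is unnecessary: since Theorem~\ref{almostfinalthm} already gives continuity of $x$ on $(0,T]$, the jump set $J_x$ is contained in $\{0\}$, so Theorem~\ref{convsubseq}\ref{conv:c} applies directly at \emph{every} $s,t\in(0,T]$, and the identity can be written immediately for all such pairs without any extension argument.
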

\begin{rmk}[\textbf{Uniqueness}]
    If in particular one of the assumptions of Lemma~\ref{lemma:uniqquas} or Lemma \ref{lemma:uniqquas2} is satisfied, and if $\lim\limits_{\eps\to 0}\eps x_1^\eps=0$ and  $\lim\limits_{\eps\to 0}x_0^\eps=x_0$, for some $x_0$ satisfying \eqref{eq:adm_quasistat}, then there is no need to pass to a subsequence in the previous theorem. Indeed in this case the whole sequence $x^\eps$ converges in the sense of \ref{thm:a}--\ref{thm:d} (even in $t=0$) towards the unique differential solution $x$ to \eqref{quasprob}.
\end{rmk}
\begin{proof}[Proof of Theorem~\ref{finalthm}]
	Combining Theorems~\ref{convsubseq}, \ref{almostfinalthm} and exploiting Proposition~\ref{regularity} we get the existence of a subsequence $\epsj\searrow 0$ and of a function $x\in BV_{\mc R}([0,T];X)\cap \CC^0((0,T];X)$ with the property that the right limit $x^+$ is an energetic solution for \eqref{quasprob} with initial position $x^+(0)$ and for which the pointwise convergence in \ref{thm:a} and \ref{thm:c} hold. We now observe that by the energy balances (EB$^\epsj$) and \ref{WEB} for every $0<s\le t\le T$ we have:
	\begin{align}\label{estunif}
	&\quad\,\epsj\int_{s}^{t}|\xepsjd(\tau)|^2_\mathbb{V}\d\tau+\int_{s}^{t}\mc R(\tau,\xepsjd(\tau))\d\tau-V_\mc R(x;s,t)\nonumber\\
	&=\frac{\eps^2_j}{2}|\xepsjd(s)|^2_\mathbb{M}-\frac{\eps^2_j}{2}|\xepsjd(t)|^2_\mathbb{M}+\mc E(s,\xepsj(s))-\mc E(s,x(s))+\mc E(t,x(t))-\mc E(t,\xepsj(t))\\
	&\quad+\int_{s}^{t}\Big(\frac{\partial}{\partial t}\mc E(\tau,\xepsj(\tau))-\frac{\partial}{\partial t}\mc E(\tau,x(\tau))\Big)\d\tau.\nonumber
	\end{align}
	By means of the pointwise convergence in \ref{thm:a} and \ref{thm:c} and recalling \ref{hyp:E5} we deduce that the right-hand side of the above inequality vanishes as $j\to+\infty$. Thus the pointwise convergence in \ref{thm:b} and \ref{thm:d} easily follows, since by \ref{conv:b} in Theorem~\ref{convsubseq} we already know that \begin{equation*}
	    \displaystyle \liminf\limits_{j\to +\infty}\left(\int_{s}^{t}\mc R(\tau,\xepsjd(\tau))\d\tau-V_\mc R(x;s,t)\right)\ge 0.
	\end{equation*}
	By means of Lemma~\ref{uniformconv} we now deduce that the convergence in \ref{thm:a} is uniform in any compact interval contained in $(0,T]$, while the uniform convergence in \ref{thm:b} is due to the standard result that a sequence of nondecreasing and continuous scalar functions pointwise converging to a continuous function on a compact interval actually converges uniformly. The uniform convergence in \ref{thm:c} now follows by rearranging equality \eqref{estunif} and by exploiting \ref{hyp:E3}, \ref{hyp:E5} and the just obtained uniform convergence in \ref{thm:a}, \ref{thm:b} and \ref{thm:d}.\par 
	If in addition $x_0$ satisfy \eqref{eq:adm_quasistat} and $\lim\limits_{\eps\to 0}\eps x^\eps_1=0$, we know by Proposition~\ref{regularity} and Theorem~\ref{almostfinalthm} that $x$ is continuous in $[0,T]$ and it is an energetic solution with initial position $x_0$. Arguing as before we obtain the uniform convergence in $[0,T]$ for \ref{thm:a} and \ref{thm:c} and the validity of \ref{thm:b} and \ref{thm:d} also in $s=0$.\par 
	To conclude, if \ref{hyp:R5} holds or  if $\mc R$ does not depend on time, always by means of Proposition~\ref{regularity} we deduce that $x$ is $\mc R$-absolutely continuous in $[0,T]$.
\end{proof}

We want to point out that our result is sharp, in the sense that, without additional assumptions, no better kind of convergence (for instance in $W^{1,1}$) can be achieved in the quasistatic limit. It is enough to consider the simplest case $X=Z=\R$, with $\mathbb{M}=\Id$, $\mathbb{V}=0$, dissipation potential $\RR(t,v)=|v|$ and a quadratic elastic energy $\EE(t,x)=\frac 12 (x-t-1)^2$. Indeed it is easy to verify that in this setting the unique differential solution of the dynamic problem \eqref{dynprob}, with initial position $x_0^\eps=0$ and initial velocity $x_1^\eps=2$, is the function
\begin{equation*}
    \xeps(t)=t+\eps\sin\left(\frac t\eps\right),
\end{equation*}
 which of course converges as $\eps\to 0^+$ towards $x(t)=t$, namely the unique differential solution of the quasistatic problem \eqref{quasprob} with initial position $x_0=0$, in the sense of previous theorem.\par 
 However $x^\eps$ does not converge to $x$ in $W^{1,1}(0,T)$, indeed
 \begin{equation*}
     \int_{0}^{T}\abs{\xepsd(\tau)-\dot{x}(\tau)}\d\tau=\int_{0}^{T}\abs{\cos\left(\frac \tau\eps\right)}\d\tau,
 \end{equation*}
 which does not vanish as $\eps\to 0^+$.
	\bigskip

	

	\section{Applications and examples} \label{sec:models}
	
	In this last section we illustrate several examples which can be described by our abstract formulation; in particular they explain and motivate our framework. 	Since the applications we present here are all set in $X=\R^N$, endowed with the euclidean norm, for simplicity we will always identify canonically the dual space $X^*$ with $\R^N$, so that the dual coupling $\scal{\cdot}{\cdot}$ coincides with the scalar product.
	
	\begin{figure}
	\begin{center}
\begin{tikzpicture}[line cap=round,line join=round,>=triangle 45,x=1.0cm,y=1.0cm, line width=0.6pt, scale=1]
\clip(-1.,1.5) rectangle (9.,5.5);
\draw (8.,2.)-- (0.,2.);
\draw[dashed] (0.,3.)-- (8.,3.);
\draw[decoration={aspect=0.45, segment length=2mm, amplitude=2mm,coil},decorate] (1.5,3.)-- (7.,3.) ;
\draw [fill=black] (1.7,3.)--(1.7,3.1)--(1.3,3.1)--   (1.3,3.)--(1.5,2)--(1.7,3.);
\draw [fill=black] (1.3,3.1) -- (1.7,3.1) arc(0:180:0.2) --cycle;
\draw [line width=1pt, fill=white] (7.,3.) circle (3pt);
\fill [pattern = north east lines] (0.,2) rectangle (8.,1.6);
\draw (1.5,3.6) node {$x(t)$};
\draw (7,3.6) node {$p(t)$};
\end{tikzpicture}
	\end{center}
\caption{A mechanical model of the scalar play operator, discussed in Subsection~\ref{subsec:play}.}
\label{fig:play}
\end{figure}
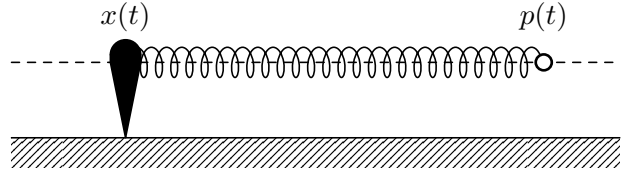

\subsection{The minimal example: the play operator} \label{subsec:play}
To gently introduce the Reader to our examples, we begin by presenting a very simple model,  illustrated in Figure \ref{fig:play}. We have a mass $m>0$ with position $x(t)$ on a line, and subject to (isotropic) dry friction. The mass is connected to a (linear) spring, whose other end is moved according to the function $p(t)\in W^{1,1}(0,T)$. Thus the dynamic evolution of the system is described by the inclusion \eqref{dynprob}, where:
\begin{align*}
	X=Z=K=\R, && \RR(t,v)=\RR(v)=\alpha\abs{v}, && \EE(t,x)=\EE_\mathrm{sh}(t,x)=\frac k2(x-p(t)+L^\mathrm{rest})^2,
\end{align*}
and $\piz$ is the identity. Notice that \ref{hyp:QE} holds. Clearly $\M=m>0$, while we may assume either $\mathbb{V}=0$, or add an additional viscous resistance to $\dot x$, so that the resulting friction force-velocity law for the mass is of Bingham type.

The relevance of this model is due to the fact that its quasistatic evolution corresponds to the \emph{(scalar) play operator} \cite{Krejbook}; indeed a straightforward computation shows that \eqref{quasprob} in this case reads as
\begin{equation}\label{ex1:quas}
	\begin{cases}
	    p(t)-L^\mathrm{rest}-x(t)\in \frac{\alpha}{k}\,\partial \abs{\dot x(t)},\\
	    x(0)=x_0,
	\end{cases}
\end{equation}
and hence, setting $u(t)=p(t)-L^\mathrm{rest}$, we notice that \eqref{ex1:quas} is equivalent to
\begin{equation*}
\begin{cases}
 \abs{u(t)-x(t)}\leq \frac{\alpha}{k},\\
(u(t)-x(t)-v)\,\dot x(t)\geq 0, & \text{for every $v\in \left[-\frac{\alpha}{k},\frac{\alpha}{k}\right]$},\\
x(0)=x_0.
\end{cases}
\end{equation*}
More advanced models may be built by considering analogously a mass on a plane (or abstractly in an $N$-dimensional space), or considering nonautonomous friction coefficients. Such quasistatic systems may be advantageously expressed as a sweeping process: we comment the meaning of the dynamic approximation in such formulation in Subsection~\ref{subsec:sp}.

\subsection{Soft crawlers} \label{subsec:crawler}

We now illustrate minutely how the family of models represented in Figure~\ref{fig:crawler4} and described in Section~\ref{sec:intro} fits in our mathematical framework. 
Their quasistatic version has been extensively discussed in \cite{Gid18}, to which we refer for more details. We also mention \cite{BSZZB17}, where similar models have been studied in the dynamic case.

We are considering a model with $N\geq 2$ blocks on a line, with adjacent blocks joined by an actuated soft link. We describe with $x_i$ the position of the $i$-th block. The elastic energy of the system will not depend directly on any of the positions of the block, but only on the distances $x_i-x_{i-1}$ between two consecutive blocks. Hence we set
\begin{align*}
X=\R^N, &&Z=\R^{N-1}, && \piz(x_1,\dots,x_N)=(x_2-x_1,\dots,x_N-x_{N-1}).
\end{align*}
We now discuss separately each of the elements of the dynamics.
\subsubsection*{Mass distribution} Denoting with $m_i>0$ the mass of the $i$-th block, the linear operator $\M$ is
\begin{equation*}
	\M=\diag(m_1,\dots,m_N).
\end{equation*}

\subsubsection*{Viscous friction}
There are two main situation in which we may consider viscous friction. The first one is 
to assume an additional viscous friction resistance when the blocks slide, in addition to dry friction we discuss below. Such forces are described by a diagonal matrix
\begin{equation*}
\mathbb{V}_\mathrm{ext}=\diag(\nu^\mathrm{ext}_1,\dots,\nu^\mathrm{ext}_N),
\end{equation*}
for some nonnegative coefficients $\nu^\mathrm{ext}_i\geq0$.  This also means that the total friction force acting on each block is of Bingham type, and may be justified by lubrication with a non-Newtonian fluid \cite{DeSGNT}.
 
The second possible way to introduce viscosity in the model is to assume a viscous resistance to deformation in the links. This is represented by the matrix 
\begin{equation*}
\mathbb V_\mathrm{link}=\begin{pmatrix} 
	\nu^\mathrm{link}_1  & -\nu^\mathrm{link}_1 & 0 & \cdots & 0 & 0\\
	-\nu^\mathrm{link}_1 &  \nu^\mathrm{link}_1+\nu^\mathrm{link}_2 & -\nu^\mathrm{link}_2& \cdots & 0 & 0\\
	0  & -\nu^\mathrm{link}_2 & \nu^\mathrm{link}_2+\nu^\mathrm{link}_3 & \cdots & 0 & 0\\
	\vdots & \vdots & \vdots & \ddots & \vdots & \vdots \\
	0 & 0 & 0 & \cdots & \nu^\mathrm{link}_{N-2}+\nu^\mathrm{link}_{N-1} & -\nu^\mathrm{link}_{N-1}\\
	0 & 0 & 0 & \cdots & -\nu^\mathrm{link}_{N-1} & \nu^\mathrm{link}_{N-1}\\
	\end{pmatrix}
\end{equation*}
for some nonnegative coefficients $\nu^\mathrm{link}_i\geq0$. 

Accounting for these two effects, a general viscosity matrix $\mathbb V$ takes the form $	\mathbb V=\mathbb V_\mathrm{link}+\mathbb V_\mathrm{ext}$.

\subsubsection*{Dry friction}
Since each block is affected independently by dry friction, the rate-independent dissipation potential can be represented as the sum
\begin{equation*}
\RR_\mathrm{finite}(t,v)=\sum_{i=1}^N\RR_i(t,v_i),
\end{equation*}
of $N$ dissipation potentials $\RR_i\colon[0,T]\times \R \to [0,+\infty)$, each of the form
\begin{equation}\label{eq:Rfinexample}
\RR_i(t,v)=\begin{cases}\mu_i^+(t)v, &\text{if $v\geq 0$},\\
\mu_i^-(t)v, &\text{if $v\leq 0$},
\end{cases}
\end{equation}
where the functions $\mu_i^\pm\colon[0,T]\to (0,+\infty)$ are strictly positive and absolutely continuous. Concretely, it means that each block has two dry friction coefficients, one for forward and one for backward movements, possibly varying in time. By compactness, we observe that in this framework the assumptions \ref{hyp:R1}--\ref{hyp:R3} are satisfied. As argued in \cite[Lemma 3.2]{Gid18}, the uniqueness condition \ref{hyp:star} of Lemma~\ref{lemma:uniqquas2} for the quasistatic problem is satisfied if, for every subset of indices $J\subseteq\{1,2,\dots,N\}$ we have
\begin{equation}
\sum_{i\in J}\mu_i^+(t)\neq \sum_{i\in J^C}\mu_i^-(t), \qquad\text{for a.e.~$t\in[0,T]$},
\end{equation}
where $J^C=\{1,2,\dots,N\}\setminus J$.

\subsubsection*{Velocity constraint}
Most of the models of crawlers usually fit in the $K=X$ case: indeed, the possibility to move the body both backwards and forwards is often appreciable in locomotion. In some situations, however, backward friction is extremely higher than forward friction, so that in fact no backwards movement occurs. For this reason, sometimes it is convenient to assume an infinite friction coefficient, namely a constraint on velocities. With our notation, this corresponds to set 
\begin{equation*}
K=\bigcap_{i=1}^N K^+_i, \qquad\qquad\text{where} \quad K_i^+=\{v\in \R^N \mid v_i\geq 0\}.
\end{equation*}
We observe that the set $K$ is a polyhedral cone, satisfying condition \ref{cond:R5char3} of Proposition~\ref{prop:R5char}. Notice also that, in this case, the coefficients $\mu_i^-$ in \eqref{eq:Rfinexample} can be freely chosen, for instance equal to a positive constant, since they are not involved in the dynamics.
More generally, we can introduce analogously the halfplanes $K_i^-=\{v\in \R^N \mid v_i\leq 0\}$, and set $K$ as the intersection of an arbitrary selection of sets $K_i^\pm$, although this would result often in something less pragmatical in terms of locomotion.
In particular, if $K\subseteq K_i^+\cap K^-_i$, the $i$-th block would be completely anchored on the surface.

\subsubsection*{Elastic energy}
The total elastic energy will be the sum of the elastic energies of each link. Hence we have
\begin{align*}
 \EE(t,x)=\sum_{i=1}^{N-1}\EE^\mathrm{link}_i(t,x_{i+1}-x_i), &&\text{or equivalently} && \EE_\mathrm{sh}(t,z)=\sum_{i=1}^{N-1}\EE^\mathrm{link}_i(t,z_i).
\end{align*}
In order for $\EE_\mathrm{sh}$ to satisfy any of the properties \ref{hyp:E1}--\ref{hyp:E7}, it is sufficient to ask each of the energies $\EE^\mathrm{link}_i\colon [0,T]\times \R\to [0,+\infty)$  of the links to satisfy the same condition being required on $\EE_\mathrm{sh}$. The quadratic case \ref{hyp:QE} corresponds to the case in which each of the link energies is quadratic, namely it follows Hooke's law
\begin{equation*}
\EE^\mathrm{link}_i(t,z_i)=\frac{k_i}{2}\,\left(z_i-\ell_i(t)\right)^2,
\end{equation*}
for a positive elastic constant $k_i>0$ and an absolutely continuous $\ell_i\colon[0,T]\to \R$. Notice that our results hold also for nonlinear models of elasticity.
For instance, the soft link may behave like a Duffing-type nonlinear spring, i.e.
\begin{equation*}
\EE^\mathrm{link}_i(t,z_i)=\frac{k_i}{2}\,\left(z_i-\ell_i(t)\right)^2+\frac{\beta_i}{4}\left(z_i-\ell_i(t)\right)^4,
\end{equation*} 
where the quartic term produces a hardening of the spring. In such a case the assumptions \ref{hyp:E1}--\ref{hyp:E5} and \ref{hyp:E7} are all satisfied. Pay attention that \ref{hyp:E6} holds only if $\ell_i$ are continuosly differentiable; however in this specific example one can argue as in Lemma~\ref{lemma: QE}, thus \ref{hyp:E6} is not really necessary.


\subsection{A rheological model} \label{subsec:rheo}
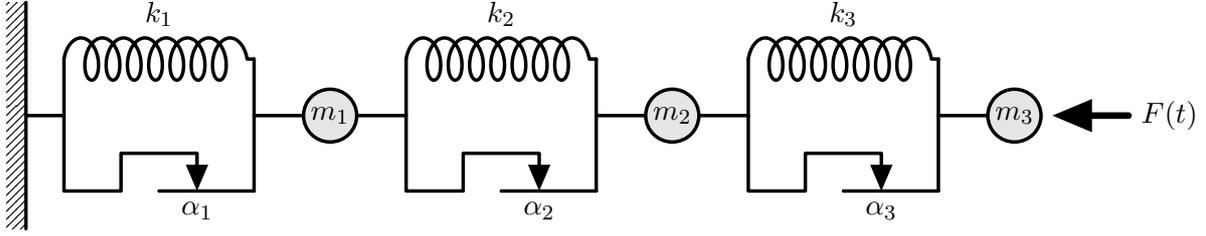
\begin{figure}
\begin{center}
    \begin{tikzpicture}[line cap=round,line join=round,>=triangle 45,x=5mm,y=5mm,line width=1.3pt]
\clip(-0.5,-4.) rectangle (31.,4.);
\draw [] (0,3)-- (0,-3);
\fill [pattern = north east lines] (-0.5,-3) rectangle (0.,3);
\draw [] (0,0)-- (1,0);
\draw [->] (1,1.5)-- (1,-2)-- (2.5,-2)-- (2.5,-1)-- (4.5,-1)-- (4.5,-2)node[anchor=north]{$\alpha_1$};
\draw [] (3.5,-2)-- (6,-2);
\draw [] (6,-2)-- (6,1.5);
\draw[decoration={aspect=0.55, segment length=2.8mm, amplitude=2.8mm,coil},decorate] (1.,1.5)-- (6.,1.5);
\draw [] (6,0)-- (7.3,0);
\draw [fill=gray!20] (8,0) circle (0.7)node[]{$m_1$};
\draw [] (8.7,0)-- (10,0);
\draw [->] (10,1.5)-- (10,-2)-- (11.5,-2)-- (11.5,-1)-- (13.5,-1)-- (13.5,-2)node[anchor=north]{$\alpha_2$};
\draw [] (12.5,-2)-- (15,-2);
\draw [] (15,1.5)-- (15,-2);
\draw[decoration={aspect=0.55, segment length=2.8mm, amplitude=2.8mm,coil},decorate] (10.,1.5)-- (15.,1.5);
\draw [] (15,0)-- (16.3,0);
\draw [fill=gray!20] (17,0) circle (0.7) node[]{$m_2$};
\draw [] (17.7,0)-- (19,0);
\draw [->] (19,1.5)-- (19,-2)-- (20.5,-2)-- (20.5,-1)-- (22.5,-1)-- (22.5,-2)node[anchor=north]{$\alpha_3$};
\draw [] (21.5,-2)-- (24,-2);
\draw [] (24,1.5)-- (24,-2);
\draw[decoration={aspect=0.55, segment length=2.8mm, amplitude=2.8mm,coil},decorate] (19.,1.5)-- (24.,1.5);
\draw [] (24,0)-- (25.3,0);
\draw [fill=gray!20] (26,0) circle (0.7)node[]{$m_3$};
\draw [line width=2.5pt,<-] (27,0)-- (29,0) node[anchor=west]{$F(t)$};
\draw (3.5,2.7) node{$k_1$};
\draw (12.5,2.7) node{$k_2$};
\draw (21.5,2.7) node{$k_3$};
\end{tikzpicture}
\end{center}
	\caption{A rheological model discussed in Subsection~\ref{subsec:rheo}, cf. also \cite[Sec.~2.2.6]{BSL00}}
	\label{fig:rheo}
\end{figure}

In order to illustrate a second example with multiple material points, we propose here, with our notation, a rheological model presented in \cite[Sec.~2.2.6]{BSL00}, and illustrated in Figure \ref{fig:rheo} for $N=3$.

The model consists on $N$ material points and $N$ $P_i$-elements connected in series. A $P_i$ element is composed of a St-Venant element with threshold $\alpha_i>0$ and a linear spring with constant $k_i>0$ connected in parallel. 
As before, we denote with $x_i$ the position on the line of the $i$-th material point, having mass $m_i>0$. The first $P_i$-element is connected to the first material point at one end, whereas the other end is fixed in the origin. Moreover, the $N$-th material point is subject to an external force $F(t)$, absolutely continuous in time. 
Hence
\begin{align*}
X=Z=K=\R^N, && \piz=\Id, && \M=\diag(m_1,\dots,m_N).
\end{align*}
The energy $\EE$ will be the sum of a potential energy $F(t)x_N$ used to describe the external force, plus the elastic energies of the $P_i$-elements, namely:
\begin{equation*}
\EE(t,x)=\EE_\mathrm{sh}(t,x)=F(t)x_N+\frac{k_1}{2}\, x_1^2+\sum_{i=2}^N \frac{k_i}{2}(x_i-x_{i-1})^2.
\end{equation*}
Similarly, the dissipation potential $\mc R$ will be the sum of the dissipation potentials associated to each St-Venant element, namely
\begin{equation*}
	\RR(t,v)=\RR(v)=\alpha_1\abs{v_1}+\sum_{i=2}^N\alpha_i\abs{v_i-v_{i-1}},
\end{equation*}
where we recall that in the first $P_i$-element one end is fixed. The assumptions \ref{hyp:E1}-\ref{hyp:E5}, \ref{hyp:E7}, \ref{hyp:RK} are easily verified, as also \ref{hyp:E6} if in addition $F$ is continuously differentiable. As before, however, \ref{hyp:E6} can however be avoided by arguing as in Lemma~\ref{lemma: QE}.

\subsection{A planar model}\label{subsec:plan}

Let us now consider the two-dimensional analogous of the simple model discussed in Subsection~\ref{subsec:play} and illustrated in Figure \ref{fig:play}. Setting for simplicity the rest length of the spring to zero, we have
\begin{align*}
X=Z=\R^2, && \piz=\Id, && \EE(t,x)=\EE_\mathrm{sh}(t,x)=\frac k2\abs{p(t)-x}^2,
\end{align*}
and \ref{hyp:QE} again holds. A point mass at $x$ can be therefore considered as a test particle (or more concretely, the point of a cantilever), probing the frictional properties of the surface. For simplicity, here we limit ourselves to autonomous dissipation. Until now we have presented only models lying on a line, so that the friction forces possibly acting on each mass are described by two parameters $\mu^+$ and $\mu^-$. If instead the test mass lies on a plane, dry friction is described by a function on the unit circle. 
Whereas the isotropic case $\RR(v)=\mu\abs{v}$ is simple, the nature of friction when the surface is anisotropic is a complicated matter.

Experimentally, friction of scaly surfaces, for instance snakes or sharks skins, is usually measured only in four orthogonal directions: forwards, backwards, and the in two transversal directions (usually showing a symmetric behaviour), cf.~e.g.~\cite{BWBG09, Man16}. We are not aware of experimental characterizations of the friction coefficients with respect to all the other intermediate directions. There is however a mathematical restriction on the scenarios that can be effectively described by the subdifferential of a function $\RR$. What we aim to show here is that, by introducing the constraint $K$, we allow to study a qualitatively different class of models, non included in the case $\RR<+\infty$.

If $X=K$, namely there is no velocity constraint, then the functional $\RR$ is continuous by convexity, and so the friction coefficient changes continuously with respect to the direction of the velocity. Moreover, we notice that convexity affects ulteriorly the structure of the friction coefficient: for instance, oscillations arbitrarily both ample and frequent of the friction coefficient as the direction varies are not allowed.

When hooks or scales introduce anisotropic friction on a plane, a scenario that can be expected, or at least desirable, is as follows:
\begin{itemize}
	\item friction is extremely high for all velocities with a nonzero backward component (i.e.~for all $v=(v_1,v_2)$ with $v_1<0$); 
	\item friction is low for all the remaining velocities ($v_1\geq 0$), in particular also for purely lateral velocities ($v_1=0$).
\end{itemize}
If $X=K$, such a case can be portrayed only approximatively, since a smooth transition is compulsory from low to high friction.
The scenario can instead be better described by setting
\begin{equation*}
	K=\{v\in\R^2 \mid v_1\geq 0\}.
\end{equation*}
Indeed, we emphasize that $\RR$ is in general lower semicontinuous, but not continuous, on the boundary of $K$. 

A situation even more radical is usually considered in the modelling of slithering locomotion, with \lq\lq snake in a tube\rq\rq\ models \cite{CicDeS}. While slithering on a plane, snakes experience a very large resistance to transversal sliding, compared to the longitudinal one, so that the whole body of the snake follows the same path covered by its head. 
Hence, according to the description in such models, a test particle on a snake skin would experience:
\begin{itemize}
	\item extremely high friction for all velocities with a nonzero lateral component ($v_2\neq0$); 
	\item high friction for a purely backward velocity ($v_1<0$ and $v_2=0$);
	\item low friction for a purely forward velocity ($v_1>0$ and $v_2=0$).
\end{itemize}
Again, the situation can be portrayed only approximatively by a finite dissipation functional $\RR$, while it is effectively described by introducing the constraint $K$ as
\begin{align*}
	K=\{v\in \R^2\mid v_2=0\}, &&\text{or} &&	K=\{v\in \R^2\mid v_1\geq 0, v_2=0\}.
\end{align*}
Notice that all the three  examples of cones $K$ in this subsection satisfy condition \ref{cond:R5char3} of Proposition~\ref{prop:R5char}.

\subsection{Interpretation as sweeping process} \label{subsec:sp}

In the 70s, Moreau noticed that several mechanical problems of the form \eqref{quasprob} with quadratic energy 
can be fruitfully transformed in the form
\begin{equation}\label{eq:sweep}
\dot y(t)\in -\NN_{C(t)}(y(t)),
\end{equation}
where $\NN_C(y)$ is the normal cone in $y$ with respect to the convex set $C$. Systems of this form are called \emph{sweeping processes}, and present the obvious advantage that the dynamics is expressed in normal form. Vanishing viscosity approximations have played a key role in the study of sweeping processes, not only for characterizing jumps \cite{KreMon}, but also for instance in the derivation of necessary conditions in optimal control  \cite{ArrCol,BK}. One may therefore wonder whether there is any strong connection between the second order sweeping process \eqref{eq:sweep2order} describing the dynamic problem and the first order sweeping process \eqref{eq:sweep} describing the quasistatic problem.
Let us thus recall, briefly, how  \eqref{eq:sweep} can be recovered by \eqref{quasprob}, in the simple case with energy
\begin{equation*}
\mc E(t,x)=\frac 12\langle x-\ell(t),x-\ell(t)\rangle,
\end{equation*}
where $X=Z=\R^N$.  In this case, equation \eqref{quasprob} reads
\begin{equation} \label{eq:sweepincl0}
	-x(t)+\ell(t)\in\partial_v\RR(t,\dot x(t)).
\end{equation}
Now we exploit the convexity of $\RR(t,\cdot)$, so that by the Legendre--Fenchel equivalence \eqref{eq:sweepincl0} is equivalent to
\begin{equation*} 
\dot x(t)\in\partial_v\RR^*(t,-x(t)+\ell(t))=\partial_v\chi_{C_0(t)}(-x(t)+\ell(t))
=\NN_{C_0(t)}(-x(t)+\ell(t)),
\end{equation*}
where $\RR^*(t,\cdot)$ denotes the Legendre transform of $\RR(t,\cdot)$. Since $\RR$ satisfies the properties \ref{prop:regR1} of Corollary~\ref{propertiesR}, then $\RR^*(t,\cdot)$ is exactly the characteristic function of the set $C_0(t):=\partial_v\RR(t,0)$. The change of coordinate $y(t):=-x(t)$ gives \eqref{eq:sweep} with $C(t):=C_0(t)-\ell(t)$.

Unfortunately, the same trick seems quite dispensable for the dynamical problem \eqref{dynprob}. Indeed, it is already in normal form, so that the Legendre transform actually hides the higher order derivative, resulting, for $\mathbb V=0$, in
\begin{equation}\label{eq:dynsweep}
\dot y(t)\in -\NN_{C(t)}(y(t)+\eps^2\M\ddot y(t)).
\end{equation}
Notice that an additional vanishing viscosity $\mathbb V$ can be incorporated with the convex function $\RR(t,\cdot)$ during the Legendre transform, resulting in a smooth approximation of the evolution problem \eqref{eq:dynsweep}.
Hence, the dynamic version \eqref{eq:dynsweep} of \eqref{eq:sweep} must not be confused with the second order sweeping process \eqref{eq:sweep2order}. Indeed, although both are equivalent formulations of the dynamic problem \eqref{dynprob}, in \eqref{eq:sweep2order} the sweeping set $K$ describes only a constraints on the velocities, whereas in \eqref{eq:dynsweep} the sweeping set $C$ accounts both for the rate-independent dissipation and for possible constraints on the velocities. Although the sweeping process therefore seems not to be the most favourable form to consider vanishing inertia approximations, we are confident that advancement in alternative formulations will still benefit the whole theory.

	\subsection{Example of $K$ not satisfying \ref{hyp:R5}.} \label{subsec:notR5}
	As we have seen, in all our mechanical examples the set $K$ satisfies \ref{hyp:R5}. Indeed, we expect this assumption to be usually true in concrete problems. In order to help the Reader understand why, however, it is not automatically satisfied, we present here a -- purely theoretical -- counterexample.  
Let us set $X=\R^3$, $Z=\R^2$, $\piz(x)=(x_2,x_3)$ and
\begin{equation*}
K:=\{(\lambda, \lambda a, \lambda b)\mid \lambda\geq 0,\, a^2+(b-1)^2\leq 1\}.
\end{equation*}
Let us pick $z=(\cos \theta, \sin \theta)$, with $\theta\in(0,\pi/2)$, so that $\abs{z}_Z=1$.
A simple computation shows that
\begin{align*}
	(\lambda,\cos \theta, \sin \theta)\in K \qquad &\text{if and only if}\quad \cos^2\theta + (\sin\theta-\lambda)^2\leq \lambda^2 \quad \text{and $\lambda>0$}\\
	&\text{if and only if}\quad \lambda\sin \theta\geq \frac{1}{2}.
\end{align*}
Hence \ref{hyp:R5} is violated by any sequence $\theta_i\to 0^+$. 

	
	\bigskip
	
	\noindent\textbf{Acknowledgements.}
	The authors are grateful to Professors Gianni Dal Maso and Giovanni Colombo for many helpful discussions on the topic. P.~G.~was partially supported by the GA\v{C}R--FWF grant 19-29646L and the M\v{S}MT\v{C}R grant 8J19AT013. F.~R. is member of the Gruppo Nazionale per l'Analisi Matematica, la Probabilit\'a e le loro Applicazioni (GNAMPA) of the Istituto Nazionale di Alta Matematica (INdAM).
	\bigskip
	
	\bibliographystyle{siam}

\begin{thebibliography}{10}
		
		{\frenchspacing
	
			\bibitem{AccCasDar04}
			{\sc D.~Accoto, P.~Castrataro and P.~Dario}, {\em Biomechanical analysis of oligochaeta crawling}, J. Theoret. Biol., 230 (2004) pp.~49--55.
			
			\bibitem{Adly16}
			{\sc S.~Adly and B.~K.~Le}, {\em Unbounded second-order state-dependent Moreau’s sweeping processes in Hilbert spaces}, Journal of Optimization Theory and Applications, 169 (2016), pp.~407--423.
			
			\bibitem{Ago}
			{\sc V.~Agostiniani}, {\em Second order approximations of quasistatic evolution problems in finite dimension},  Discrete Contin. Dyn. Syst., 32 (2012), pp.~1125--1167.
			
			\bibitem{AgoRos}
			{\sc V.~Agostiniani and  R.~Rossi}, {\em Singular vanishing-viscosity limits of gradient flows: the finite-dimensional case},  J. Differential Equations, 263 (2017), pp.~7815--7855.
			
			\bibitem{Ale} 
			{\sc R.~M.~Alexander},  {\em Principles of Animal Locomotion}, Princeton University Press, (2003).
			
			\bibitem{AmbrGiglSav}
			{\sc L.~Ambrosio, N.~Gigli and  G.~Savar\'e}, {\em Gradient Flows in Metric Spaces and in the Space of Probability Measures},  Birkh\"auser Verlag, Basel--Boston--Berlin, (2005).
			
			\bibitem{ArrCol}
			{\sc Ch.~E.~Arroud and G.~Colombo}, {\em A maximum principle for the controlled sweeping process}, Set-Valued Var. Anal., 26 (2018), pp.~607--629.
			
			\bibitem{AubCel}
			{\sc J.~P.~Aubin and A.~Cellina}, {\em Differential Inclusions: Set-Valued Maps and Viability Theory}, Springer, (2012).
	
	        \bibitem{BBLbook}
	        {\sc J.~Bastien, F.~Bernardin and C.-H.~Lamarque}, {\em Non Smooth Deterministic or Stochastic Discrete Dynamical Systems:  Applications to Models with Friction or Impact}, Wiley, (2013).
	        
	        \bibitem{BSL00}
	        {\sc J.~Bastien, M.~Schatzman, and C.-H.~Lamarque}, {\em Study of some rheological models with a finite number of degrees of freedom}, European Journal of Mechanics-A/Solids, 19 (2000), pp.~277--307.
	        
	        \bibitem{BSZZB17}
	        {\sc C.~Behn, F.~Schale, I.~Zeidis, K.~Zimmermann and N.~Bolotnik}, {\em Dynamics and motion control of a chain of particles on a rough surface}, Mechanical Systems and Signal Processing, 89 (2017), pp.~3--13.
	
	        \bibitem{BWBG09}
	        {\sc R.~A.~Berthé, G.~Westhoff, H.~Bleckmann and S.~N.~Gorb}, {Surface structure and frictional properties of the skin of the Amazon tree boa Corallus hortulanus (Squamata, Boidae)}, Journal of Comparative Physiology A, 195 (2009), pp.~311--318.
	
			\bibitem{Brez}
			{\sc H.~Brezis}, {\em Operateurs Maximaux Monotones et Semi-groupes de Contractions dans les Espaces de Hilbert},  North--Holland Publishing Company Amsterdam, (1973).
			
			\bibitem{BroTan}
			{\sc B.~Brogliato and A.~Tanwani}, {\em Dynamical systems coupled with monotone set-valued operators: formalisms, applications, well-posedness, and stability}, SIAM Review (2020), 62, pp.~3--129.
			
			\bibitem{BK}
			{\sc M.~Brokate and P.~Krej\v{c}\'{\i}}, {\em Optimal control of ODE systems involving a rate independent variational inequality}, Discrete and Continuous Dynamical Systems series B., 18 (2013), pp.~331--348.
			
			\bibitem{Cal}
			{\sc M.~Calisti, G.~Picardi and C.~Laschi}, {\em Fundamentals of soft robot locomotion}, J. Royal Soc. Interface, 14 (2017), 20170101.
			
			\bibitem{CicDeS}
			{\sc G.~Cicconofri and A.~DeSimone}, {\em A study of snake-like locomotion through the analysis of a flexible robot model}, Proceedings of the Royal Society A: Mathematical, Physical and Engineering Sciences, 471 (2015), 20150054.
			
			\bibitem{ColGid}
			{\sc G.~Colombo and P.~Gidoni}, {\em On the optimal control of rate-independent soft crawlers}, Preprint (2020), \url{arxiv.org/abs/2002.08395}.
			
			\bibitem{DMSca}
			{\sc G.~Dal Maso and R.~Scala}, {\em Quasistatic evolution in perfect plasticity as limit of dynamic processes},  J. Dynam. Differential Equations, 26 (2014), pp.~915--954.
			
			\bibitem{DeSGNT}
			{\sc A.~DeSimone, F.~Guarnieri, G.~Noselli and A.~Tatone}, {\em Crawlers in viscous environments: linear vs non-linear rheology}, Internal. J. Non-Linear Mech., 56 (2013) pp.~142--147.
			
			\bibitem{EfMielk}
			{\sc M.~A.~Efendiev and  A.~Mielke}, {\em On the rate-independent limit of systems with dry friction and small viscosity}, J. Convex Analysis, 13 (2006), pp.~151--167.
			
			\bibitem{FrancMielk}
			{\sc G.~Francfort and  A.~Mielke}, {\em Existence results for a class of rate-independent material models with nonconvex elastic energies},  J. Reine Angew. Math., 595 (2006), pp.~55--91.
			
			\bibitem{Gamus} 
			{\sc B.~Gamus, L.~Salem, A.~D.~Gat and Y.~Or}, {\em Understanding inchworm crawling for soft-robotics}, IEEE Robotics and Automation Letters, 5 (2020), pp.~1397--1404.
			
			\bibitem{Gid18}  
			{\sc P.~Gidoni}, {\em Rate-independent soft crawlers}, Quarterly Journal of Mechanics and Applied Mathematics, 71 (2018), pp.~369--409.
			
			\bibitem{GidDeS17}
			{\sc P.~Gidoni and A.~DeSimone}, {\em On the genesis of directional friction through bristle-like mediating elements}, ESAIM Control Optim. Calc. Var., 23 (2017), pp.~1023--1046.
			
			\bibitem{HeiMielk}
			{\sc M.~Heida and  A.~Mielke}, {\em Averaging of time-periodic dissipation potentials in rate-independent processes}, Discrete and Continuous Dynamical Systems. Series S, 10.6 (2017), pp.~1303--1327.
			
			\bibitem{HNSS09}
			{\sc D.~L.~Hu, J.~Nirody, T.~Scott and M.~J.~Shelley}, {\em The mechanics of slithering locomotion}, Proc. Natl. Acad. Sci. USA, 106 (2009), pp.~10081--10085.
				
			\bibitem{Jung} 
			{\sc K.~Jung, J.~C.~Koo, J.~Nam, Y.~K.~Lee and H.~R.~Choi}, {\em Artificial annelid robot driven by soft actuators}, Bioinspiration Biomim., 2 (2007), pp.~S42--S49.
			
			\bibitem{Kim}
			{\sc S.~Kim, C.~Laschi and B.~Trimmer}, {\em Soft robotics: a bioinspired evolution in robotics}, Trends Biotechnol., 31 (2013), pp.~287--294.
			
			\bibitem{Krejbook} 
			{\sc P. Krejcí}, {\em Hysteresis, Convexity and Dissipation in Hyperbolic Equations}, Gatt\"otoscho, (1996).
			
			\bibitem{KreMon} 
			{\sc P.~Krejcí and G.~A.~Monteiro}, {\em What is the Best Viscous Approximation to a Rate-Independent Process?}, Journal of Convex Analysis, 27 (2020).
			
			\bibitem{KunMon}
			{\sc M.~Kunze and M.~D.~P.~Monteiro Marques}, {\em An introduction to Moreau’s sweeping process}, Impacts in Mechanical Systems, 551 (1999), pp.~1--60.
			
			\bibitem{LazNar}
			{\sc G.~Lazzaroni and L.~Nardini}, {\em On the quasistatic limit of dynamic evolutions for a peeling test in dimension one}, J. Nonlinear Sci., 28 (2018), pp.~269--304.
			
			\bibitem{LRTT} 
			{\sc G.~Lazzaroni, R.~Rossi, M.~Thomas and R.~Toader}, {\em Rate-independent damage in thermoviscoelastic materials with inertia}, J. Dynam. Differential Equations, 30 (2018), pp.~1311--1364.
			
			\bibitem{Man16}
			{\sc P.~Manoonpong, D.~Petersen, A.~Kovalev, F.~Wörgötter, S.~N.~Gorb, M.~Spinner and L.~Heepe}, {\em Enhanced locomotion efficiency of a bio-inspired walking robot using contact surfaces with frictional anisotropy}, Scientific reports, 6 (2016), 39455.
			
			\bibitem{Manw} 
			{\sc T.~Manwell, B.~Guo, J.~Back and H.~Liu}, {\em Bioinspired setae for soft worm robot locomotion}, 2018 IEEE International Conference on Soft Robotics (Robosoft), (2018), pp.~54--59. 
			
			\bibitem{MarMon} 
			{\sc J.~A.~C.~Martins, M.~D.~P.~Monteiro Marques and A.~Petrov}, {\em On the stability of quasi-static paths for finite dimensional elastic-plastic systems with hardening},
			ZAMM Z. Angew. Math. Mech., 87 (2007), pp.~303–313.
			
			\bibitem{MarBriHu} 
			{\sc H.~Marvi, J.~Bridges and D.~L.~Hu}, {\em Snakes mimic earthworms: propulsion using rectilinear travelling waves}, Journal of the Royal Society Interface, 10 (2013), 20130188.
			
			\bibitem{Mar12}
			{\sc H.~Marvi, G.~Meyers, G.~Russell and D.~L.~Hu}, {\em A snake-inspired robot with active control of friction}, Proceedings of the ASME Dynamic Systems and Control Conference and BATH/ASME Symposium on Fluid Power and Motion Control (2012), pp.~443--450.
			
			\bibitem{Men06}
			{\sc A.~Menciassi, D.~Accoto, S.~Gorini and P.~Dario}, {\em Development of a biomimetic miniature robotic crawler}, Autonomous Robots, 21 (2006), pp.~155--163.
			
			\bibitem{Mielk}
			{\sc A.~Mielke}, {\em Evolution of rate-independent systems}, Evolutionary equations. Vol. II. Handb. Differ. Equ. Elsevier/North-Holland, Amsterdam (2005), pp. 461--559.
			
			\bibitem{Mielk18} 
			{\sc A.~Mielke}, {\em Three examples concerning the interaction of dry friction and oscillations}, Trends in Applications of Mathematics to Mechanics, Springer, Cham (2018), pp.~159--177.
			
			\bibitem{MielkPetr}
			{\sc  A.~Mielke, A.~Petrov and J.~A.~C.~Martins}, {\em Convergence of solutions of kinetic variational inequalities in the rate-independent quasistatic limit}, J. Math. Anal. Appl., 348 (2008), pp.~1012--1020.
			
			\bibitem{MielkRosSav09}
			{\sc A.~Mielke, R.~Rossi and G.~Savar\'e}, {\em Modeling solutions with jumps for rate-independent systems on metric spaces}, Discrete and Continuous Dynamical Systems, 25 (2009), pp.~585--615.
			
			\bibitem{MielkRosSav12}
			{\sc A.~Mielke, R.~Rossi and G.~Savar\'e}, {\em BV solutions and viscosity approximations of rate-independent systems}, ESAIM Control Optim. Calc. Var., 18 (2012), pp.~36--80.
			
			\bibitem{MielkRosSav16}
			{\sc A.~Mielke, R.~Rossi and G.~Savar\'e}, {\em Balanced viscosity (BV) solutions to infinite-dimensional rate-independent systems}, J.
			Eur. Math. Soc., 18 (2016), pp.~2107--2165.
			
			\bibitem{MielkRoubbook}
			{\sc A.~Mielke and T.~Roub\'i\v cek}, {\em Rate-independent Systems: Theory and Application},  Springer--Verlag New York, (2015).
			
			\bibitem{MielkTheil}
			{\sc  A.~Mielke and F.~Theil}, {\em On rate-independent hysteresis models}, NoDEA Nonlinear Differ. Equ. Appl., 11 (2004), pp.~151–189.
			
			\bibitem{Thom}
			{\sc  A.~Mielke and M.~Thomas}, {\em Damage of nonlinearly elastic materials at small strain – Existence and regularity results --}, Z. Angew. Math. Mech., 90 (2010), pp.~88–112.
			
			\bibitem{Nar15}
			{\sc L.~Nardini}, {\em A note on the convergence of singularly perturbed second order potential-type equations}, Journal of Dynamics and Differential Equations, 29 (2017), pp.~783--797.
			
			\bibitem{Quillin}
			{\sc K.~Quillin}, {\em Kinematic scaling of locomotion by hydrostatic animals: ontogeny of peristaltic crawling by the earthworm lumbricus terrestris}, J. Exp. Biol., 202 (1999), pp.~661--674.
			
			\bibitem{Raf18} 
			{\sc A.~Rafsanjani, Y.~Zhang, B.~Liu, S.~M.~Rubinstein and K.~Bertoldi}, {\em Kirigami skins make a simple soft actuator crawl}, Science Robotics, 3 (2018), eaar7555.
			
			\bibitem{Rivquas}
			{\sc F.~Riva}, {\em On the approximation of quasistatic evolutions for the debonding of a thin film via vanishing inertia and viscosity}, J. of Nonlinear Science, 30 (2020), pp.~903--951.
			
			\bibitem{Roub}
			{\sc T.~Roub{\'{\i}}{\v{c}}ek}, {\em Adhesive contact of visco-elastic bodies and defect measures arising by vanishing viscosity}, SIAM J. Math. Anal., 45 (2013), pp.~101--126.
			
			\bibitem{Roubvaninertia}
			{\sc T.~Roub{\'{\i}}{\v{c}}ek}, {\em Rate-independent processes in viscous solids at small strains}, Math. Methods Appl. Sci., 32 (2009), pp.~825--862.
			
			\bibitem{Sca}
			{\sc R.~Scala}, {\em Limit of viscous dynamic processes in delamination as the viscosity and inertia vanish},  ESAIM: Control Optim. Calc. Var., 23 (2017), pp.~593--625.
			
			\bibitem{ScilSol}
			{\sc G.~Scilla and F.~Solombrino}, {\em A variational approach to the quasistatic limit of viscous dynamic evolutions in finite dimension}, J. Differential Equations, 267 (2019), pp.~6216--6264.
			
			\bibitem{SelMan07}
			{\sc W.~I.~Sellers and P.~L.~Manning}, {\em Estimating dinosaur maximum running speeds using evolutionary robotics}, Proceedings of the Royal Society B: Biological Sciences, 274 (2007), pp.~2711--2716.
			
			\bibitem{Seok} 
			{\sc S.~Seok, C.~D.~Onal, K.~J.~Cho, R.~J.~Wood, D.~Rus and S.~Kim}, {\em Meshworm: a peristaltic soft robot with antagonistic nickel titanium coil actuators}, IEEE/ASME Trans. Mechatronics, 18 (2013), pp.~1485--1497.
			
			\bibitem{VauOMa05}
			{\sc C.~L.~Vaughan and M.~J.~O'Malley}, {\em Froude and the contribution of naval architecture to our understanding of bipedal locomotion}, Gait \& Posture, 21 (2005), pp.~350--362.
			
			\bibitem{Vikas}
			{\sc V.~Vikas, E.~Cohen, R.~Grassi, C.~S\"ozer and B.~Trimmer}, {\em Design and locomotion control of a soft robot using friction manipulation and motor-tendon actuation}, IEEE Trans. Robot, 32 (2016), pp.~949--959.
			
			\bibitem{WagLau} 
			{\sc G.~L.~Wagner and E.~Lauga}, {\em Crawling scallop: friction-based locomotion with one degree of freedom}, J. Theoret. Biol., 324 (2013), pp.~42--51.
			
			}
		
	\end{thebibliography}

	{\small
		\vspace{10pt} (Paolo Gidoni) Czech Academy of Science, Institute for Information Theory and Automation (UTIA), \textsc{Pod vodárenskou veží 4, CZ-182 08, Prague 8, Czech Republic}
		\par 
		\textit{e-mail address}: \textsf{gidoni@utia.cas.cz}
		\par
		\vspace{10pt} (Filippo Riva) SISSA -- International School for Advanced Studies, \textsc{Via Bonomea, 265, 34136, Trieste, Italy}
		\par 
		\textit{e-mail address}: \textsf{firiva@sissa.it}
		\par
	}

\end{document}